\newtheorem{theorem}{Theorem}
\newtheorem{lemma}[theorem]{Lemma} 
\newtheorem{corollary}[theorem]{Corollary}
\theoremstyle{definition}
\newtheorem{definition}{Definition}
 \newcommand\setreflabel[1]{\protected@edef\@currentlabel{#1}}
 \newcounter{claim}[theorem]
 \newcommand\claimlabelfmt[1]{(#1)}
 \newenvironment{claim}[1][]%
   {\removelastskip\ifx\newenvironment#1\newenvironment%
    \refstepcounter{claim}\def\@claim{{\arabic{theorem}.\arabic{claim}}}\else\def\@claim{#1}\fi%
    \setreflabel{\expandafter\claimlabelfmt{\@claim}}%
    \global\edef\@lastclaim{clm@\the\inputlineno}%
    \label{\@lastclaim}\par\vspace{1ex}%
    \begin{compactitem}[~{\@currentlabel}]\item\it}
   {\end{compactitem}\par\vspace{1ex}}
 \newenvironment{proofclaim}[1][]%
   {\par\edef\@thisclaim{\ifx\newenvironment#1\newenvironment%
    \@lastclaim\else#1\fi}\noindent\ignorespaces}
   {This proves~\expandafter\ref{\@thisclaim}.\par\vskip 1ex\relax}
   {\let\@oldclaim\@thisclaim}
   {\let\@thisclaim\@oldclaim}
 \newcommand\blfootnote{\xdef\@thefnmark{}\@footnotetext}
\renewcommand{\qed}{$\Box$}
\renewenvironment{proof}%
  {
  \noindent{\bf Proof.}\ }%
  {\hfill\qed\par\smallskip}
  {\setcounter{claim}{0}\noindent{\bf Proof of #1.}\ }%
  {\hfill\qed\par\bigskip}
\newcommand{\tp}{\mbox{\hspace*{0.1em}--\hspace*{0.1em}}}
\newcounter{fnote}
\newcommand\fnref[1]{$^{#1}$}
\newcommand\fntext[2][]{
\ifx\newenvironment#1\newenvironment%
\refstepcounter{fnote}$^{\arabic{fnote}}$\else$^{#1}$\fi~#2\par}
\newcommand\uline[1]{{\bf\textit{\color{NavyBlue}#1}}}
\renewcommand\bar[1]{\hspace*{0.05em}\overline{#1}\hspace*{0.05em}}
\newcommand{\centereq}[2][]{\par\noindent\relax\hfill{#2}\hfill{#1}\mbox{}\par}
\def\notdelta{\mbox{$\not\hspace*{-0.3em}\Delta$}}
\begin{document}

\title{Forbidden structure characterization of circular-arc graphs\\ and a
certifying recognition algorithm}

\author{
Mathew Francis\fnref{1}\thanks{Partially supported by the DST INSPIRE Faculty Award}~,
Pavol Hell\fnref{2}\thanks{Partially supported by a Discovery Grant from NSERC, and by the grant ERCCZ LL 1201}\,,
Juraj Stacho\fnref{3}\thanks{Partially supported by NSF grant DMS-1265803.}}
\date{\small
\fntext{Indian Statistical Institute (Chennai Centre), CIT Campus, Taramani, Chennai 600 113, India}
\fntext{School of Computing Science, Simon Fraser University, 8888 University Drive, Burnaby, Canada V5A 1S6}
\fntext{IEOR Department, Columbia University, 500 West 120th Street, New York, NY 10027, United States}\vspace{-5ex}}
\maketitle


\begin{abstract}
A circular-arc graph is the intersection graph of arcs of a circle. It is a
well-studied graph model with numerous natural applications.  A certifying
algorithm is an algorithm that outputs a certificate, along with its answer (be
it positive or negative), where the certificate can be used to easily justify
the given answer.  While the recognition of circular-arc graphs has been known
to be polynomial since the 1980s, no polynomial-time certifying recognition
algorithm is known to date, despite such algorithms being found for many
subclasses of circular-arc graphs.  This is largely due to the fact that a
forbidden structure characterization of circular-arc graphs is not known, even
though the problem has been intensely studied since the seminal work of Klee in
the 1960s.

In this contribution, we settle this problem. We present the first forbidden
structure characterization of circular-arc graphs. Our obstruction has the form
of mutually avoiding walks in the graph. It naturally extends a similar
obstruction that characterizes interval graphs.  As a consequence, we give the
first polynomial-time certifying algorithm for the recognition of circular-arc
graphs.\medskip

\noindent{\bf Keywords:}
circular-arc graph, forbidden subgraph characterization, certifying algorithm,
asteroidal triple, invertible pair, interval orientation, transitive
orientation, knotting graph
\end{abstract} 

\section{Introduction}

A graph is a \uline{circular-arc graph} if it can be represented as the
intersection graph of arcs of a circle. Finding a forbidden subgraph
characterization of circular-arc graphs is a challenging open problem studied
since the late 1960s \cite{klee69,trotmoor76,tuck69,tuck70,tuck74}. Many partial
results toward this goal have been proposed over the years, but a full answer
remains elusive, capturing the interest of many researchers
\cite{bang94,bon09,fhh99,klee69,lin06,trotmoor76,tuck69,tuck70,tuck74}.  A
\uline{certifying algorithm} is an algorithm that outputs a certificate, along
with its answer (be it positive or negative), where the certificate can be used
to easily justify the given answer. For instance, a certifying algorithm for
testing planarity of a graph $G$ either outputs a plane embedding of $G$
(certifying that the graph \uline{is} planar) or outputs a subgraph of $G$
isomorphic to a subdivision of $K_{3,3}$ or $K_5$ (certifying that $G$ \uline{is
not} planar).  Similar certifying algorithms exist for recognition of interval
graphs \cite{int-perm-cert}, proper interval graphs \cite{propint-cert},
permutation graphs \cite{int-perm-cert}, chordal graphs \cite{book},
comparability graphs \cite{gallai}, and others.

The situation is markedly different for the recognition of circular-arc graphs.
Initially thought to be possibly NP-hard \cite{booth}, the problem has been
known to admit polynomial algorithms since the 1980s \cite{tuck80}. The
complexity of these algorithms has subsequently been improved
\cite{spinrad-eschen,hsu} and most recently a number of linear-time algorithms
\cite{kaplan-nussbaum,mcc03} have been proposed for the problem. Unfortunately,
all these algorithms suffer from the same issue.  They only produce a correct
output assuming one exists. Namely for an input graph $G$, they sometimes
construct an (incorrect) circular-arc representation that does not represent
$G$.  The algorithms check the correctness of the representation only at the end
of their execution; if it is not correct, they guarantee that there is no
representation but do not give a reason.  This has been a generally accepted
phenomenon of the recognition of circular-arc graphs.  One reason for this
situation is our limited understanding of smallest graphs that fail to be
circular-arc graphs, the so-called \uline{minimal forbidden obstructions}. In
fact, no minimal forbidden obstruction characterization has been proposed
despite many partial successes \cite{bon09, block-jour, tuck70, tuck74}.  In
order for the algorithm to provide a \uline{certificate} of the negative answer,
it must be understood what consitutes such a certificate (other than the
execution trace of the recognition algorithm).  This problem has been of
interest, since the very early work of Klee in the 1960s and has been intensely
studied over many years since.

In this paper, we propose the first answer to this problem. We describe a
structural obstruction that can be found in a graph if and only if the graph is
not a circular-arc graph. Our obstruction takes the form of a pair of mutually
avoiding walks that also avoid a fixed third vertex of the graph. 

To motivate this obstruction, we review analogous results.  A graph is an
\uline{interval graph} if it is the intersection graph of intervals of the real
line. Interval graphs are a subclass of circular-arc graphs. In the 1960s
Lekkerkerker and Boland introduced an \uline{asteroidal triple} to be a triple
of vertices of the graph such that between any two vertices of the triple there
exists a path where the third vertex has no neighbour on this path.  In
\cite{lekbol} they proved that a graph is an interval graph if and only if it
contains no asteroidal triple and no induced 4-cycle or 5-cycle.  A similar
concept was introduced 40 years later by Feder et al. \cite{adjusted}, who
defined an \uline{invertible pair} and proved that the absence of an invertible
pair also characterizes interval graphs. (This characterization generalizes to
directed graphs.)  The definition of invertible pair is more technical, but also
corresponds to walks with certain adjacency pattern.  An invertible pair is a
pair of vertices $\{u, v\}$ with equal-length walks from $u$ to $v$ and from $v$
to $u$, such that at every step, the current vertex of the first walk is not
adjacent to the next vertex of the second walk, and the same holds for some
other equal-length walks, with the role of $u$ and $v$ interchanged. (Note that
the graphs we consider have loops, and the walks can use them.)

In this work, we propose an analogous obstruction for circular arc graphs.  We
say that two vertices \uline{overlap} if they are adjacent but their
neighbourhoods are incomparable.  A vertex and a walk \uline{avoid} each other
if the vertex either has no neighbours on the walk, or it has neighbours on the
walk but it overlaps them, and if it overlaps both endpoints of an edge of the
walk, then the endpoints do not overlap each other.  Intuitively, if a vertex
$x$ avoids a walk, it prevents the walk from ``jumping over'' $x$ in the
representation.  This is easy to see if $x$ has no neighbours on the walk, while
if $x$ overlaps an endpoint of an edge $yz$ of the walk, or the endpoints $y,z$
do not overlap each other, then $y,z$ cannot be placed in the representation so
that they are on opposite sides of $x$.

An \uline{anchored invertible pair} is a pair of vertices $\{u, v\}$ together
with a third vertex (the ``anchor'') and with equal-length walks from $u$ to $v$
and from $v$ to $u$, such that at every step the current vertex of the first
(second) walk avoids the next (previous) edge of the second (first) walk, and
such that both walks avoid the anchor.

In an earlier paper~\cite{block-jour}, we proposed a simpler obstruction for
circular-arc graphs, called a {\em blocking quadruple}, which is more similar to
the asteroidal triple. The absence of blocking quadruples characterizes
circular-arc graphs amongst chordal graphs of independence number at most
five~\cite{block-jour}, and for several other special graph classes, for example
many of those studied in~\cite{bon09}. It can be checked that the presence
of a blocking quadruple in a graph implies also the presence of an anchored
invertible pair in the circular completion of the graph.

We illustrate these obstructions in Figure \ref{fig:example} for the biclaw
graph shown in Figure \ref{fig:biclaw}.  (Once again, note that the graphs we
consider have loops, and the walks can use them.)

\begin{figure}
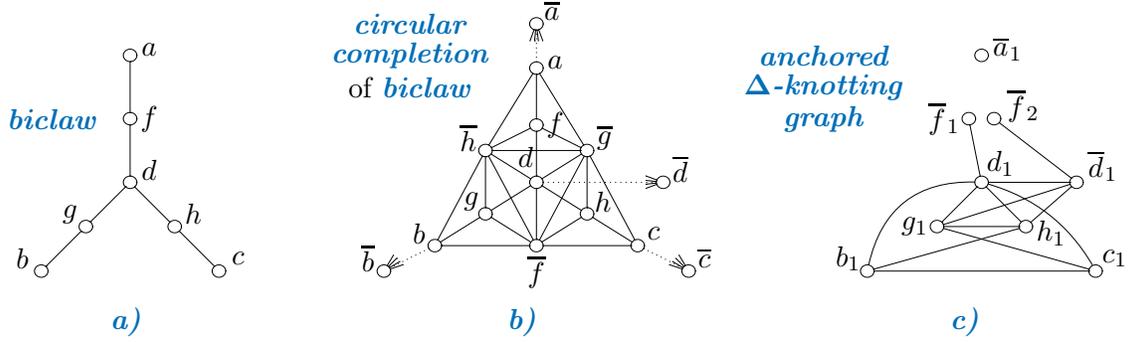

\centering
\begin{tabular}{c@{\qquad\quad}cc}
$\xy/r2pc/:
(0,0)*[o][F]{\phantom{s}}="x1";(0,1)*[o][F]{\phantom{s}}="x2";
(0,2)*[o][F]{\phantom{s}}="x3";(-0.7,-0.7)*[o][F]{\phantom{s}}="x4";
(0.7,-0.7)*[o][F]{\phantom{s}}="x5";(-1.4,-1.4)*[o][F]{\phantom{s}}="x6";
(1.4,-1.4)*[o][F]{\phantom{s}}="x7";{\ar@{-} "x1";"x2"};
{\ar@{-} "x2";"x3"};{\ar@{-} "x1";"x4"};{\ar@{-} "x1";"x5"};
{\ar@{-} "x4";"x6"};{\ar@{-} "x5";"x7"};
"x1"+(0.3,0.2)*{d};"x2"+(0.3,0)*{f};"x3"+(0.3,0.1)*{a};"x4"+(-0.25,0.2)*{g};
"x5"+(0.3,0.2)*{h};"x6"+(-0.3,0.2)*{b};"x7"+(0.3,0.2)*{c};
(-0.7,1)*[l]{\mbox{\uline{biclaw}}};
\endxy$
&
$\xy/r2pc/:
(0,0)*[o][F]{\phantom{s}}="x1";(0,0.9)*[o][F]{\phantom{s}}="x2";
(0,1.8)*[o][F]{\phantom{s}}="x3";(-0.8,-0.5)*[o][F]{\phantom{s}}="x4";
(0.8,-0.5)*[o][F]{\phantom{s}}="x5";(-1.6,-1)*[o][F]{\phantom{s}}="x6";
(1.6,-1)*[o][F]{\phantom{s}}="x7";(2,0)*[o][F]{\phantom{s}}="x1'";
(0,-1)*[o][F]{\phantom{s}}="x2'";(0,2.5)*[o][F]{\phantom{s}}="x3'";
(0.8,0.5)*[o][F]{\phantom{s}}="x4'";(-0.8,0.5)*[o][F]{\phantom{s}}="x5'";
(-2.4,-1.4)*[o][F]{\phantom{s}}="x6'";(2.4,-1.4)*[o][F]{\phantom{s}}="x7'";
{\ar@{-} "x1";"x2"};{\ar@{-} "x2";"x3"};{\ar@{-} "x1";"x4"};
{\ar@{-} "x1";"x5"};{\ar@{-} "x4";"x6"};{\ar@{-} "x5";"x7"};
"x1"+(-0.17,0.4)*{d};"x2"+(0.3,0)*{f};"x3"+(0.3,0.1)*{a};"x4"+(-0.22,0.2)*{g};
"x5"+(0.25,0.2)*{h};"x6"+(-0.25,0.2)*{b};"x7"+(0.25,0.2)*{c};
{\ar@{-} "x1'";"x1'"+(-0.27,-0.03)};{\ar@{-} "x1'";"x1'"+(-0.27,-0.1)};
{\ar@{-} "x1'";"x1'"+(-0.27,0.03)};{\ar@{-} "x1'";"x1'"+(-0.27,0.1)};
{\ar@{.} "x1'";"x1"};{\ar@{-} "x2'";"x1"};{\ar@{-} "x2'";"x4"};
{\ar@{-} "x2'";"x5"};{\ar@{-} "x2'";"x6"};{\ar@{-} "x2'";"x7"};
{\ar@{-} "x3'";"x3'"+(-0.03,-0.27)};{\ar@{-} "x3'";"x3'"+(-0.1,-0.27)};
{\ar@{-} "x3'";"x3'"+(0.03,-0.27)};{\ar@{-} "x3'";"x3'"+(0.1,-0.27)};
{\ar@{.} "x3'";"x3"};{\ar@{-} "x4'";"x1"};{\ar@{-} "x4'";"x2"};
{\ar@{-} "x4'";"x3"};{\ar@{-} "x4'";"x5"};{\ar@{-} "x4'";"x7"};
{\ar@{-} "x5'";"x1"};{\ar@{-} "x5'";"x2"};{\ar@{-} "x5'";"x3"};
{\ar@{-} "x5'";"x4"};{\ar@{-} "x5'";"x6"};{\ar@{-} "x6'";"x6'"+(0.2,0.2)};
{\ar@{-} "x6'";"x6'"+(0.25,0.15)};{\ar@{-} "x6'";"x6'"+(0.15,0.25)};
{\ar@{-} "x6'";"x6'"+(0.3,0.1)};{\ar@{.} "x6'";"x6"};
{\ar@{-} "x7'";"x7'"+(-0.2,0.2)};{\ar@{-} "x7'";"x7'"+(-0.25,0.15)};
{\ar@{-} "x7'";"x7'"+(-0.15,0.25)};{\ar@{-} "x7'";"x7'"+(-0.3,0.1)};
{\ar@{.} "x7'";"x7"};{\ar@{-} "x2'";"x4'"};{\ar@{-} "x2'";"x5'"};
{\ar@{-} "x4'";"x5'"};"x1'"+(0.27,0.2)*{\bar d};"x2'"+(0.0,-0.42)*{\bar f};
"x3'"+(0.25,0.2)*{\bar a};"x4'"+(0.27,0.2)*{\bar g};"x5'"+(-0.27,0.2)*{\bar h};
"x6'"+(-0.25,0.2)*{\bar b};"x7'"+(0.25,0.2)*{\bar c};
(-2,2.5)*{\mbox{\uline{circular}}};(-2,2)*{\mbox{\uline{completion}}};
(-2,1.5)*{\mbox{of \uline{biclaw}}};
\endxy$
&
$\xy/r2pc/:
(0,0)*[o][F]{\phantom{s}}="x1";(1.5,0)*[o][F]{\phantom{s}}="x1'";
(-0.2,1)*[o][F]{\phantom{s}}="x2'1";(0.2,1)*[o][F]{\phantom{s}}="x2'2";
(0,2)*[o][F]{\phantom{s}}="x3";(-0.7,-0.7)*[o][F]{\phantom{s}}="x4";
(0.7,-0.7)*[o][F]{\phantom{s}}="x5";(-1.8,-1.4)*[o][F]{\phantom{s}}="x6";
(1.8,-1.4)*[o][F]{\phantom{s}}="x7";{\ar@{-} "x1";"x2'1"};
{\ar@{-} "x1'";"x2'2"};{\ar@{-} "x1";"x4"};{\ar@{-} "x1";"x5"};
{\ar@{-} "x6";"x7"};{\ar@{-} "x1";"x1'"};{\ar@{-}@/^0.5pc/ "x1";"x7"};
{\ar@{-}@/_1pc/ "x1";"x6"};{\ar@{-} "x5";"x6"};{\ar@{-} "x4";"x7"};
{\ar@{-} "x1'";"x4"};{\ar@{-} "x1'";"x5"};{\ar@{-} "x4";"x5"};
"x1"+(0.3,0.3)*{d_1};"x1'"+(0.4,0.25)*{\bar d_1};"x2'1"+(-0.4,0)*{\bar f_1};
"x2'2"+(0.45,0.2)*{\bar f_2};"x3"+(0.4,0.1)*{\bar a_1};"x4"+(-0.35,0)*{g_1};
"x5"+(0.4,-0.1)*{h_1};"x6"+(-0.3,0.2)*{b_1};"x7"+(0.3,0.2)*{c_1};
(-2.5,2)*{\mbox{\uline{anchored}}};(-2.5,1.5)*{\mbox{\uline{$\bm\Delta$-knotting}}};
(-2.5,1)*{\mbox{\uline{graph}}};
\endxy$\medskip\\
\uline{a)}& \uline{b)}& \qquad\uline{c)}
\end{tabular}
\caption{\uline{a)} Example graph (biclaw), \uline{b)} its circular completion
-- vertices $\bar a$, $\bar b$, $\bar c$, $\bar d$ are pairwise adjacent and
adjacent to all other vertices except for $a$, $b$, $c$, $d$ respectively
(dotted edges=\mbox{non-edges}) \uline{c)} an anchored $\Delta$-knotting graph
of the completion -- with vertex $f$ as anchor (graph not
bipartite)\label{fig:biclaw}}
\end{figure}

\begin{figure}[ht!]
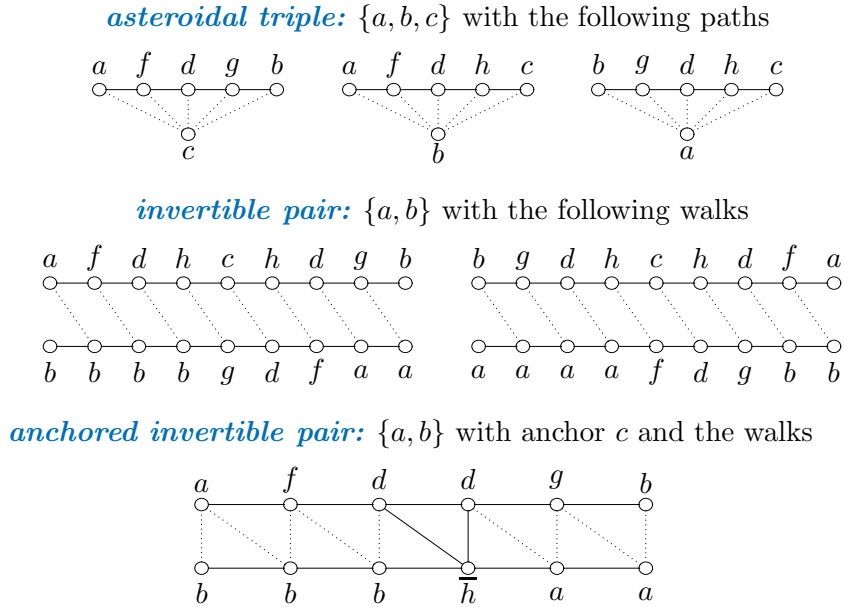
\centering
\parbox{0.7\textwidth}{\centering
\uline{asteroidal triple:} $\{a,b,c\}$ with the following paths\medskip

$\xy/r2pc/:
(0,0)*[o][F]{\phantom{s}}="x1";(0.7,0)*[o][F]{\phantom{s}}="x2";
(1.4,0)*[o][F]{\phantom{s}}="x3";(2.1,0)*[o][F]{\phantom{s}}="x4";
(2.8,0)*[o][F]{\phantom{s}}="x5";(1.4,-0.7)*[o][F]{\phantom{s}}="x6";
{\ar@{-} "x1";"x2"};{\ar@{-} "x2";"x3"};{\ar@{-} "x3";"x4"};
{\ar@{-} "x4";"x5"};{\ar@{.} "x1";"x6"};{\ar@{.} "x2";"x6"};
{\ar@{.} "x3";"x6"};{\ar@{.} "x4";"x6"};{\ar@{.} "x5";"x6"};
"x1"+(0,0.35)*{a};"x2"+(0,0.4)*{f};"x3"+(0,0.4)*{d};"x4"+(0,0.35)*{g};
"x5"+(0,0.4)*{b};"x6"+(0,-0.3)*{c};
\endxy$\qquad
$\xy/r2pc/:
(0,0)*[o][F]{\phantom{s}}="x1";(0.7,0)*[o][F]{\phantom{s}}="x2";
(1.4,0)*[o][F]{\phantom{s}}="x3";(2.1,0)*[o][F]{\phantom{s}}="x4";
(2.8,0)*[o][F]{\phantom{s}}="x5";(1.4,-0.7)*[o][F]{\phantom{s}}="x6";
{\ar@{-} "x1";"x2"};{\ar@{-} "x2";"x3"};{\ar@{-} "x3";"x4"};
{\ar@{-} "x4";"x5"};{\ar@{.} "x1";"x6"};{\ar@{.} "x2";"x6"};
{\ar@{.} "x3";"x6"};{\ar@{.} "x4";"x6"};{\ar@{.} "x5";"x6"};
"x1"+(0,0.35)*{a};"x2"+(0,0.4)*{f};"x3"+(0,0.4)*{d};
"x4"+(0,0.4)*{h};"x5"+(0,0.35)*{c};"x6"+(0,-0.3)*{b};
\endxy$\qquad
$\xy/r2pc/:
(0,0)*[o][F]{\phantom{s}}="x1";(0.7,0)*[o][F]{\phantom{s}}="x2";
(1.4,0)*[o][F]{\phantom{s}}="x3";(2.1,0)*[o][F]{\phantom{s}}="x4";
(2.8,0)*[o][F]{\phantom{s}}="x5";(1.4,-0.7)*[o][F]{\phantom{s}}="x6";
{\ar@{-} "x1";"x2"};{\ar@{-} "x2";"x3"};{\ar@{-} "x3";"x4"};
{\ar@{-} "x4";"x5"};{\ar@{.} "x1";"x6"};{\ar@{.} "x2";"x6"};
{\ar@{.} "x3";"x6"};{\ar@{.} "x4";"x6"};{\ar@{.} "x5";"x6"};
"x1"+(0,0.4)*{b};"x2"+(0,0.4)*{g};"x3"+(0,0.4)*{d};
"x4"+(0,0.4)*{h};"x5"+(0,0.35)*{c};"x6"+(0,-0.3)*{a};
\endxy$
}
\bigskip

\parbox{0.8\textwidth}{\centering
\uline{invertible pair:} $\{a,b\}$ with the following walks\medskip

$\xy/r2pc/:
(0,0)*[o][F]{\phantom{s}}="x1";(0.7,0)*[o][F]{\phantom{s}}="x2";
(1.4,0)*[o][F]{\phantom{s}}="x3";(2.1,0)*[o][F]{\phantom{s}}="x4";
(2.8,0)*[o][F]{\phantom{s}}="x5";(3.5,0)*[o][F]{\phantom{s}}="x6";
(4.2,0)*[o][F]{\phantom{s}}="x7";(4.9,0)*[o][F]{\phantom{s}}="x8";
(5.6,0)*[o][F]{\phantom{s}}="x9";(0,-1)*[o][F]{\phantom{s}}="y1";
(0.7,-1)*[o][F]{\phantom{s}}="y2";(1.4,-1)*[o][F]{\phantom{s}}="y3";
(2.1,-1)*[o][F]{\phantom{s}}="y4";(2.8,-1)*[o][F]{\phantom{s}}="y5";
(3.5,-1)*[o][F]{\phantom{s}}="y6";(4.2,-1)*[o][F]{\phantom{s}}="y7";
(4.9,-1)*[o][F]{\phantom{s}}="y8";(5.6,-1)*[o][F]{\phantom{s}}="y9";
"x1"+(0,0.35)*{a};"x2"+(0,0.4)*{f};"x3"+(0,0.4)*{d};
"x4"+(0,0.4)*{h};"x5"+(0,0.35)*{c};"x6"+(0,0.4)*{h};
"x7"+(0,0.4)*{d};"x8"+(0,0.35)*{g};"x9"+(0,0.4)*{b};
"y1"+(0,-0.4)*{b};"y2"+(0,-0.4)*{b};"y3"+(0,-0.4)*{b};
"y4"+(0,-0.4)*{b};"y5"+(0,-0.45)*{g};"y6"+(0,-0.4)*{d};
"y7"+(0,-0.4)*{f};"y8"+(0,-0.4)*{a};"y9"+(0,-0.4)*{a};
{\ar@{-} "x1";"x2"};{\ar@{-} "x2";"x3"};{\ar@{-} "x3";"x4"};
{\ar@{-} "x4";"x5"};{\ar@{-} "x5";"x6"};{\ar@{-} "x6";"x7"};
{\ar@{-} "x7";"x8"};{\ar@{-} "x8";"x9"};{\ar@{-} "y1";"y2"};
{\ar@{-} "y2";"y3"};{\ar@{-} "y3";"y4"};{\ar@{-} "y4";"y5"};
{\ar@{-} "y5";"y6"};{\ar@{-} "y6";"y7"};{\ar@{-} "y7";"y8"};
{\ar@{-} "y8";"y9"};{\ar@{.} "x1";"y2"};{\ar@{.} "x2";"y3"};
{\ar@{.} "x3";"y4"};{\ar@{.} "x4";"y5"};{\ar@{.} "x5";"y6"};
{\ar@{.} "x6";"y7"};{\ar@{.} "x7";"y8"};{\ar@{.} "x8";"y9"};
\endxy$\qquad
$\xy/r2pc/:
(0,0)*[o][F]{\phantom{s}}="x1";(0.7,0)*[o][F]{\phantom{s}}="x2";
(1.4,0)*[o][F]{\phantom{s}}="x3";(2.1,0)*[o][F]{\phantom{s}}="x4";
(2.8,0)*[o][F]{\phantom{s}}="x5";(3.5,0)*[o][F]{\phantom{s}}="x6";
(4.2,0)*[o][F]{\phantom{s}}="x7";(4.9,0)*[o][F]{\phantom{s}}="x8";
(5.6,0)*[o][F]{\phantom{s}}="x9";(0,-1)*[o][F]{\phantom{s}}="y1";
(0.7,-1)*[o][F]{\phantom{s}}="y2";(1.4,-1)*[o][F]{\phantom{s}}="y3";
(2.1,-1)*[o][F]{\phantom{s}}="y4";(2.8,-1)*[o][F]{\phantom{s}}="y5";
(3.5,-1)*[o][F]{\phantom{s}}="y6";(4.2,-1)*[o][F]{\phantom{s}}="y7";
(4.9,-1)*[o][F]{\phantom{s}}="y8";(5.6,-1)*[o][F]{\phantom{s}}="y9";
"x1"+(0,0.4)*{b};"x2"+(0,0.35)*{g};"x3"+(0,0.4)*{d};"x4"+(0,0.4)*{h};
"x5"+(0,0.35)*{c};"x6"+(0,0.4)*{h};"x7"+(0,0.4)*{d};"x8"+(0,0.4)*{f};
"x9"+(0,0.35)*{a};"y1"+(0,-0.4)*{a};"y2"+(0,-0.4)*{a};"y3"+(0,-0.4)*{a};
"y4"+(0,-0.4)*{a};"y5"+(0,-0.4)*{f};"y6"+(0,-0.4)*{d};"y7"+(0,-0.45)*{g};
"y8"+(0,-0.4)*{b};"y9"+(0,-0.4)*{b};{\ar@{-} "x1";"x2"};{\ar@{-} "x2";"x3"};
{\ar@{-} "x3";"x4"};{\ar@{-} "x4";"x5"};{\ar@{-} "x5";"x6"};
{\ar@{-} "x6";"x7"};{\ar@{-} "x7";"x8"};{\ar@{-} "x8";"x9"};
{\ar@{-} "y1";"y2"};{\ar@{-} "y2";"y3"};{\ar@{-} "y3";"y4"};
{\ar@{-} "y4";"y5"};{\ar@{-} "y5";"y6"};{\ar@{-} "y6";"y7"};
{\ar@{-} "y7";"y8"};{\ar@{-} "y8";"y9"};{\ar@{.} "x1";"y2"};
{\ar@{.} "x2";"y3"};{\ar@{.} "x3";"y4"};{\ar@{.} "x4";"y5"};
{\ar@{.} "x5";"y6"};{\ar@{.} "x6";"y7"};{\ar@{.} "x7";"y8"};{\ar@{.} "x8";"y9"};
\endxy$
}\bigskip

\centering\parbox{0.7\textwidth}{
\uline{anchored invertible pair:} $\{a,b\}$ with anchor $c$ and the walks\medskip

\centering
$\xy/r4pc/:
(0,0)*[o][F]{\phantom{s}}="x1";(0.7,0)*[o][F]{\phantom{s}}="x2";
(1.4,0)*[o][F]{\phantom{s}}="x3";(2.1,0)*[o][F]{\phantom{s}}="x4";
(2.8,0)*[o][F]{\phantom{s}}="x5";(3.5,0)*[o][F]{\phantom{s}}="x6";
(0,-0.5)*[o][F]{\phantom{s}}="y1";
(0.7,-0.5)*[o][F]{\phantom{s}}="y2";(1.4,-0.5)*[o][F]{\phantom{s}}="y3";
(2.1,-0.5)*[o][F]{\phantom{s}}="y4";(2.8,-0.5)*[o][F]{\phantom{s}}="y5";
(3.5,-0.5)*[o][F]{\phantom{s}}="y6";
"x1"+(0,0.15)*{a};"x1"+(0.35,0.2)*{};"x2"+(0,0.2)*{f};
"x3"+(0,0.2)*{d};"x3"+(0.35,0.2)*{};"x4"+(0,0.2)*{d};
"x4"+(0.35,0.2)*{};"x5"+(0,0.2)*{g};"x6"+(0,0.17)*{b};
"x6"+(0.35,0.2)*{};"y1"+(0,-0.2)*{b};
"y1"+(0.35,-0.2)*{};"y2"+(0,-0.2)*{b};"y2"+(0.35,-0.2)*{};
"y3"+(0,-0.2)*{b};"y3"+(0.35,-0.2)*{};"y4"+(0,-0.17)*{\bar h};
"y4"+(0.35,-0.2)*{};"y5"+(0,-0.2)*{a};"y5"+(0.35,-0.2)*{};
"y6"+(0,-0.2)*{a};"y6"+(0.35,-0.2)*{};
{\ar@{-} "x1";"x2"};{\ar@{-} "x2";"x3"};{\ar@{-} "x3";"x4"};
{\ar@{-} "x4";"x5"};{\ar@{-} "x5";"x6"};
{\ar@{-} "y1";"y2"};{\ar@{-} "y2";"y3"};{\ar@{-} "y3";"y4"};
{\ar@{-} "y4";"y5"};{\ar@{-} "y5";"y6"};
{\ar@{.} "x1";"y2"};{\ar@{.} "x1";"y1"};{\ar@{.} "x2";"y3"};
{\ar@{.} "y2";"x2"};{\ar@{-} "x3";"y4"};{\ar@{.} "y3";"x3"};
{\ar@{.} "x4";"y5"};{\ar@{-} "x4";"y4"};{\ar@{.} "x5";"y6"};
{\ar@{.} "y5";"x5"};{\ar@{.} "y6";"x6"};
\endxy$
}

\caption{Examples of various obstructions in the biclaw.\label{fig:example}}
\end{figure}

\begin{figure}
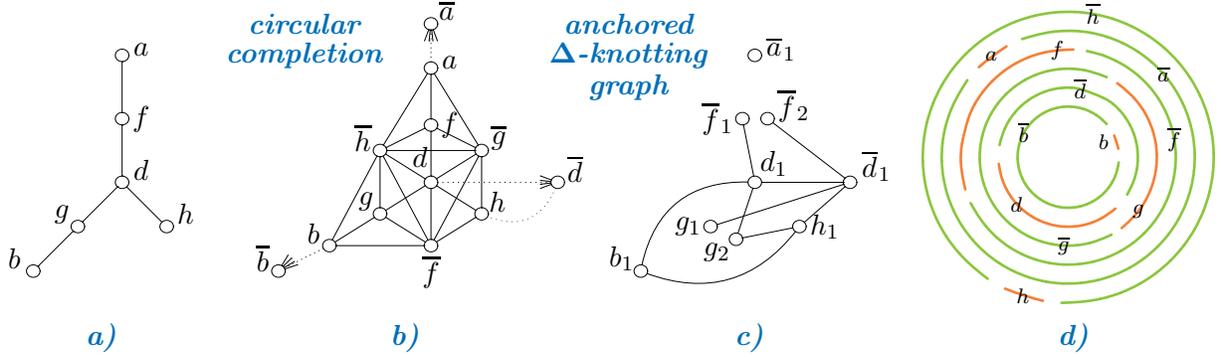
\centering
\begin{tabular}{@{}cc@{\hspace*{-1.2em}}cc@{}}
$\xy/r2pc/:
(0,0)*[o][F]{\phantom{s}}="x1";(0,1)*[o][F]{\phantom{s}}="x2";
(0,2)*[o][F]{\phantom{s}}="x3";(-0.7,-0.7)*[o][F]{\phantom{s}}="x4";
(0.7,-0.7)*[o][F]{\phantom{s}}="x5";(-1.4,-1.4)*[o][F]{\phantom{s}}="x6";
{\ar@{-} "x1";"x2"};{\ar@{-} "x2";"x3"};{\ar@{-} "x1";"x4"};
{\ar@{-} "x1";"x5"};{\ar@{-} "x4";"x6"};"x1"+(0.3,0.2)*{d};
"x2"+(0.3,0)*{f};"x3"+(0.3,0.1)*{a};"x4"+(-0.25,0.2)*{g};
"x5"+(0.3,0.2)*{h};"x6"+(-0.3,0.2)*{b};
\endxy$
&
$\xy/r2pc/:
(0,0)*[o][F]{\phantom{s}}="x1";(0,0.9)*[o][F]{\phantom{s}}="x2";
(0,1.8)*[o][F]{\phantom{s}}="x3";(-0.8,-0.5)*[o][F]{\phantom{s}}="x4";
(0.8,-0.5)*[o][F]{\phantom{s}}="x5";(-1.6,-1)*[o][F]{\phantom{s}}="x6";
(2,0)*[o][F]{\phantom{s}}="x1'";(0,-1)*[o][F]{\phantom{s}}="x2'";
(0,2.5)*[o][F]{\phantom{s}}="x3'";(0.8,0.5)*[o][F]{\phantom{s}}="x4'";
(-0.8,0.5)*[o][F]{\phantom{s}}="x5'";(-2.4,-1.4)*[o][F]{\phantom{s}}="x6'";
{\ar@{-} "x1";"x2"};{\ar@{-} "x2";"x3"};{\ar@{-} "x1";"x4"};
{\ar@{-} "x1";"x5"};{\ar@{-} "x4";"x6"};"x1"+(-0.17,0.4)*{d};
"x2"+(0.3,0)*{f};"x3"+(0.3,0.1)*{a};"x4"+(-0.22,0.2)*{g};
"x5"+(0.25,0.2)*{h};"x6"+(-0.25,0.2)*{b};{\ar@{-} "x1'";"x1'"+(-0.27,-0.03)};
{\ar@{-} "x1'";"x1'"+(-0.27,-0.1)};{\ar@{-} "x1'";"x1'"+(-0.27,0.03)};
{\ar@{-} "x1'";"x1'"+(-0.27,0.1)};{\ar@{.} "x1'";"x1"};
{\ar@{.}@/^0.7pc/ "x1'";"x5"};{\ar@{-} "x2'";"x1"};
{\ar@{-} "x2'";"x4"};{\ar@{-} "x2'";"x5"};{\ar@{-} "x2'";"x6"};
{\ar@{-} "x3'";"x3'"+(-0.03,-0.27)};{\ar@{-} "x3'";"x3'"+(-0.1,-0.27)};
{\ar@{-} "x3'";"x3'"+(0.03,-0.27)};{\ar@{-} "x3'";"x3'"+(0.1,-0.27)};
{\ar@{.} "x3'";"x3"};{\ar@{-} "x4'";"x1"};{\ar@{-} "x4'";"x2"};
{\ar@{-} "x4'";"x3"};{\ar@{-} "x4'";"x5"};{\ar@{-} "x6'";"x6'"+(0.2,0.2)};
{\ar@{-} "x6'";"x6'"+(0.25,0.15)};{\ar@{-} "x6'";"x6'"+(0.15,0.25)};
{\ar@{-} "x6'";"x6'"+(0.3,0.1)};{\ar@{.} "x6'";"x6"};
{\ar@{-} "x2'";"x4'"};{\ar@{-} "x2'";"x5'"};{\ar@{-} "x4'";"x5'"};
{\ar@{-} "x5'";"x1"};{\ar@{-} "x5'";"x2"};{\ar@{-} "x5'";"x3"};
{\ar@{-} "x5'";"x4"};{\ar@{-} "x5'";"x6"};"x1'"+(0.27,0.2)*{\bar d};
"x2'"+(0.0,-0.42)*{\bar f};"x3'"+(0.25,0.2)*{\bar a};"x4'"+(0.27,0.2)*{\bar g};
"x5'"+(-0.27,0.2)*{\bar h};"x6'"+(-0.25,0.2)*{\bar b};
(-2,2.5)*{\mbox{\uline{circular}}};(-2,2)*{\mbox{\uline{completion}}};
\endxy$
&
$\xy/r2pc/:
(0,0)*[o][F]{\phantom{s}}="x1";(1.5,0)*[o][F]{\phantom{s}}="x1'";
(-0.2,1)*[o][F]{\phantom{s}}="x2'1";(0.2,1)*[o][F]{\phantom{s}}="x2'2";
(0,2)*[o][F]{\phantom{s}}="x3";(-0.7,-0.7)*[o][F]{\phantom{s}}="x4";
(-0.3,-0.9)*[o][F]{\phantom{s}}="x4'";(0.7,-0.7)*[o][F]{\phantom{s}}="x5";
(-1.8,-1.4)*[o][F]{\phantom{s}}="x6";{\ar@{-} "x1";"x2'1"};
{\ar@{-} "x1'";"x2'2"};{\ar@{-} "x1'";"x4"};{\ar@{-} "x1'";"x5"};
{\ar@{-} "x1";"x1'"};{\ar@{-}@/_1pc/ "x1";"x6"};{\ar@{-}@/^1pc/ "x5";"x6"};
{\ar@{-} "x1";"x4'"};{\ar@{-} "x4'";"x5"};"x1"+(0.3,0.3)*{d_1};
"x1'"+(0.4,0.25)*{\bar d_1};"x2'1"+(-0.4,0)*{\bar f_1};"x2'2"+(0.4,0.2)*{\bar f_2};
"x3"+(0.4,0.1)*{\bar a_1};"x4"+(-0.35,0)*{g_1};"x4'"+(-0.3,-0.2)*{g_2};
"x5"+(0.4,0)*{h_1};"x6"+(-0.3,0.2)*{b_1};(-2,2.5)*{\mbox{\uline{anchored}}};
(-2,2)*{\mbox{\uline{$\bm\Delta$-knotting}}};(-2,1.5)*{\mbox{\uline{graph}}};
\endxy$&
\raisebox{2ex}{$\xy/r2pc/:
{{\xypolygon46"A"{~:{(1.7,0):}~>{}~={80}{}}},{\xypolygon46"B"{~:{(2.0,0):}~>{}~={80}{}}},
{\xypolygon46"C"{~:{(2.3,0):}~>{}~={80}{}}},{\xypolygon46"D"{~:{(1.4,0):}~>{}~={80}{}}},
{\xypolygon48"E"{~:{(1.1,0):}~>{}~={80}{}}},{\xypolygon46"F"{~:{(0.8,0):}~>{}~={80}{}}}};
"A2";"A16"**[Orange][|1pt]\crv{"A3"& "A4"& "A5"& "A6"& "A7"& "A8"& "A9"& "A10"&
"A11"& "A13"& "A14"& "A15"}; "A17";"A1"**[LimeGreen][|1pt]\crv{"A18"& "A19"&
"A20"& "A21"& "A22"& "A23"& "A24"& "A25"& "A26"& "A27"& "A28"& "A29"& "A30"&
"A31"& "A32"& "A33"& "A34"& "A35"& "A36"& "A37"& "A38"& "A39"& "A40"& "A41"&
"A42"& "A43"& "A44"& "A45"& "A46"& }; "B6";"B8"**[Orange][|1pt]\crv{"B7"};
"B9";"B5"**[LimeGreen][|1pt]\crv{"B10"& "B11"& "B12"& "B13"& "B14"& "B15"&
"B16"& "B17"& "B18"& "B19"& "B20"& "B21"& "B22"& "B23"& "B24"& "B25"& "B26"&
"B27"& "B28"& "B29"& "B30"& "B31"& "B32"& "B33"& "B34"& "B35"& "B36"& "B37"&
"B38"& "B39"& "B40"& "B41"& "B42"& "B43"& "B44"& "B45"& "B46"& "B1"& "B2"& "B3"&
"B4"}; "E15";"E32"**[Orange][|1pt]\crv{"E15"& "E16"& "E17"& "E18"& "E19"& "E20"&
"E21"& "E22"& "E23"& "E24"& "E25"& "E26"& "E27"& "E28"& "E29"& "E30"& "E31"};
"E34";"E13"**[LimeGreen][|1pt]\crv{"E34"& "E35"& "E36"& "E37"& "E38"& "E39"&
"E40"& "E41"& "E42"& "E43"& "E44"& "E45"& "E46"& "E1"& "E2"& "E3"& "E4"& "E5"&
"E6"& "E7"& "E8"& "E9"& "E10"& "E11"& "E12"};
"C22";"C24"**[Orange][|1pt]\crv{"C23"};
"C25";"C21"**[LimeGreen][|1pt]\crv{"C26"& "C27"& "C28"& "C29"& "C30"& "C31"&
"C32"& "C33"& "C34"& "C35"& "C36"& "C37"& "C38"& "C39"& "C40"& "C41"& "C42"&
"C43"& "C44"& "C45"& "C46"& "C1"& "C2"& "C3"& "C4"& "C5"& "C6"& "C7"& "C8"&
"C9"& "C10"& "C11"& "C12"& "C13"& "C14"& "C15"& "C16"& "C17"& "C18"& "C19"&
"C20"}; "D30";"D44"**[Orange][|1pt]\crv{"D31"& "D32"& "D33"& "D34"& "D35"&
"D36"& "D37"& "D38"& "D39"& "D40"& "D41"& "D42"& "D43"};
"D45";"D29"**[LimeGreen][|1pt]\crv{"D46"& "D1"& "D2"& "D3"& "D4"& "D5"& "D6"&
"D7"& "D8"& "D9"& "D10"& "D11"& "D12"& "D13"& "D14"& "D15"& "D16"& "D17"& "D18"&
"D19"& "D20"& "D21"& "D22"& "D23"& "D24"& "D25"& "D26"& "D27"& "D28"};
"F38";"F40"**[Orange][|1pt]\crv{"F39"};
"F42";"F36"**[LimeGreen][|1pt]\crv{"F42"& "F43"& "F44"& "F45"& "F46"& "F1"&
"F2"& "F3"& "F4"& "F5"& "F6"& "F7"& "F8"& "F9"& "F10"& "F11"& "F12"& "F13"&
"F14"& "F15"& "F16"& "F17"& "F18"& "F19"& "F20"& "F21"& "F22"& "F23"& "F24"&
"F25"& "F26"& "F27"& "F28"& "F29"& "F30"& "F31"& "F32"& "F33"& "F34"& "F35"&
"F36"}; "A3"*{_f};"A38"*{_{\bar f}};"B7"*{_a};"B42"*{_{\bar
a}};"E20"*{_d};"E1"*{_{\bar d}};"C23"*{_h};"C1"*{_{\bar h}};
"D32"*{_g};"D25"*{_{\bar g}};"F39"+(-0.2,0)*{_b};"F10"*{_{\bar b}};
\endxy$}
\medskip\\\uline{a)}&\uline{b)}&\qquad\uline{c)}&\uline{d)}
\end{tabular}
\caption{\uline{a)} Another example graph, \uline{b)} its circular completion,
\uline{c)} an anchored $\Delta$-knotting graph of the completion -- anchor $f$
(graph is bipartite), \uline{d)} circular-arc representation of the
completion\label{fig:biclaw2}}
\end{figure}

The intuition for the anchored invertible pair being an obstruction comes from
the fact that there is a place on the circle (corresponding to the anchor) that
is avoided by the two mutually avoiding walks.  Since the two walks must avoid
the area of the anchor, they certify that a circular-arc representation is
impossible.  We note that the concept of avoidance used here is different (and
somewhat more technical) from what the authors of \cite{adjusted} used for
invertible pairs. This has a good reason -- in each case, the definition is
motivated by a standard ordering characterization. For interval graphs, the
ordering characterization is simple \cite{intorder}, while for circular arc
graphs we use a more complex ordering (based on \uline{normalized
representations} of \cite{hsu}) which directly motivates our definition of
avoidance.

One important idea that allowed us to find a forbidden obstruction
characterization is the introduction of ``circular completions''.  Specifically,
we add for any vertex $v$ another vertex $\bar v$, with the property that in any
potential representation the arcs $v$ and $\bar v$ are complementary, i.e., are
disjoint and cover the circle.  This allows us to directly use the idea of
\uline{arc flipping} on which the recognition algorithm of McConnell
\cite{mcc03} is based, but with the added advantage of capturing all possible
such flips at once. The corresponding machinery developed in \cite{mcc03} is
what drives our subsequent proofs.

As a consequence, we describe the first fully certifying polynomial-time
algorithm for recognition of circular-arc graphs.  Given the existing
(yes-certifying) recognition algorithms \cite{kaplan-nussbaum,mcc03} for the
problem, it suffices to certify a negative answer.  To this end, the algorithm
constructs an auxiliary graph, the so-called \uline{anchored
$\bm\Delta$-knotting graph} (in analogy to the knotting graph of Gallai
\cite{gallai}).  This graph must contain an odd cycle if the input is not a
circular-arc graph. Using the cycle, the algorithm produces an anchored
invertible pair and avoiding walks certifying this pair.

Similar algorithms based on the idea of knotting are known to recognize
cocomparability \cite{gallai} and asteroidal-triple-free graphs
\cite{path-order}. 

\section{Definitions}

We write $(u,v)$ to denote an ordered pair and $\{u,v\}$ for an unordered pair;
in case it is an edge, we abbreviate it to $uv$.  A graph $G=(V,E)$ has vertex
set $V=V(G)$ and edge set $E=E(G)$. As usual, the graphs considered here are
undirected and have no parallel edges.  However, all graphs here are assumed to
have a \uline{loop} $uu\in E$ at every vertex $u$.  This is consistent with
intersection representations (every non-empty set intersects itself). For $X
\subseteq V$,
we write $G[X]$ to denote the subgraph of $G$ induced by $X$, i.e. the graph
with vertex set $X$ where two vertices are adjacent if and only if they are
adjacent in $G$.  We write $N(u)$ to denote the (open) neighbourhood of $u$,
i.e. vertices $v\neq u$ such that $uv\in E$. We write $N[u]$ for the set
$N(u)\cup\{u\}$, the closed neighbourhood of $u$.  When dealing with different
graphs, we write $N_G(u)$ and $N_G[u]$ to indicate the neighbourhoods of $u$ in
the graph $G$.  A vertex $u$ is \uline{universal} if $N[u]=V$.  Vertices $u,v$
are \uline{true twins} if $N[u]=N[v]$.

We write $P=x_1\tp x_2\tp\cdots\tp x_k$ to denote a walk in $G$, i.e. the
sequence of (not necessarily distinct) vertices $x_1,x_2,\ldots,x_k\in V$ where
$x_ix_{i+1}\in E$ for all $1\leq i<k$. Some vertices may repeat on the walk,
even consecutive vertices can be identical, since every vertex has a loop.

\section{Overview of the result}

The main result of the paper is described below as Theorem \ref{thm:main}. It
characterizes circular-arc graphs as graphs that do not contain a specific
structural obstruction. The obstruction is defined~as~follows.

We say that vertex $x$ \uline{overlaps} vertex $y$ (or that $x,y$ overlap each
other) if $x,y$ are adjacent but have incomparable neighbourhoods.  We say that
a vertex $z$ and a walk $P$ \uline{avoid} each other if\smallskip

\begin{compactenum}[(i)]
\item every neighbour of $z$ on $P$ overlaps $z$, and
\item  if $z$ overlaps both endpoints $x$, $y$ of an edge $xy$ on $P$, then
$x$, $y$ do not overlap each other.
\end{compactenum}\smallskip

\noindent If $z$ avoids a walk $x\tp y$, we simply say that $z$ avoids the
edge $xy$.  (Note that $x=y$ is allowed here. Namely, the definition implies
that $z$ avoids a loop $xx$ iff $z$ is non-adjacent to $x$ or overlaps $x$.)
\smallskip

We say that a walk $P=x_1\tp x_2\tp\cdots\tp x_k$ \uline{avoids} a walk $Q=y_1
\tp y_2\tp\cdots\tp y_k$ if for each $1\leq i<k$, the vertex $x_i$ avoids the
edge $y_iy_{i+1}$, and the vertex $y_{i+1}$ avoids the edge $x_ix_{i+1}$. Note
that we may assume that for all $i$, either $x_i=x_{i+1}$ or $y_i=y_{i+1}$. If
this is not so, for each $i$ for which $x_i\neq x_{i+1}$ and $y_i\neq y_{i+1}$,
we replace the subwalks $x_i\tp x_{i+1}$ in $P$ and $y_{i}\tp y_{i+1}$ in $Q$
by walks $x_i\tp x_i\tp x_{i+1}$ and $y_i\tp y_{i+1}\tp y_{i+1}$ respectively
to obtain two new walks $P'$ and $Q'$ respectively. The walk $P'$ avoids the
walk $Q'$, since for each $i$, $x_i$ avoids $y_iy_{i+1}$ and $y_{i+1}$ avoids
$x_ix_{i+1}$. For this, note further that this implies that $x_i$ overlaps or is
non-adjacent to $y_{i+1}$ and thus $x_i$ avoids the loop $y_{i+1}y_{i+1}$.
Likewise $y_{i+1}$ avoids the loop $x_ix_i$. It is not hard to see that the walk
$Q'$ avoids the walk $P'$ as well.

Whenever we say that a walk $P$ avoids a walk $Q$, we shall assume that $P$ and
$Q$ satisfy the above assumption, and therefore
we shall simply say that $P$ and $Q$ avoid each other.

For distinct vertices $x,y,z$, we say that $x,y$ is a \uline{$\bm z$-invertible
pair}, if there exists an $xy$-walk and a $yx$-walk such that the two walks
avoid each other and both avoid $z$.  We also say that $x,y$ is an invertible
pair anchored at $z$, or simply an \uline{anchored invertible pair}.

We also need to introduce the concept of \uline{circular completion}. Roughly
speaking, the circular completion of $G$ is constructed from $G$ by adding
additional vertices so that every vertex is paired-up with a ``complementary''
vertex. We defer the precise definition until a later section, since it uses
concepts introduced there.

Now we are ready to state the main results of this paper.

\begin{theorem}\label{thm:main}
Let $G$ be a graph. Assume $G$ has no universal vertices or true twins.

Then the following are equivalent.
\begin{compactenum}[\rm (I)]
\item $G$ is a circular-arc graph.
\setreflabel{\rm(\Roman{enumi})}\label{enum:I}
\item The circular completion of $G$ contains no anchored invertible pair.
\setreflabel{\rm(\Roman{enumi})}\label{enum:II}
\end{compactenum}
\end{theorem}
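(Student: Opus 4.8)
The plan is to establish the equivalence by proving the two implications separately, with the bulk of the difficulty lying in the direction $\neg\ref{enum:I}\Rightarrow\neg\ref{enum:II}$ (if $G$ is not a circular-arc graph, then the circular completion contains an anchored invertible pair).

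First I would handle the easier direction, $\neg\ref{enum:II}\Rightarrow\neg\ref{enum:I}$ (equivalently $\ref{enum:I}\Rightarrow\ref{enum:II}$): assuming $G$ is a circular-arc graph, I would fix a circular-arc representation, extend it to a representation of the circular completion (where each $\bar v$ is the complementary arc of $v$), and argue that no anchored invertible pair can exist. The key intuition, already stated in the excerpt, is geometric: if $x,y$ is a $z$-invertible pair, the two mutually avoiding walks trace out, in the representation, two ``passages'' between the arcs of $x$ and $y$ that both must stay clear of the region occupied by the arc of the anchor $z$. The avoidance conditions (i) and (ii) are precisely engineered so that a walk avoiding $z$ cannot cross the point of the circle interior to the arc of $z$ but outside all its overlapping neighbours. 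Since both the $xy$-walk and the $yx$-walk avoid $z$, they would have to pass from one side of $z$ to the other along the same side of the circle, yet the walks avoid each other, forcing a contradiction with the cyclic order. I would make this precise by translating the avoidance definition into statements about left/right endpoints of arcs under a normalized representation.

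For the hard direction, $\neg\ref{enum:I}\Rightarrow\neg\ref{enum:II}$, I would rely on the machinery of \emph{normalized representations} of Hsu~\cite{hsu} and the arc-flipping technique of McConnell~\cite{mcc03} that the circular completion is designed to encode. The strategy is to set up an ordering characterization analogous to the interval-graph case: a circular-arc graph should admit a suitable cyclic ordering of the endpoints of arcs (equivalently a transitive-type orientation of an associated auxiliary graph) satisfying local consistency constraints captured by the avoidance relation. I would then construct the \emph{anchored $\Delta$-knotting graph} (as previewed in the introduction and illustrated in Figures~\ref{fig:biclaw} and~\ref{fig:biclaw2}), whose two-colorability (bipartiteness) would correspond exactly to the existence of a consistent orientation, and hence to $G$ being a circular-arc graph. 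When $G$ is not a circular-arc graph, this knotting graph contains an odd cycle; I would extract from that odd cycle a pair of equal-length walks $u$-to-$v$ and $v$-to-$u$ that avoid each other, and identify the anchor as the vertex blocking the closure of the cycle, thereby producing the anchored invertible pair in the circular completion.

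\textbf{The main obstacle} will be the hard direction, and specifically the step of converting an odd cycle in the anchored $\Delta$-knotting graph into avoiding walks with a valid anchor. This requires a careful correspondence between the combinatorial structure of the knotting graph (edges recording forced orientations of endpoint pairs) and the adjacency/overlap pattern demanded by the avoidance definition—conditions (i) and (ii) must be verified step-by-step along the walks, including the loop cases arising from the normalization $x_i=x_{i+1}$ or $y_i=y_{i+1}$. I expect that correctly formalizing the normalized/circular ordering characterization, and proving that bipartiteness of the knotting graph is equivalent to the existence of such an ordering (the circular analogue of Gallai's comparability result), is where the real work lies; the remainder is a translation between equivalent encodings, albeit a technically delicate one.
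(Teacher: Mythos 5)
Your easy direction is essentially the paper's: fix a normalized representation (Theorem \ref{thm:hsu}), translate avoidance into constraints on the cyclic order of left endpoints (this is Lemma \ref{lem:avoid}), and derive a contradiction from the fact that the two walks would have to invert the order of $l_x,l_y$ while never being allowed to pass $l_z$. That part is sound.

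The hard direction, however, has a genuine gap: the route you propose is circular. You plan to show that bipartiteness of the anchored $\Delta$-knotting graph ``corresponds exactly \ldots\ to $G$ being a circular-arc graph,'' and then, when $G$ is not a circular-arc graph, to extract the anchored invertible pair from an odd cycle. But the $z$-anchored $\Delta$-knotting graph is constructed precisely so that an odd cycle in it is equivalent to the existence of a $z$-invertible pair -- that is Theorem \ref{thm:knot-z}, a purely combinatorial re-encoding with no representation-theoretic content. Hence ``$\mathcal{K}^z_G$ is bipartite for every $z$'' is just a restatement of condition \ref{enum:II}, and asserting its equivalence with \ref{enum:I} is asserting the very theorem to be proved. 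Converting an odd cycle into a pair of avoiding walks only translates one encoding of the failure of \ref{enum:II} into another; it does nothing toward building a circular-arc representation in the case where \ref{enum:II} holds, which is where all the difficulty lies.

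What is missing is the constructive content of \ref{enum:II}$\Rightarrow$\ref{enum:I}, which the paper obtains in three steps. First, it fixes a minimum-degree vertex $z$ of the (circularly-paired) completion and bipartitions an auxiliary ``disagreement'' graph $H$ on the set of vertices overlapping $z$; bipartiteness of $H$ is deduced from the absence of an anchored invertible pair at $\{x,\bar z\}$ with anchor $z$, not from a representation. Second, the bipartition is used to select a set $Z$ containing exactly one vertex of each circular pair -- this is the arc-flipping decision, i.e.\ the choice, for each complementary pair $\{v,\bar v\}$, of which of the two behaves like an interval -- and $Z$ is shown to be a non-inverting set. Third, Lemma \ref{lem:non-inverting} is applied: via $\Delta$-implication classes, $\Delta$-modules and interval orientations (Lemma \ref{lem:delinvpair}) one builds a consistent interval representation of $G[Z]$ normalized with respect to $G$, and Lemma \ref{lem:normalized} extends it to all of $G$ by assigning complementary arcs. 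Your sketch gestures at an ``ordering characterization'' in the spirit of Gallai, which is indeed the right tool for step three, but it never addresses how to make the per-pair selection of step two; without that selection no consistent linear (as opposed to circular) structure can be extracted, and the knotting graph cannot supply it.
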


\begin{theorem}\label{thm:certif}
There exists an $O(n^4)$ time certifying algorithm for recognition of
circular-arc graphs.
\end{theorem}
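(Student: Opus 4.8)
The plan is to turn the characterization of Theorem~\ref{thm:main} into a decision-plus-certificate procedure by means of an auxiliary bipartiteness test, in the spirit of Gallai's knotting graph for comparability graphs~\cite{gallai}. First I would preprocess the input $G$ to meet the hypothesis of Theorem~\ref{thm:main}: repeatedly delete universal vertices and identify true twins. Both operations preserve being a circular-arc graph and are reversible on representations and on obstructions, so it suffices to certify the reduced graph; this costs $O(n^2)$ time. By Theorem~\ref{thm:main} the task then reduces to deciding whether the circular completion $G'$ of $G$ (a graph on $O(n)$ vertices) contains an anchored invertible pair: if not, $G$ is a circular-arc graph and the verified output of any of the yes-certifying algorithms~\cite{kaplan-nussbaum,mcc03} is a positive certificate; if so, the pair together with its two avoiding walks is the negative certificate. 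Thus the real work is to detect, and exhibit, an anchored invertible pair.

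For this I would, for each of the $O(n)$ choices of anchor $z\in V(G')$, build an \emph{anchored $\Delta$-knotting graph} $K_z$ whose vertices are the ordered pairs of vertices of $G'$ that can occur as corresponding positions of two walks avoiding $z$, and whose edges join such a pair to every pair obtained by advancing one of the two walks by a single step in a way that respects the avoidance clauses~(i)--(ii). Using the ordering characterization induced by the normalized representations of~\cite{hsu} together with the arc-flipping machinery of~\cite{mcc03}, the construction is arranged so that $K_z$ is bipartite exactly when there is no $z$-invertible pair. A breadth-first search on each $K_z$ then either two-colours all of them---certifying via Theorem~\ref{thm:main} that $G$ is a circular-arc graph, in which case I output the representation produced by~\cite{mcc03}---or returns an odd closed walk in some $K_z$.

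From such an odd closed walk the anchored invertible pair is read off directly: the two coordinates of the pairs traversed along the walk spell out the required $xy$-walk and $yx$-walk, the parity forces the two endpoints to be interchanged as the definition of a $z$-invertible pair demands, and the edge relation of $K_z$ guarantees that consecutive steps satisfy the avoidance clauses and that both walks avoid $z$. Presented with these two walks, the pair is a negative certificate whose validity is checkable independently in polynomial time.

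For the running time, I would first precompute the overlap relation on $V(G')$ in $O(n^3)$ time, after which each edge of a $K_z$ can be tested in constant time. Each $K_z$ has $O(n^2)$ vertices, and since a pair can be advanced in $O(n)$ ways it has $O(n^3)$ edges, so the bipartiteness test takes $O(n^3)$ time per anchor and $O(n^4)$ overall. I expect the main obstacle to lie not in this bookkeeping but in the correctness of the construction at its core---the proof that bipartiteness of $K_z$ is equivalent to the absence of a $z$-invertible pair, and that an odd cycle yields walks that genuinely avoid each other and avoid $z$. That is precisely the step where the normalized-representation ordering and the flipping machinery of~\cite{mcc03} must be used in full, effectively converting the equivalence of Theorem~\ref{thm:main} into a construction.
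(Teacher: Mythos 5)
Your overall architecture matches the paper's: preprocess, pass to the circular completion, let a yes-certifying recognizer handle the positive case, and for the negative case build an anchored knotting-type auxiliary graph, test its bipartiteness, and extract the two avoiding walks from an odd cycle. The gap is in the core construction. The auxiliary graph you describe has as vertices the \emph{ordered pairs} of vertices of $G'$, with edges given by advancing one of the two walks by a single step; that is the pair graph underlying the $\Delta$-implication classes of Section~\ref{sec:deltaclasses}, and for that object bipartiteness is not the relevant property. A $z$-invertible pair exists precisely when some $(a,b)$ and $(b,a)$ lie in the same connected component of this pair graph; an odd closed walk in it returns to the pair it started from, so its two coordinate projections are closed walks from $a$ back to $a$ and from $b$ back to $b$, not an $ab$-walk paired with a $ba$-walk, and its parity has nothing to do with interchanging the endpoints. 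So the step ``the parity forces the two endpoints to be interchanged'' fails for the construction as described. The object for which bipartiteness is the right test is different: the paper's ${\cal K}^z_G$ splits each vertex $u\in A(z)$ into one copy per $uz$-component (a Gallai-style vertex splitting, not a pair graph), has $O(n^2)$ vertices and $O(n^2)$ edges, and Theorem~\ref{thm:knot-z} --- whose two-directional proof is the real content of the negative certificate --- shows how an odd cycle in it is stitched into the two mutually avoiding walks by concatenating paths inside $uz$-components. You explicitly flag this equivalence as the step you have not proved; since it carries the entire weight of the argument, the proposal as it stands does not establish the theorem.

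There are two ways to repair it. Either adopt the paper's knotting graph and prove Theorem~\ref{thm:knot-z} (the delicate direction being the conversion of an odd cycle into a $z$-invertible pair), or keep your pair graph but replace the bipartiteness test by a reachability test from $(a,b)$ to $(b,a)$ over all pairs; a connecting path then does spell out the two walks directly, and $O(n)$ anchors times $O(n^3)$ edges still fits in $O(n^4)$. Note also that the paper runs the test for a single anchor only, a minimum-degree vertex of the completion, which suffices because the proof of \ref{enum:II}$\Rightarrow$\ref{enum:I} in Theorem~\ref{thm:main} uses only the absence of invertible pairs anchored at that one vertex; your loop over all anchors is harmless for the time bound but is not what the paper does, and if you keep it you must still argue that non-bipartiteness (or reachability) occurs for \emph{some} anchor whenever $G$ is not a circular-arc graph, which again is exactly the content of Theorems~\ref{thm:main} and~\ref{thm:knot-z}.
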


The rest of the paper is organized as follows. In \S \ref{sec:prelim} we
introduce key concepts and state relevant results from the literature.  In \S
\ref{sec:completions}, we define circular completion and show how to obtain one
for a given graph.   In \S \ref{sec:charac}, we prove Theorem \ref{thm:main}.
In \S \ref{sec:certif} we discuss the resulting certifying algorithm for
recognition of circular-arc graphs using the anchored $\Delta$-knotting graph
described in \S \ref{sec:knot}. 


\section{Preliminaries}\label{sec:prelim}

Let $G=(V,E)$ be a graph. Throughout the paper (unless said otherwise) we shall
assume that $G$ has no universal vertices or true twins. The question of
representability by circular arcs is unaffected by this.  Also, recall that we
assume that there is a \uline{loop} $uu\in E$ at every vertex $u\in V$.

When traversing around a circle in the clockwise direction, we first encounter
the \uline{left endpoint} of a circular arc, then its \uline{interior}, and then
its \uline{right endpoint}.  For points $p,q,r$ on the circle, we write $p<q<r$
to indicate that the points $p,q,r$ appear in this order in a full traversal of
the circle in the clockwise direction starting from the point $p$.  Similarly,
we write $p_1<p_2<\ldots<p_k$ for the clockwise order of points
$p_1,p_2,\ldots,p_k$. We write $[p,q]$ for the clockwise arc from $p$ to $q$.

\begin{definition}
A \uline{circular-arc representation} of a graph $G=(V,E)$ assigns to each $u\in
V$ a circular arc $[l_u,r_u]$ where $l_u$ and $r_u$ are the left and right
endpoints of this arc such that all endpoints of arcs are distinct, and $uv\in
E$ iff $[l_u,r_u]$ intersects $[l_v,r_v]$.
\end{definition}

\begin{definition}[Spanning pair]
We say that $\{u,v\}$ is a \uline{spanning pair} of $G$ if

\begin{compactenum}[\qquad(C1)]
\item for all $x\in V\setminus N[v]$, we have $N[x]\subseteq N[u]$,
\setreflabel{(C\theenumi)}\label{enum:c1}
\item for all $y\in V\setminus N[u]$, we have $N[y]\subseteq N[v]$.
\setreflabel{(C\theenumi)}\label{enum:c2}
\end{compactenum}
\end{definition}

\begin{definition}[Edge types]\label{def:edgetypes}
We say that an edge $uv\in E$ is
\begin{compactitem}
\item an \uline{inclusion edge} if $N[u]\subseteq N[v]$ or $N[u]\supseteq N[v]$,
\item an \uline{overlap edge} otherwise.
\end{compactitem}
\smallskip

\noindent For overlap edges $uv$, we further distinguish two subtypes:
\begin{compactitem}
\item a \uline{2-overlap edge} if $\{u,v\}$ is a spanning pair,
\item a \uline{1-overlap edge} otherwise.
\end{compactitem}

\end{definition}

Note that every edge $uv$ is classified either as an inclusion edge, a 1-overlap
edge or a 2-overlap edge. Also, note that a loop $uu\in E$ is classified by the
above as an inclusion edge.


\begin{definition}[Normalized representation]
Let $X\subseteq V$.  A circular-arc representation $\{[l_u,r_u]\}_{u\in X}$ of
$G[X]$ is said to be \uline{normalized with respect to $\bm G$} if for all
$uv\in E$ such that $u,v\in X$:\smallskip

\begin{compactenum}[\quad\rm(N1)]
\item the arc $[l_u,r_u]$ properly contains $[l_v,r_v]$ $\iff$
$N[u]\supseteq N[v]$ in $G$ \hfill($l_u<l_v<r_v<r_u$)
\setreflabel{\rm(N\theenumi)}\label{enum:n1}
\item $[l_u,r_u]\cup[l_v,r_v]$ covers the circle $\iff$ $uv$ is a
2-overlap edge of $G$\hfill($l_u<r_v<l_v<r_u$)
\setreflabel{\rm(N\theenumi)}\label{enum:n2}
\end{compactenum}\smallskip
If $X=V$, then the representation is said to be \uline{normalized} (for short).
\end{definition}

\begin{theorem}{\rm\cite{hsu}}\label{thm:hsu}
If $G$ is a circular-arc graph with no universal vertices or true twins, then
$G$ has a normalized circular-arc representation.
\end{theorem}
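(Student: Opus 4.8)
The plan is to prove Theorem~\ref{thm:hsu}, namely that every circular-arc graph $G$ with no universal vertices or true twins admits a \emph{normalized} representation satisfying \ref{enum:n1} and \ref{enum:n2}. I would start from an arbitrary circular-arc representation of $G$, guaranteed to exist since $G$ is a circular-arc graph, and modify it locally so that the two normalization conditions hold for every edge. The essential observation is that for two arcs the only geometric possibilities are: disjoint, one properly containing the other, overlapping on one side, or jointly covering the circle. Conditions \ref{enum:n1} and \ref{enum:n2} stipulate exactly which of the two non-trivial intersecting configurations (proper containment versus double-covering) must be realized, depending on whether the neighbourhood containment $N[u]\supseteq N[v]$ holds in $G$ or whether $uv$ is a $2$-overlap edge, i.e. $\{u,v\}$ is a spanning pair.

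First I would record the compatibility between the combinatorial edge types of Definition~\ref{def:edgetypes} and the geometric configurations: an inclusion edge $uv$ with $N[u]\supseteq N[v]$ should be drawn with $[l_v,r_v]$ properly inside $[l_u,r_u]$, while a $2$-overlap edge should be drawn with the two arcs jointly covering the circle. The key lemma to isolate is that these local requirements are mutually consistent, i.e. one can simultaneously enforce \ref{enum:n1} for all inclusion edges and \ref{enum:n2} for all $2$-overlap edges. I would verify this by showing that a violation can always be repaired by sliding an endpoint of one arc past an endpoint of another without changing the intersection pattern (hence without changing which pairs are adjacent), using the fact that endpoints are distinct. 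For the containment condition, if $N[u]\supseteq N[v]$ but the arcs merely overlap, one extends the arc of $u$ over the exposed end of $v$; the hypothesis on neighbourhoods ensures no new intersections are created, because any arc meeting $v$ already meets $u$. The absence of true twins guarantees proper containment ($l_u<l_v<r_v<r_u$ with strict inequalities) rather than shared endpoints, and the absence of universal vertices prevents a single arc from covering the whole circle, which would trivialize the order structure.

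For \ref{enum:n2}, the spanning-pair conditions \ref{enum:c1} and \ref{enum:c2} are precisely what is needed: if $\{u,v\}$ is a spanning pair, then every vertex outside $N[v]$ has its closed neighbourhood inside $N[u]$ and vice versa, which forces every arc of the representation to lie within $[l_u,r_u]\cup[l_v,r_v]$; hence this union can be pushed to cover the circle. Conversely a $1$-overlap edge, failing the spanning condition, has some arc escaping both $u$ and $v$, and this arc can be used to keep a gap open so the two arcs do not double-cover. I would carry out these adjustments edge by edge (or rather endpoint by endpoint), arguing that a single sweep stabilizes because each repair respects all previously established conditions; a potential-function or extremal argument (choosing, among all representations, one minimizing the number of violated conditions, then deriving a contradiction from any remaining violation) packages this cleanly and avoids an explicit ordering of the repairs.

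The main obstacle I anticipate is ensuring that the local repairs are \emph{globally} consistent: moving an endpoint to fix one edge's configuration must not break another edge's already-correct configuration nor alter any adjacency. The crux is that the neighbourhood-containment data used in \ref{enum:n1} and the spanning-pair data used in \ref{enum:n2} are intrinsic to $G$ and independent of the representation, so the \emph{targets} of all repairs are fixed in advance and cannot conflict with one another; what must be checked carefully is that each geometric move preserves the adjacency relation exactly. I expect this consistency argument—rather than any single repair—to be where the real work lies, and I would lean on the hypotheses of no universal vertices and no true twins to rule out the degenerate endpoint coincidences that would otherwise obstruct the strict inequalities required by \ref{enum:n1} and \ref{enum:n2}.
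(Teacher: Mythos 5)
The paper does not prove Theorem~\ref{thm:hsu} at all --- it is imported as a black box from Hsu~\cite{hsu}, where the existence of a normalized representation is a substantial technical result --- so your attempt is being measured against Hsu's argument rather than anything in this paper. Your overall strategy (start from an arbitrary representation and repair each edge's geometry to match its combinatorial type) is indeed the standard route, and parts of it are correct and even automatic: in \emph{any} representation, proper containment of arcs already implies containment of closed neighbourhoods, and a pair $\{u,v\}$ whose arcs cover the circle is already forced to be a spanning pair (a witness $x\notin N[v]$ with a neighbour $y\notin N[u]$ would have its arc inside $u$'s arc while $y$'s arc is disjoint from $u$'s arc, contradicting $xy\in E$). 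So the only repairs ever needed are: realize each inclusion edge as a geometric containment, and realize each $2$-overlap edge as a double cover.

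The gap is that the proposal stops exactly where the theorem begins. Your central claim --- that the repairs ``cannot conflict with one another'' because their targets are intrinsic to $G$ --- is an assertion, not an argument, and it fails at the level of individual moves: growing $u$'s arc to absorb $v$'s arc (to realize $N[u]\supseteq N[v]$) destroys a previously correct containment of $u$'s arc inside some $x$'s arc with $N[x]\supseteq N[u]$ whenever $v$'s arc does not yet lie inside $x$'s arc, which is precisely another repair not yet performed. In that situation the number of violated conditions does not decrease, so the extremal ``minimize the number of violations'' argument does not close as stated; one needs either a careful order of repairs (e.g.\ processing inclusion edges along the partial order of neighbourhood containment) or a genuinely global construction, and that is the actual content of Hsu's proof. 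The treatment of \ref{enum:n2} has a second, independent problem: you justify pushing a $2$-overlap pair to cover the circle by arguing that the spanning-pair conditions force every arc to lie within $[l_u,r_u]\cup[l_v,r_v]$, but the inference from $N[x]\subseteq N[u]$ to geometric containment of arcs is only valid in a representation that is already normalized, so the argument is circular, and no concrete move is exhibited that creates the double cover without introducing new intersections or disturbing other pairs. As it stands the sketch identifies the right local targets but does not prove they can be achieved simultaneously, which is the whole theorem.
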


\begin{lemma}\label{lem:avoid}
Let $\{[l_u,r_u]\}_{u\in V}$ be a normalized circular-arc representation of
$G=(V,E)$. Let $xy\in E$ and $z,w\in V$ be such that $l_x<l_z<l_y<l_w$.
Then at least one of $z$, $w$ does not avoid the edge $xy$.
\end{lemma}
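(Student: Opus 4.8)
The plan is to translate the combinatorial notion of avoidance into the geometry of the arcs and then split on which of $x,y$ has its left endpoint inside the other's arc. Throughout I would use the dictionary that normalization provides: for $u,v\in V$ the arcs are disjoint iff $uv\notin E$, one arc properly contains the other iff $uv$ is an inclusion edge (by \ref{enum:n1}), and the arcs meet with neither containing the other (standard overlap or the covering case of \ref{enum:n2}) iff $u$ overlaps $v$. In these terms, ``$z$ avoids $xy$'' unwinds to: ($z$ is non-adjacent to, or overlaps, $x$) and ($z$ is non-adjacent to, or overlaps, $y$), and not ($z$ overlaps both $x,y$ while $x,y$ overlap each other).

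First I would record an elementary fact about two intersecting arcs: at least one of $l_x\in[l_y,r_y]$ and $l_y\in[l_x,r_x]$ holds. Indeed, if $l_y\notin[l_x,r_x]$ then $l_y$ lies in the complementary gap $(r_x,l_x)$, and since the arcs meet, arc $y$ must reach $[l_x,r_x]$; travelling clockwise from $l_y$ the first point of arc $x$ encountered is $l_x$, so $l_x\in[l_y,r_y]$. Because $xy\in E$ the arcs intersect, so this dichotomy applies.

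The core is the following sub-claim: if $l_x<l_z<l_y<r_x$ (so $l_z$ and $l_y$ both lie in the interior of arc $x$, with $l_z$ first), then $z$ does not avoid $xy$. Since $l_z$ is interior to arc $x$, the vertex $z$ is adjacent to $x$. If $z$ does not overlap $x$, condition~(i) already fails. Otherwise $z$ overlaps $x$, so arc $z$ is not contained in arc $x$; starting at $l_z$ inside arc $x$ and running clockwise, arc $z$ can only leave arc $x$ through $r_x$, whence arc $z\supseteq[l_z,r_x]\ni l_y$ and $z$ is adjacent to $y$. Now if $z$ does not overlap $y$, condition~(i) fails; and if $z$ overlaps $y$, then $xy$ cannot be an inclusion edge, for were it inclusion the fact $l_y\in$ arc $x$ would force $N[x]\supseteq N[y]$, hence $r_y<r_x$, and then arc $z$, containing $[l_z,r_x]\supseteq[l_y,r_y]$, would contain arc $y$, contradicting that $z$ overlaps $y$. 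So $x,y$ overlap, and with $z$ overlapping both, condition~(ii) fails. In every branch $z$ fails to avoid $xy$.

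Finally I would close the lemma by applying the sub-claim twice. If $l_y\in[l_x,r_x]$, then $l_x<l_z<l_y<r_x$ and the sub-claim gives that $z$ does not avoid $xy$. If instead $l_x\in[l_y,r_y]$, I would apply the sub-claim to the relabelled instance obtained by swapping $x\leftrightarrow y$ and $z\leftrightarrow w$: the cyclic order $l_x<l_z<l_y<l_w$ is invariant under this swap (reading clockwise from $l_y$ gives $l_y<l_w<l_x$), the hypothesis becomes $l_y<l_w<l_x<r_y$, and avoidance of an edge is symmetric in its endpoints, so the sub-claim yields that $w$ does not avoid $xy$. The main obstacle is the core sub-claim, and specifically the bookkeeping that ``$z$ overlaps $x$'' forces arc $z$ to swallow $l_y$ and hence meet $y$: here one must argue that a clockwise arc beginning inside arc $x$ can exit only through $r_x$, and must rule out the inclusion-edge possibility before invoking condition~(ii). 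Everything else is the normalization dictionary together with clockwise-order arithmetic.
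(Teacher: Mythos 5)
Your proof is correct and follows essentially the same route as the paper's: reduce by the symmetry $x\leftrightarrow y$, $z\leftrightarrow w$ to the case $l_x<l_z<l_y<r_x$, deduce from normalization that $xz$ being an overlap edge forces $r_z$ past $r_x$ and hence $yz\in E$, and conclude that either condition (i) or condition (ii) of avoidance fails. The only difference is cosmetic: the paper runs the argument as a single contradiction showing all three of $xy,xz,yz$ are overlap edges, while you package the same deductions as a reusable sub-claim applied twice.
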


\begin{proof}
For contradiction, assume that both $z$ and $w$ do avoid $xy$.  Since $xy\in E$,
it is not the case that $l_x<r_x<l_y<r_y$.  Thus by symmetry, we may assume that
$l_x<l_y<r_x$.  Therefore $l_x<l_z<r_x$.  This implies $xz\in E$. Thus since $z$
avoids $xy$, the edge $xz$ is  an overlap edge. In particular,
$N[z]\not\subseteq N[x]$.  Therefore $l_z<r_x<r_z$. Altogether, we have
$l_z<l_y<r_x<r_z$. Thus $yz\in E$ and again $yz$ is an overlap edge. In
particular, $N[y]\not\subseteq N[z]$. This implies $l_y<r_z<r_y$.  From this we
deduce that $l_y<r_x<r_y$. This means, however, that neither $[l_x,r_x]$
contains $[l_y,r_y]$, nor $[l_y,r_y]$ contains $[l_x,r_x]$.  Thus $xy$ is also
an overlap edge. But all three edges $xy, xz, yz$ are now overlap edges, which
is impossible, since $z$ avoids the edge $xy$.  This completes the proof.
\end{proof}

\section{Circular pairs and the circular completion}\label{sec:completions}

\begin{definition}
We say that $\{u,v\}$ is a \uline{circular pair} if $\{u,v\}$ is a spanning pair and
$uv\not\in E$.\smallskip

A vertex $u$ is \uline{circularly-paired} if there exists $v\in V$ such that
$\{u,v\}$ is a circular pair.\smallskip

We say that $G$ is \uline{circularly-paired} if every $u\in V$ is
circularly-paired.\smallskip

A \uline{circular completion} of $G$ is a smallest circularly-paired graph $H$
such that

\begin{compactenum}[\quad(i)]
\item $H$ contains $G$ as an induced subgraph,
\item every edge $uv\in E(G)$ has the same type both in $G$ and in $H$.
\end{compactenum}
\end{definition}

Note that the notion of circular completion requires that $G$ has no universal
vertices (there is no ``complement'' of a universal vertex).  The following are
essential properties of circular pairs.

\begin{lemma}\label{lem:circ1}
Assume that $G$ has no universal vertices or true twins. Let $\{u,v\}$ be a circular pair.
\begin{compactenum}[\quad\rm (\thetheorem.1)]
\item $N[u]\supseteq N[w]$ if and only if $vw\not\in E$.
\setreflabel{\rm(\thetheorem.\theenumi)}\label{enum:circ1-0}
\item $N[u]\subseteq N[w]$ if and only if $vw$ is a 2-overlap edge.
\setreflabel{\rm(\thetheorem.\theenumi)}\label{enum:circ1-i}
\item $uw$ is a 1-overlap edge if and only if $vw$ is a 1-overlap edge.
\setreflabel{\rm(\thetheorem.\theenumi)}\label{enum:circ1-ii}
\item If $\{u,w\}$ is a circular pair, then $v=w$.
\setreflabel{\rm(\thetheorem.\theenumi)}\label{enum:circ1-iii}
\end{compactenum}
\end{lemma}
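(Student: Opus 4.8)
The plan is to prove the four parts in order, using the earlier parts together with the $u\leftrightarrow v$ symmetry of the hypothesis as tools for the later ones. First I would record two facts used throughout. Since $\{u,v\}$ is a spanning pair, applying condition~(C1) to any $x\in V\setminus N[v]$ gives $x\in N[x]\subseteq N[u]$, so $N[u]\cup N[v]=V$; and since $G$ has no universal vertices, $N[u]\neq V\neq N[v]$. I would also note that a circular pair is symmetric (both the spanning conditions and $uv\notin E$ are symmetric in $u,v$), so $\{v,u\}$ is a circular pair as well, and every statement has a mirror version with $u,v$ interchanged. In particular, once \ref{enum:circ1-0} and \ref{enum:circ1-i} are proved, their symmetric forms $N[v]\supseteq N[w]\iff uw\notin E$ and $N[v]\subseteq N[w]\iff uw\text{ is 2-overlap}$ come for free.

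For \ref{enum:circ1-0}, both directions are immediate. If $N[u]\supseteq N[w]$ and $vw\in E$, then $v\in N[w]\subseteq N[u]$, contradicting $uv\notin E$; conversely, if $vw\notin E$ then $w\in V\setminus N[v]$ and condition~(C1) gives $N[w]\subseteq N[u]$ directly. (The degenerate values $w=u,v$ check out: for $w=v$ the left side fails since $N[u]=V$ is impossible.)

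The main obstacle is \ref{enum:circ1-i}, where I would assume $w\neq u$ (for $w=u$ the equivalence is degenerate, since $vu\notin E$). For the ``if'' direction, suppose $vw$ is a 2-overlap edge, so $\{v,w\}$ is a spanning pair; to get $N[u]\subseteq N[w]$, take $a\in N[u]$ and suppose $a\notin N[w]$, so the spanning condition for $\{v,w\}$ yields $N[a]\subseteq N[v]$. If $a=u$ this gives $N[u]\subseteq N[v]$, hence $N[v]=V$, contradicting non-universality; if $a\neq u$ then $u\in N[a]\subseteq N[v]$, contradicting $uv\notin E$. For the ``only if'' direction, assume $N[u]\subseteq N[w]$. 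Then $vw\in E$, for otherwise \ref{enum:circ1-0} gives $N[w]\subseteq N[u]$, so $N[u]=N[w]$ makes $u,w$ true twins (using $w\neq u$ and that $G$ has no true twins). The edge overlaps: $u\in N[u]\subseteq N[w]$ while $u\notin N[v]$ shows $N[w]\not\subseteq N[v]$, and $N[v]\subseteq N[w]$ would force $N[w]=V$, impossible. Finally $\{v,w\}$ is a spanning pair: if $x\notin N[w]$ then $x\notin N[u]$, so condition~(C2) gives $N[x]\subseteq N[v]$; and if $y\notin N[v]$ then condition~(C1) gives $N[y]\subseteq N[u]\subseteq N[w]$. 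Hence $vw$ is a 2-overlap edge.

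Part \ref{enum:circ1-ii} then follows by bookkeeping. Unfolding the definitions, $uw$ is a 1-overlap edge exactly when $uw\in E$, $uw$ is not an inclusion edge, and $uw$ is not 2-overlap; by the symmetric forms of \ref{enum:circ1-0} and \ref{enum:circ1-i} these amount to the conjunction $N[w]\not\subseteq N[v]$, $N[w]\not\subseteq N[u]$, $N[u]\not\subseteq N[w]$, $N[v]\not\subseteq N[w]$. The same computation for $vw$, now using \ref{enum:circ1-0} and \ref{enum:circ1-i} directly, produces the identical four conditions, so the two are equivalent. (Equivalently, one proves only $uw\text{ 1-overlap}\Rightarrow vw\text{ 1-overlap}$ and invokes the $u\leftrightarrow v$ symmetry.) For \ref{enum:circ1-iii}, if $\{u,w\}$ is also a circular pair, then applying \ref{enum:circ1-0} to both pairs gives, for every vertex $x$, that $vx\notin E\iff N[u]\supseteq N[x]\iff wx\notin E$; hence $N[v]=N[w]$, and since $G$ has no true twins, $v=w$.
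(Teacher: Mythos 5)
Your proof is correct and follows essentially the same route as the paper's: part (1) falls out of the spanning-pair conditions, part (2) carries the main weight (with the same true-twin and universal-vertex contradictions), and parts (3) and (4) are deduced from the earlier parts using the $u\leftrightarrow v$ symmetry of the circular pair. The only differences are cosmetic --- you unfold part (3) into four symmetric non-containment conditions where the paper eliminates the edge types case by case, and you derive part (4) from part (1) rather than directly from condition (C2) --- but the substance is identical.
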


\begin{proof}
Since $\{u,v\}$ is a circular pair, $uv\not\in E$ and $\{u,v\}$ is a spanning
pair.

For \ref{enum:circ1-0}, if $vw\not\in E$, then $N[w]\subseteq N[u]$ by
\ref{enum:c1} applied to $\{u,v\}$. Conversely, if $N[w]\subseteq N[u]$, then
$vw\not\in E$, since $uv\not\in E$. This proves \ref{enum:circ1-0}.

For \ref{enum:circ1-i}, if $vw$ is a 2-overlap edge, then $\{v,w\}$ is a
spanning pair, and so $N[u]\subseteq N[w]$ by \ref{enum:c2} applied to $\{v,w\}$
since $uv\not\in E$. 

Conversely assume $N[u]\subseteq N[w]$.  If $vw\not\in E$, then $N[w]\subseteq
N[u]$ by \ref{enum:c1} applied to $\{u,v\}$. So $N[u]=N[w]$, which is impossible,
since $G$ has no true twins.  Therefore $vw\in E$.  Clearly, $N[w]\not\subseteq
N[v]$, since $u\in N[w]\setminus N[v]$. Suppose that $N[v]\subseteq N[w]$ and
recall that $N[u]\subseteq N[w]$.  Since $G$ has no universal vertex, there
exists a vertex $x\in V\setminus N[w]$. This means $x\not\in N[u]$ and $x\not\in
N[v]$ contrary to \ref{enum:c1}.  Therefore $N[v]\not\subseteq N[w]$ and so $vw$
is not an inclusion edge.  It remains to prove that $\{v,w\}$ is a spanning
pair.  For \ref{enum:c1}, consider a vertex $x\in V\setminus N[v]$. Then
$N[x]\subseteq N[u]$ by \ref{enum:c1} for $\{u,v\}$. Thus $N[x]\subseteq N[w]$,
since $N[u]\subseteq N[w]$.  For \ref{enum:c2}, consider a vertex $x\in
V\setminus N[w]$. Then $x\in V\setminus N[u]$, since $N[u]\subseteq N[w]$.  Thus
$N[x]\subseteq N[v]$ by \ref{enum:c2} for $\{u,v\}$.  This therefore verifies
conditions \ref{enum:c1}-\ref{enum:c2} for $\{v,w\}$ and so $vw$ is indeed a
2-overlap edge. This proves \ref{enum:circ1-i}.

For \ref{enum:circ1-ii}, suppose that $uw$ is a 1-overlap edge. By
\ref{enum:circ1-0}, we see that $vw$ must be an edge. Clearly,
$N[w]\not\subseteq N[v]$, since $u\in N[w]\setminus N[v]$.  If $N[v]\subseteq
N[w]$, then we would deduce by part \ref{enum:circ1-i} that $uw$ is a 2-overlap
edge, a contradiction.  Thus $vw$ is not an inclusion edge.  If $vw$ were a
2-overlap edge, then $N[u]\subseteq N[w]$ by \ref{enum:circ1-i} and $uw$ would
be an inclusion edge, a contradiction.  Therefore  $vw$ is not a 2-overlap edge,
and so it must be a 1-overlap edge. This proves \ref{enum:circ1-ii}.

Finally, for \ref{enum:circ1-iii}, if $\{u,w\}$ is a circular pair, then
$uw\not\in E$ and thus $N[w]\subseteq N[v]$ by \ref{enum:c2} applied to
$\{u,v\}$. Likewise, $N[v]\subseteq N[w]$ by \ref{enum:c2} applied to $\{u,w\}$,
since $uv\not\in E$.  But now $N[v]=N[w]$, impossible since $G$ has no true
twins.

This concludes the proof.
\end{proof}

\begin{lemma}\label{lem:normalized}
Assume that $G$ has no universal vertices or true twins. Suppose there exists
a set $X\subseteq V$ such that
\begin{compactenum}[(i)]
\item each vertex in $V\setminus X$ is circularly-paired with a vertex in $X$,
and
\item $G[X]$ admits a circular-arc representation $\{[l_u,r_u]\}_{u\in V}$ which
is normalized with respect to $G$.
\end{compactenum}
Then $G$ is a circular-arc graph.
\end{lemma}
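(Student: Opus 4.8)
The plan is to build a circular-arc representation of all of $G$ by extending the given normalized representation of $G[X]$, placing each vertex of $V\setminus X$ as the \emph{complement} of the arc of its circular partner. First I would set up the partner map: by hypothesis~(i) every $w\in V\setminus X$ lies in a circular pair $\{u,w\}$ with $u\in X$, and by~\ref{enum:circ1-iii} such a $u$ is unique, so $w\mapsto u$ is a well-defined map $\phi\colon V\setminus X\to X$, which is injective since two new vertices with a common partner would coincide (again by~\ref{enum:circ1-iii}). For $w\in V\setminus X$ with $u=\phi(w)$ having arc $[l_u,r_u]$, I would assign to $w$ a slight shrinking $[\,r_u+\varepsilon_w,\,l_u-\varepsilon_w\,]$ of the complementary arc $C_u=(r_u,l_u)$, choosing the $\varepsilon_w>0$ pairwise distinct and smaller than every gap between consecutive endpoints of the original representation. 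Since $\phi$ is injective the new arcs use fresh endpoints, so all endpoints remain distinct; and since in a normalized representation every incidence is strict, a sufficiently small shrinking preserves exactly which arcs $C_u$ does and does not meet.

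The geometric fact I would isolate is that, for any arc $z$, the complementary arc $C_u$ meets $z$ if and only if $z$ is not properly contained in $[l_u,r_u]$, with the resulting overlap type governed by how $z$ sits relative to $u$. For $z\in X$ this gives four cases, read off from~\ref{enum:n1}--\ref{enum:n2}: $z$ properly inside $u$ (so $N[u]\supseteq N[z]$), $u$ properly inside $z$ (so $N[u]\subseteq N[z]$), $u,z$ a $2$-overlap edge, or $u,z$ a $1$-overlap edge (together with the case $uz\notin E$). Applying~\ref{enum:circ1-0}, \ref{enum:circ1-i} and~\ref{enum:circ1-ii} to the circular pair $\{u,w\}$ translates each of these into the corresponding statement about $wz$ in $G$; comparing the two lists shows that $C_u$ meets $z$ exactly when $wz\in E$. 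In particular $uw\notin E$, matching that $C_u$ is disjoint from $[l_u,r_u]$. Hence the new arc of $w$ correctly reproduces all adjacencies between $w$ and the vertices of $X$.

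It remains to treat a pair of new vertices $w,w'$ with partners $u=\phi(w)$ and $u'=\phi(w')$ (distinct, since $\phi$ is injective). Their arcs meet iff $C_u\cap C_{u'}\neq\emptyset$, and two complementary arcs are disjoint exactly when $[l_u,r_u]\cup[l_{u'},r_{u'}]$ covers the circle, i.e.\ when $uu'$ is a $2$-overlap edge by~\ref{enum:n2}. Thus the task reduces to the purely graph-theoretic claim that $ww'\notin E$ if and only if $uu'$ is a $2$-overlap edge. I would prove this by a short exhaustive case analysis on the relation of $u$ and $u'$ in $G$ ($2$-overlap, $1$-overlap, inclusion in either direction, or non-adjacency): in each case one first pins down the type of the edge $u'w$ using~\ref{enum:circ1-0}--\ref{enum:circ1-ii} for $\{u,w\}$, and then decides $ww'$ by feeding the resulting comparison of $N[u']$ with $N[w]$ into~\ref{enum:circ1-0}--\ref{enum:circ1-i} for $\{u',w'\}$; the no-universal-vertex and no-true-twins hypotheses are exactly what rule out the degenerate equalities of neighbourhoods. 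Assembling the three groups of pairs (both in $X$, one in each, both new) then shows the extended family of arcs is a circular-arc representation of $G$.

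The hard part, requiring the most care, is this last new--new case. Unlike the other pairs it is not read off a single normalized arc relation but must be routed through two circular pairs simultaneously, and the delicate point is fixing the \emph{direction} of the inclusion edge $u'w$ when $uu'$ is a $2$-overlap edge: there the lemma items only certify that $u'w$ is an inclusion edge, and one must invoke the spanning-pair condition~\ref{enum:c1} of $\{u,u'\}$ to conclude $N[w]\subseteq N[u']$ and hence (via~\ref{enum:circ1-0} for $\{u',w'\}$) that $ww'\notin E$. Verifying that these neighbourhood comparisons stay consistent across all cases is the crux of the argument.
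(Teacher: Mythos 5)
Your proposal is correct and follows essentially the same route as the paper: assign to each unpaired vertex the complement of its partner's arc (the partner being unique by \ref{enum:circ1-iii}), then verify adjacencies to $X$ via \ref{enum:n1} together with \ref{enum:circ1-0}, and adjacencies between two new vertices via \ref{enum:n2} together with \ref{enum:circ1-0} and \ref{enum:circ1-i}. The only differences are cosmetic: you are more careful about keeping endpoints distinct (which the paper glosses over), and your exhaustive case analysis for the new--new pairs can be compressed into the paper's two-step chain ($uu'$ is a 2-overlap edge $\iff N[w]\subseteq N[u']$ by \ref{enum:circ1-i} for $\{u,w\}$, $\iff ww'\notin E$ by \ref{enum:circ1-0} for $\{u',w'\}$).
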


\begin{proof}
By (ii), let $\{[l_u,r_u]\}_{u\in X}$ denote a circular-arc representation of
$G[X]$ normalized with respect to $G$. By (i), for each $v\in V\setminus X$,
there exists vertex $u\in X$ such that $\{u,v\}$ is a circular pair; by
\ref{enum:circ1-iii} such pair $\{u,v\}$ is unique (both with respect to $u$ and
to $v$), since $G$ has no true twins. Assign to $v$ the complement of the arc
$[l_u,r_u]$, and let $\bar u$ refer to the vertex $v$.

We claim that this yields a circular-arc representation of $G$.  Suppose
otherwise. There are two possibilities. Either the representation implies edges
that are not in $G$, or misses edges that are in $G$.  Clearly, all edges of
$G[X]$ are represented, by definition. Other edges are handled as follows.

First, consider $u,x\in X$ such that $\bar u$ exists.  Then, by construction,
the arcs for $\bar u$ and $x$ are disjoint if and only if the arc for $x$ is
contained in the arc for $u$.  Since the representation is normalized with
respect to $G$,  this holds by \ref{enum:n1} if and only if $N[x]\subseteq
N[u]$.  So by \ref{enum:circ1-0}, this is if and only if $\bar ux\not\in E$.

Next, consider $u,v\in X$ such that $\bar u$, $\bar v$ exist.  By construction,
the arcs for $\bar u$ and $\bar v$ are disjoint if and only if the arcs
representing $u$ and $v$ cover the circle. By \ref{enum:n2}, this is if and only
if $uv$ is a 2-overlap edge of $G$. So by \ref{enum:circ1-0} and
\ref{enum:circ1-i}, this is if and only if $\bar u\bar v\not\in E$.

This completes the proof.
\end{proof}

\begin{lemma}
Let $G$ be a graph with no universal vertices or true twins and let $G'$ be a
circular completion of $G$. Then:
\begin{compactenum}[\quad\rm (\thetheorem.1)]
\item $G'$ has no universal vertices or true twins,
\setreflabel{\rm (\thetheorem.\theenumi)}\label{enum:compl-nounivtt}
\item at least one vertex in every circular pair of $G'$ belongs to $V(G)$.
\setreflabel{\rm (\thetheorem.\theenumi)}\label{enum:compl-minimal}
\item $|V(G')|\geq 2|V(G)|-|S|$, where $S$ is the set of circularly-paired
vertices in $G$.
\setreflabel{\rm (\thetheorem.\theenumi)}\label{enum:compl-cardinality}
\addtocounter{claim}{\theenumi}
\end{compactenum}
\end{lemma}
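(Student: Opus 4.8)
The plan is to isolate a single monotonicity fact about edge types and then derive all three parts from it, using the minimality built into the definition of circular completion. The fact is the following \emph{type stability claim}: for every induced subgraph $K$ of $G'$ with $V(G)\subseteq V(K)\subseteq V(G')$, each edge of $G$ has the same type in $K$ as it has in $G$. I would prove this by observing that both closed-neighbourhood containment and the spanning-pair conditions \ref{enum:c1}--\ref{enum:c2} are monotone under passing to induced subgraphs: if $N[u]\subseteq N[v]$ holds in $G'$ it still holds in $K$, and a spanning pair of $G'$ restricts to a spanning pair of $K$, which restricts further to one of $G$. An overlap edge of $G$ remains an overlap edge in $K$ because its two witnesses of incomparability lie in $V(G)\subseteq V(K)$; an inclusion edge of $G$ is, by property (ii) of the completion, an inclusion edge of $G'$ and hence of $K$. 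For an overlap edge, sandwiching the spanning status of $K$ between that of $G'$ and that of $G$ and using property (ii) to tie the two ends together forces the $1$-/$2$-overlap classification in $K$ to agree with that in $G$. I expect this claim to be the main obstacle, as it is exactly what makes the vertex deletions below safe.

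For \ref{enum:compl-nounivtt}, no vertex of $G'$ is universal: a universal vertex has no non-neighbour and so cannot lie in a circular pair, contradicting that $G'$ is circularly-paired. For the absence of true twins, suppose $a\neq b$ with $N_{G'}[a]=N_{G'}[b]$; at least one of them, say $b$, lies outside $V(G)$, for otherwise $a,b$ would be true twins of $G$. I would delete $b$ and show that $H:=G'-b$ satisfies all requirements of a circular completion, contradicting minimality. It contains $G$ as an induced subgraph and, by the type stability claim (with $K=H$), preserves the type of every edge of $G$. The one delicate point is that $H$ stays circularly-paired: any vertex $y$ whose only partner in $G'$ was $b$ can be re-paired with $a$, since $N[a]=N[b]$ makes $\{y,a\}$ a spanning pair exactly when $\{y,b\}$ is, and $ya\notin E$ follows from $yb\notin E$; thus $\{y,a\}$ is a circular pair of $G'$ and hence of $H$.

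With \ref{enum:compl-nounivtt} established, $G'$ has no universal vertices or true twins, so \ref{enum:circ1-iii} now applies inside $G'$ and gives each vertex a \emph{unique} circular partner $\bar x$; in particular the circular pairs form a perfect matching of $V(G')$. For \ref{enum:compl-minimal} I would use minimality once more: if some circular pair $\{a,b\}$ had both vertices outside $V(G)$, then by uniqueness $a$ and $b$ are partnered only with each other, so deleting both leaves every remaining vertex with its partner intact. The resulting $H'=G'-\{a,b\}$ is therefore circularly-paired, contains $G$ as an induced subgraph, preserves all edge types of $G$ by the type stability claim, and is strictly smaller than $G'$ — contradicting minimality.

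Finally, for \ref{enum:compl-cardinality} I would count the new vertices through the matching. If $u\in V(G)$ has its partner $\bar u$ also in $V(G)$, then $\{u,\bar u\}$ is a spanning pair of $G'$, hence (by monotonicity with $K=G$) a spanning pair of $G$, and it is a non-edge of $G$; thus $\{u,\bar u\}$ is a circular pair of $G$ and $u\in S$. Contrapositively, every $u\in V(G)\setminus S$ has its partner outside $V(G)$, and by uniqueness the assignment $u\mapsto\bar u$ is injective, so it produces at least $|V(G)|-|S|$ distinct new vertices. Hence $|V(G')|\geq |V(G)|+\bigl(|V(G)|-|S|\bigr)=2|V(G)|-|S|$, as claimed.
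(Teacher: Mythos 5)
Your proposal is correct and follows essentially the same strategy as the paper: establish that all edge types of $G$ are preserved when vertices of $V(G')\setminus V(G)$ are deleted, invoke the minimality of the circular completion to forbid such deletions, and then count using the uniqueness of circular partners from \ref{enum:circ1-iii}. The only differences are cosmetic — the paper packages minimality into a single claim ("$G'-X$ is not circularly-paired") and derives contradictions from an unpaired vertex, whereas you directly exhibit the smaller graph as circularly-paired, and you count new vertices via an injective partner map rather than counting pairs — but the underlying argument is the same.
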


\begin{proof}
First, we shall prove the following claim.

\begin{claim}\label{clm:delX}
Let $X\subseteq V(G')\setminus V(G)$ such that $X$ is nonempty. Then $G'-X$
is not circularly-paired.
\end{claim}

\begin{proofclaim}
We claim that every edge in $G'-X$ that also belongs to $G$ has the same type in $G'-X$ as it has in $G$. This can be seen as follows. Let $xy\in E(G)$. If $xy$ is an overlap edge in $G$, then clearly, it is also an overlap edge in any supergraph of $G$ which contains $G$ as an induced subgraph. Therefore, it is also an overlap edge in $G'-X$. Next, suppose that $xy$ is an inclusion edge in $G$ with $N_G[x]\subseteq N_G[y]$. As $G$ has no true twins, we have $N_G[x]\subset N_G[y]$. By definition of circular completion, $xy$ is an inclusion edge in $G'$, which can only mean that $N_{G'}[x]\subset N_{G'}[y]$. These two facts together imply that $N_{G'-X}[x]\subset N_{G'-X}[y]$. Therefore, $xy$ is an inclusion edge in $G'-X$ too. Finally, suppose $xy$ is a 2-overlap edge in $G$, or in other words, $xy$ is an overlap edge and $\{x,y\}$ a spanning pair in $G$. By what we showed above, it follows that $xy$ is an overlap edge in $G'-X$. If $\{x,y\}$ is not a spanning pair in $G'-X$, then in $G'-X$, there exists some vertex $z$ that is nonadjacent to one of $x,y$ and overlaps the other. As $G'-X$ is an induced subgraph of $G'$, this would mean that $z$ is nonadjacent to one of $x,y$ and overlaps the other in $G'$ too. Then clearly, $\{x,y\}$ is not a spanning pair in $G'$, contradicting the fact that $xy$ is a 2-overlap edge of $G'$ (recall that since $G'$ is a circular completion of $G$, the edge $xy$ has the same type in $G$ and $G'$). Therefore, $\{x,y\}$ is a spanning pair of $G'-X$, implying that $xy$ is a 2-overlap edge of $G'-X$. This shows that every edge of $G'-X$ that is also in $G$ has the same type in both the graphs. Now, since $G'-X$ is a smaller graph than $G'$, it must be the case that $G'-X$ is not circularly-paired.
\end{proofclaim}

The following claim is easy to see.
\begin{claim}\label{clm:circpairindsubgraph}
If $H$ is any graph and $\{x,y\}$ is a circular pair in $H$, then $\{x,y\}$ is a circular pair in any induced subgraph of $H$ that contains both $x$ and $y$.
\end{claim}

Now, we shall prove \ref{enum:compl-nounivtt}. Clearly, since $G'$ is a circularly-paired graph, there are no universal vertices in $G'$, as every vertex is nonadjacent to a vertex with which it forms a circular pair. Suppose that $x,y\in V(G')$ are true twins. Clearly, both $x$ and $y$ cannot be in $V(G)$ as if that were the case, they would be true twins in $G$ as well. So we shall assume without loss of generality that $y\in V(G')\setminus V(G)$. Now, consider the graph $G'-y$. By \ref{clm:delX}, we know that there exists a vertex $u$ that is not circularly-paired in $G'-y$. As $G'$ is circularly-paired, there exists some vertex $v$ such that $\{u,v\}$ is a circular pair in $G'$. If $v\neq y$, then by \ref{clm:circpairindsubgraph}, $\{u,v\}$ is also a circular pair in $G'-y$, contradicting the fact that $u$ is not circularly-paired in $G'-y$. Thus, we can conclude that $v=y$. But as $x$ and $y$ are true twins, the fact that $\{y,u\}$ is a circular pair in $G'$ implies that $\{x,u\}$ is also a circular pair in $G'$. This means, by \ref{clm:circpairindsubgraph}, that $\{x,u\}$ is also a circular pair in $G'-y$, again contradicting the fact that $u$ is not circularly-paired in $G'-y$. This proves \ref{enum:compl-nounivtt}.

Next, we shall prove \ref{enum:compl-minimal}. Suppose that there is a circular pair $\{u,v\}$ in $G'$ such that $u,v\in V(G')\setminus V(G)$. By \ref{clm:delX}, $G'-\{u,v\}$ is not circularly-paired. Let $x$ be a vertex in $G'-\{u,v\}$ that is not circularly-paired. Since $G'$ is circularly-paired, we know that there exists some vertex $y$ such that $\{x,y\}$ is a circular pair in $G'$. Then, by \ref{clm:circpairindsubgraph}, it must be the case that $y\in\{u,v\}$. We can assume without loss of generality that $y=u$, i.e. $\{x,u\}$ is a circular pair in $G'$. But now, $\{x,u\}$ and $\{u,v\}$ are both circular pairs in $G'$, which, by \ref{enum:compl-nounivtt}, is a graph without universal vertices or true twins. Since $u$, $v$ and $x$ are distinct vertices of $G'$, this contradicts \ref{enum:circ1-iii} and completes the proof of \ref{enum:compl-minimal}.

Finally, we prove \ref{enum:compl-cardinality}. Since by \ref{enum:compl-nounivtt}, $G'$ is a graph with no universal vertices or true twins, we know by \ref{enum:circ1-iii} that for every vertex $u\in V(G')$, there is a unique vertex $\bar u\in V(G')$ such that $\{u,\bar u\}$ is a circular pair in $G'$. Therefore, there are exactly $|V(G')|/2$ circular pairs in $G'$ and they form a partition of $V(G')$. Out of these, at most $|S|/2$ pairs consist only of vertices from $V(G)$, by \ref{clm:circpairindsubgraph}. By \ref{enum:compl-minimal}, exactly one vertex from the remaining at least $(|V(G')|-|S|)/2$ pairs is from $V(G')\setminus V(G)$. Therefore, we can conclude that $|V(G')|-|V(G)|\geq (|V(G')|-|S|)/2$ which gives $|V(G')|\geq 2|V(G)|-|S|$. This proves \ref{enum:compl-cardinality}.
\end{proof}

\begin{lemma}\label{lem:complexistence}
Let $G$ be a graph with no universal vertices or true twins, and let $S$
denote the set of circularly-paired vertices of $G$. Then $G$ has a circular
completion $G'$ with $|V(G')|=2|V(G)|-|S|$.
\end{lemma}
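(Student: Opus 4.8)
The plan is to prove the lemma constructively, by exhibiting one circularly-paired supergraph of $G$ that satisfies the two conditions of the circular-completion definition and has exactly $2|V(G)|-|S|$ vertices. Combined with the lower bound \ref{enum:compl-cardinality} from the previous lemma, this pins down the order of \emph{every} circular completion: a circular completion is a smallest circularly-paired graph meeting conditions (i)--(ii), so its order is at most $|V(H)|=2|V(G)|-|S|$, while \ref{enum:compl-cardinality} forces it to be at least $2|V(G)|-|S|$. Hence the two bounds meet and the constructed graph $H$ is itself a circular completion of the required size.

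To build $H$, I add a fresh partner only where one is missing: set $V(H)=V(G)\cup\{\bar u:u\in V(G)\setminus S\}$, leave all adjacencies inside $V(G)$ unchanged (so $G$ is an induced subgraph), and define the new adjacencies by the rule that $\bar u$ should receive the arc complementary to $u$ in a normalized representation. Concretely, for an original vertex $w\neq u$ I set $w\bar u\in E(H)$ iff $N_G[w]\not\subseteq N_G[u]$ (so in particular $u\bar u\notin E(H)$), and for distinct unpaired $u,v$ I set $\bar u\bar v\in E(H)$ iff $uv$ is not a $2$-overlap edge of $G$. Since exactly $|V(G)\setminus S|=|V(G)|-|S|$ new vertices are added, $|V(H)|=2|V(G)|-|S|$, which is the claimed count.

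The technical core is a trio of containment characterizations, all proved directly from these rules rather than from Lemma~\ref{lem:circ1} (which instead guided the rules). First I would show the preservation lemma: for $w,w'\in V(G)$, $N_H[w]\subseteq N_H[w']$ iff $N_G[w]\subseteq N_G[w']$. The converse direction restricts to the induced subgraph $V(G)$ and is immediate; the forward direction uses that a complement $\bar a\in N_H[w]$ witnesses $N_G[w]\not\subseteq N_G[a]$, which is inherited by any $w'$ with $N_G[w]\subseteq N_G[w']$. This already shows inclusion and overlap edges keep their type inside $V(G)$. I would then establish the companion characterizations $N_H[w]\subseteq N_H[\bar u]\iff wu\notin E(G)$, $\;N_H[\bar u]\subseteq N_H[w]\iff uw$ is a $2$-overlap edge of $G$, and $N_H[\bar u]\subseteq N_H[\bar v]\iff N_G[v]\subseteq N_G[u]$. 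With these in hand, every remaining obligation reduces to a statement about $G$: the spanning conditions \ref{enum:c1}--\ref{enum:c2} for each new pair $\{u,\bar u\}$ (making it a circular pair, since $u\bar u\notin E(H)$), the fact that each old circular pair of $G$ and each $2$-overlap edge of $G$ remains such in $H$ (yielding condition (ii) and the circular-pairing of every vertex of $G$), and the circular-pairing of each $\bar u$ with $u$. Together these give that $H$ is circularly-paired and satisfies (i)--(ii).

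The main obstacle I anticipate is precisely the verification of these spanning conditions in $H$, each of which is a universally quantified statement over $V(H)$ and therefore splits into cases according to whether the quantified vertex is an original vertex or a complement. The genuinely delicate cases are those involving two complement vertices, where one must unfold the rule $\bar u\bar v\in E(H)\iff uv$ not $2$-overlap simultaneously with the spanning property of the relevant pair in $G$; establishing the complement--complement characterization $N_H[\bar u]\subseteq N_H[\bar v]\iff N_G[v]\subseteq N_G[u]$ and confirming that the three adjacency rules are mutually consistent (so that, a posteriori, they realize exactly the conclusions \ref{enum:circ1-0}--\ref{enum:circ1-ii} of Lemma~\ref{lem:circ1}) is the heart of the argument. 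Once this is done, $H$ is a circularly-paired graph containing $G$ as an induced subgraph with all edge types preserved, and the counting argument of the first paragraph concludes that $H$ is a circular completion of $G$ with $|V(H)|=2|V(G)|-|S|$.
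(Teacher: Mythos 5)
Your construction is correct and yields the same graph as the paper's, but you arrive at it by a different route: you add all missing partners in one shot with explicit adjacency rules, whereas the paper proceeds by induction on $|V(G)\setminus S|$, adjoining a single vertex $\bar v$ at a time with the rule $u\bar v\in E(G^+)\iff N_{G}[u]\not\subseteq N_{G}[v]$ and then reapplying the same rule in the enlarged graph $G^+$. The practical difference is where the complement--complement adjacencies come from. In the paper they need no separate rule: once $\{v,\bar v\}$ is verified to be a circular pair of $G^+$ and edge types are shown to be preserved, the adjacency between $\bar v$ and the next added vertex $\bar w$ is forced by the same uniform rule applied in $G^+$, and Lemma~\ref{lem:circ1} (in particular \ref{enum:circ1-i}) does the work of translating it back to ``$vw$ is not a 2-overlap edge of $G$.'' You instead posit that translation as the definition and must verify from scratch that it is consistent with the spanning conditions \ref{enum:c1}--\ref{enum:c2} for each new pair $\{u,\bar u\}$ --- this is the containment characterization $N_H[\bar u]\subseteq N_H[\bar v]\iff N_G[v]\subseteq N_G[u]$ and its companions, which you correctly flag as the heart of the matter; I checked the delicate subcases (e.g.\ that $N_G[v]\subseteq N_G[u]$ together with $av$ being a 2-overlap edge forces $au$ to be a 2-overlap edge, using the absence of universal vertices) and they go through. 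What each approach buys: the induction lets the paper reuse Lemma~\ref{lem:circ1} at every step and keeps each verification local to one added vertex, at the cost of threading the hypotheses (no universal vertices, no true twins, type preservation) through the recursion; your direct construction avoids the recursion and makes the final object completely explicit, at the cost of a longer case analysis over pairs of complement vertices. Both proofs close identically, invoking \ref{enum:compl-cardinality} to certify that a circularly-paired supergraph on $2|V(G)|-|S|$ vertices is necessarily a circular completion.
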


\begin{proof}
Let $G=(V,E)$.
For every $u\in S$ write $\bar u$ for the unique vertex $v$ such that
$\{u,v\}$ is a circular pair. This is guaranteed by Lemma \ref{lem:circ1}, since
$G$ has no true twins. Note that $\bar u\in S$. 

We prove by induction on $|V(G)\setminus S|$ that there exists a circular
completion $G'$ of $G$ such that $|V(G')|=2|V(G)|-|S|$. If $V\setminus S=
\emptyset$, then $G$ is circularly-paired, and the claim holds for $G'=G$.

We may therefore assume that $V\setminus S\neq\emptyset$ and let $v\in
V\setminus S$.  Add a new vertex $\bar v$ to $G$ whose neighbours are exactly
those vertices $u\in V(G)$ such that $N_G[u]\not\subseteq N_G[v]$. Call the
resulting graph $G^+$. In the following, for a vertex $u\in V(G^+)$, we shall
write $N[u]$ to denote $N_{G^+}[u]$.

\begin{claim}\label{clm:circ2-1}
All vertices in $S\cup\{v,\bar v\}$ are circularly-paired in $G^+$.
\end{claim}

\begin{proofclaim}
It is easy to see that $\{v,\bar v\}$ is a circular pair in $G^+$. Indeed, if
there exists a vertex $w$ in $G^+$ that is nonadjacent to $\bar v$, then by our
construction, we know that $N_G[w]\subseteq N_G[v]$ and because $\bar v$, the
only vertex in $V(G^+)\setminus V(G)$, is nonadjacent to $w$, it follows that
$N[w]\subseteq N[v]$. On the other hand, if there exists a vertex $w\neq\bar v$
in $G^+$ that is nonadjacent to $v$, then for each vertex $u\in N[w]\setminus\{
\bar v\}$, we have $N_G[u]\not\subseteq N_G[v]$ (as $w\in N_G[u]
\setminus N_G[v]$), implying by our construction that $u\bar v\in E(G^+)$.
This shows that $N[w]\subseteq N[\bar v]$. Therefore, $\{v,\bar v\}$ is a
circular pair in $G^+$.

Suppose that there
exists $u,\bar u\in S$ such that $\{u,\bar u\}$ is not a circular pair in $G^+$.
By symmetry between $u$ and $\bar u$, we can conclude that there exists $x\in
V(G^+)$, such that $ux\not\in E(G^+)$ and $N[x]\not\subseteq N[\bar u]$.

Suppose first that $x\neq\bar v$. Then clearly, $ux\not\in E(G)$ and because $\{
u,\bar u\}$ was a circular pair in $G$, we have $N_G[x]\subseteq N_G[\bar u]$.
Since $N[x]\not\subseteq N[\bar u]$, this can only mean
that $\bar v\in N[x]\setminus N[\bar u]$, or in other words, $x\bar v\in E(G^+)$
and $\bar u\bar v\not\in E(G^+)$. By our construction, the fact that $\bar u\bar
v\not\in E(G^+)$ implies that $N_G[\bar u]\subseteq N_G[v]$. This, together with
the earlier observation that $N_G[x]\subseteq N_G[\bar u]$ gives us $N_G[x]
\subseteq N_G[v]$. But then, by our construction, $x\bar v$ should not be an
edge in $G^+$, which is a contradiction.

Now suppose that $x=\bar v$. Then, we have $u\bar v\not\in E(G^+)$ and $N[\bar
v]\not\subseteq N[\bar u]$.

If $\bar u\bar v\not\in E(G^+)$, then consider a vertex $x\in V(G)$ such that
$vx\not\in E(G)$. Such a vertex must exist, since there are no universal
vertices in $G$. Then, recalling that $\{v,\bar v\}$ is a circular pair in
$G^+$, we have $N[x]\subseteq N[\bar v]$, implying that $xu,x\bar
u\not\in E(G^+)$. This also implies that $u\neq x\neq\bar u$. Therefore, we also
have $xu,x\bar u\not\in E(G)$, but this contradicts the fact that $\{u,\bar u\}$
is a circular pair in $G$.

Therefore, $\bar u\bar v\in E(G^+)$. Since $N[\bar v]\not\subseteq N[\bar u]$,
there exists $y\in N[\bar v]\setminus N[\bar u]$. Note that $y\neq\bar v$ and
since $y\bar u\not\in E(G^+)$, we also have $y\bar u\not\in E(G)$. This implies
that $N_G[y]\subseteq N_G[u]$ and, in particular, $uy\in
E(G)\subseteq E(G^+)$. Now, as $\bar vy\in E(G^+)$, it should be the case, by
our construction, that $N_G[y]\not\subseteq N_G[v]$. Therefore, there exists
some $x\in N_G[y]\setminus N_G[v]$. This clearly means that $xv\not\in E(G^+)$,
and since $\{v,\bar v\}$ is a circular pair of $G^+$, we have $N[x]\subseteq
N[\bar v]$.  Since $N_G[y]\subseteq
N_G[u]$, we have $xu\in E(G)$ and therefore $xu\in E(G^+)$. From that, we deduce
$u\bar v\in E(G^+)$, because $N[x]\subseteq N[\bar v]$.  But now this
contradicts our assumption that $u\bar v\not\in E(G^+)$.
\end{proofclaim}

\begin{claim}\label{clm:circ2-2}
Every edge $xy\in E(G)$ has the same type both in $G$ and in $G^+$.
\end{claim}

\begin{proofclaim}
Clearly, if $xy\not\in E(G)$, then $xy\not\in E(G^+)$ by construction.
Similarly, it is easy to see that if $xy$ is a 1-overlap edge in $G$, then it is
also a 1-overlap edge in $G^+$, since the vertices that make it fail to be an
inclusion or a 2-overlap edge in $G$, also make it fail in $G^+$.  By the same
token, no pair $\{x,y\}$ that is not spanning in $G$ can become spanning in
$G^+$.

So suppose that $xy$ is an inclusion edge where $N_G[x]\subseteq N_G[y]$.  For
contradiction, assume $N[x]\not\subseteq N[y]$. Then it must be the case that
$\bar v\in N[x]\setminus N[y]$. This implies, by the construction of $G^+$, that
$N_G[y]\subseteq N_G[v]$. Thus $N_G[x]\subseteq N_G[y]\subseteq N_G[v]$ which
implies that $x\bar v\not\in E(G^+)$ by construction, a contradiction.  This
proves that $xy$ remains the same inclusion edge in $G^+$.

Next, suppose that $\{x,y\}$ is a spanning pair in $G$ but not in $G^+$. Since
every inclusion edge in $G$ is also an inclusion edge in $G^+$ (as we just
showed), it follows that one of \ref{enum:c1}, \ref{enum:c2} fails for $\{x,y\}$
because of $\bar v$. By symmetry, $\bar v\not\in N[y]$ and $N[\bar
v]\not\subseteq N[x]$.  We have $N_G[y]\subseteq N_G[v]$ by construction, since
$y\bar v\not\in E(G^+)$. Assume first that $x\bar v\not\in E(G^+)$. Then
$N_G[x]\subseteq N_G[v]$ by construction.  Since $\{x,y\}$ is a spanning pair in
$G$, it follows by \ref{enum:c1} and \ref{enum:c2} that $V(G)=N[x]\cup N[y]$.
Thus $V(G)=N_G[x]\cup N_G[y]\subseteq N_G[v]$ which means that $v$ is a
universal vertex of $G$, a contradiction.  This shows that $x\bar v\in E(G^+)$
and thus there exists a different vertex $z\in N[\bar v]\setminus N[x]$. Note
that $z\in V(G)$.  Thus $z\not\in N_G[x]$ and so $N_G[z]\subseteq N_G[y]$ by
\ref{enum:c2}, since $\{x,y\}$ is a spanning pair of $G$. Recall that
$N_G[y]\subseteq N_G[v]$.  Thus $N_G[z]\subseteq N_G[y]\subseteq N_G[v]$, but
this implies $z\bar v\not\in E(G^+)$ by construction, a contradiction.  This
proves that $\{x,y\}$ remains a spanning pair in $G^+$.

In particular, if $xy\not\in E(G)$, then $\{x,y\}$ remains a circular pair,
while if $xy\in E(G)$, then $xy$ remains a 2-overlap edge.
\end{proofclaim}

\begin{claim}\label{clm:circ2-3}
$G^+$ contains no universal vertices or true twins.
\end{claim}

\begin{proofclaim}
The vertex $\bar v$ is non-adjacent to $v$, by construction.  Since $G$ contains
no universal vertex, every other vertex $x\in V(G)$ has at least one
non-neighbour in $G$, and thus in $G^+$. 

Similarly, if $G^+$ contains true twins $x,y\in V(G)$, then $N[x]=N[y]$ which
implies $N_G[x]=N_G[y]$ (as $G$ is an induced subgraph of $G^+$) and so $x,y$
are true twins in $G$. If $x\in V(G)$ is a true twin with $\bar v$, then $\{v,x
\}$ is a circular pair in $G^+$. This implies, by \ref{clm:circpairindsubgraph},
that $\{v,x\}$ is a circular pair in $G$. However, $v\not\in S$, a contradiction.
\end{proofclaim}

The claims \ref{clm:circ2-1} and \ref{clm:circ2-3} now allow us to apply the
inductive hypothesis to $G^+$. This yields a circularly-paired graph $G'$ such
that $G^{+}$ is an induced subgraph of $G'$ and every edge in $G'$ that is also
in $G^{+}$ has the same type in both $G^{+}$ and $G'$. Clearly, $G$ is an
induced subgraph of $G'$. It follows from \ref{clm:circ2-2} and the inductive
hypothesis that every edge that is in both $G$ and $G'$ has the same type in
$G'$ as it has in $G$.
By the inductive hypothesis, we have $V(G')=2|V(G^+)|-|S(G^+)|$ where $S(G^{+})$
denotes the set of circularly-paired vertices in $G^{+}$. From
\ref{clm:circ2-1}, it follows that $S\cup\{v,\bar v\}\subseteq S(G^{+})$. From
\ref{clm:circpairindsubgraph} and \ref{enum:circ1-iii}, it now follows that
$S(G^+)=S\cup\{v,\bar v\}$. As $V(G^+)=V(G)\cup\{\bar v\}$, we have $V(G')=
2|V(G)|-|S|$.
From \ref{enum:compl-cardinality}, we can conclude that $G'$ is a circular
completion of $G$.
\end{proof}

\begin{corollary}\label{cor:circpairsincompl}
Let $G$ be a graph with no universal vertices or true twins and let
$\{u,\bar u\}$ be a circular pair in $G$. Let $G'$ be a circular completion of
$G$. Then $\{u,\bar u\}$ is also a circular pair in $G'$.
\end{corollary}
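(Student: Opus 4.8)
The plan is to prove this by a counting argument rather than by directly verifying the spanning-pair conditions (C1)--(C2) for $\{u,\bar u\}$ in $G'$. The latter would force us to understand the adjacencies of the vertices of $V(G')\setminus V(G)$, which are not under our explicit control for an \emph{arbitrary} smallest circular completion. The starting point is that, by \ref{enum:compl-nounivtt}, $G'$ has no universal vertices or true twins, so \ref{enum:circ1-iii} applies in $G'$: every vertex of $G'$ has a unique circular partner, and hence the circular pairs of $G'$ partition $V(G')$ into exactly $|V(G')|/2$ pairs.

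First I would pin down the cardinality $|V(G')|$. A circular completion is by definition a \emph{smallest} circularly-paired graph containing $G$ as an induced subgraph and preserving edge types; by \ref{enum:compl-cardinality} any such graph satisfies $|V(G')|\ge 2|V(G)|-|S|$, while Lemma \ref{lem:complexistence} exhibits a circular completion attaining this bound. Since all circular completions share the minimum cardinality, the minimum is exactly $2|V(G)|-|S|$, so our $G'$ has $|V(G')|=2|V(G)|-|S|$ and therefore $|V(G')|-|V(G)|=|V(G)|-|S|$ vertices lying outside $V(G)$.

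Next I would classify the circular pairs of $G'$ by how they meet $V(G)$. By \ref{enum:compl-minimal} no circular pair of $G'$ lies entirely outside $V(G)$, so each of the $|V(G)|-|S|$ outside vertices belongs to a distinct pair having exactly one endpoint in $V(G)$; these ``mixed'' pairs number exactly $|V(G)|-|S|$. Subtracting these from the total of $|V(G')|/2=|V(G)|-|S|/2$ circular pairs leaves precisely $|S|/2$ circular pairs of $G'$ with both endpoints in $V(G)$.

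Finally I would compare with $G$. By \ref{clm:circpairindsubgraph}, every circular pair of $G'$ whose two vertices lie in $V(G)$ is a circular pair of $G[V(G)]=G$, and distinct such pairs of $G'$ are literally distinct vertex pairs, giving an injection into the set of circular pairs of $G$. Since $G$ has no universal vertices or true twins, \ref{enum:circ1-iii} forces its circular pairs to partition $S$, so $G$ has exactly $|S|/2$ circular pairs. An injection between finite sets of equal size is a bijection, so every circular pair of $G$ is realized as an ``all-inside'' circular pair of $G'$; in particular $\{u,\bar u\}$ is a circular pair of $G'$. The step I expect to be most delicate is establishing the equality $|V(G')|=2|V(G)|-|S|$ for the given (arbitrary) circular completion, since it relies on combining the lower bound \ref{enum:compl-cardinality} with the existence result of Lemma \ref{lem:complexistence} and the minimality built into the definition; the remaining care is purely the bookkeeping that \ref{enum:compl-minimal} rules out all-outside pairs, so that the mixed-pair count coincides exactly with the number of outside vertices.
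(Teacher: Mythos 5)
Your proposal is correct and follows essentially the same route as the paper's own proof: both arguments rest on the partition of $V(G')$ into circular pairs, the exclusion of all-outside pairs via \ref{enum:compl-minimal}, the cardinality $|V(G')|=2|V(G)|-|S|$ obtained by combining \ref{enum:compl-cardinality} with Lemma~\ref{lem:complexistence}, and \ref{clm:circpairindsubgraph} to compare with the circular pairs of $G$. The only cosmetic difference is the last step, where you conclude by an injection-between-equal-finite-sets argument while the paper shows $S'=S$ and then invokes the uniqueness statement \ref{enum:circ1-iii} for a contradiction; these are interchangeable.
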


\begin{proof}
Let $S$ be the set of circularly-paired vertices of $G$.
Let $S'\subseteq V(G')$ denote the set $\{u~|~u\in V(G)$ and there exists
$v\in V(G)$ such that $\{u,v\}$ is circular pair in	 $G'\}$. By
\ref{clm:circpairindsubgraph}, $S'\subseteq S$. As observed before, the
circular pairs of $G'$ form a partition of $V(G')$. Thus, there $|S'|/2$
circular pairs in $G'$ that contain only vertices from $V(G)$ and by
\ref{enum:compl-minimal}, there is exactly one vertex from $V(G')\setminus V(G)$
in the remaining $(|V(G')|-|S'|)/2$ circular pairs in $G'$. This tells us that
$|V(G')|-|V(G)|=(|V(G')|-|S'|)/2$ which gives $|S'|=2|V(G)|-|V(G')|$. From
Lemma~\ref{lem:complexistence} and \ref{enum:compl-cardinality}, we can
infer that $|V(G')|=2|V(G)|-|S|$. Combining the two equations gives us $|S'|=
|S|$, which means that $S'=S$. Now, let $\{u,\bar u\}$ be a circular pair in
$G$. Clearly, $u,\bar u\in S$ and therefore $u,\bar u\in S'$.
Suppose that $\{u,\bar u\}$ is not a
circular pair in $G'$. Since $G'$ is circularly-paired, there exists some $v\neq
\bar u$ such that $\{u,v\}$ is a circular pair in $G'$. As $u\in S'$, by
definition of $S'$, we have $v\in S'$ and therefore $v\in V(G)$. By
\ref{clm:circpairindsubgraph}, $\{u,v\}$ is a circular pair in $G$. Since $\{u,
\bar u\}$ is also a circular pair in $G$ and $u$, $\bar u$ and $v$ are distinct
vertices, this contradicts \ref{enum:circ1-iii}.
\end{proof}

\begin{lemma}
Any graph $G$ with no universal vertices or true twins has a unique (up to isomorphism) circular completion.
\end{lemma}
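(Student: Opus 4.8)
The plan is to show that any two circular completions $G_1$ and $G_2$ of $G$ are isomorphic via a bijection that fixes $V(G)$ pointwise. First I would record the structural skeleton common to every circular completion $G'$. By \ref{enum:compl-nounivtt}, $G'$ has no universal vertices or true twins, so Lemma \ref{lem:circ1} applies to $G'$; in particular, by \ref{enum:circ1-iii} every vertex of $G'$ lies in a unique circular pair, and these pairs partition $V(G')$. By \ref{enum:compl-minimal} every such pair meets $V(G)$, and by Corollary \ref{cor:circpairsincompl} together with \ref{clm:circpairindsubgraph} the pairs lying entirely inside $V(G)$ are exactly the circular pairs of $G$, i.e. they cover precisely the set $S$ of circularly-paired vertices of $G$. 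Hence each $v\in V(G)\setminus S$ has a unique partner $\bar v\in V(G')\setminus V(G)$, and $v\mapsto\bar v$ is a bijection between $V(G)\setminus S$ and the new vertices; combined with \ref{enum:compl-cardinality} and Lemma \ref{lem:complexistence}, every circular completion has exactly $2|V(G)|-|S|$ vertices, so this accounts for all of $V(G')$. This yields a canonical bijection $\phi\colon V(G_1)\to V(G_2)$ fixing $V(G)$ and matching partners, and it remains only to check that $\phi$ preserves adjacency.

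There are three kinds of pairs to verify. Pairs inside $V(G)$ are immediate, since $G$ is induced in each $G_i$. For a pair $u\bar v$ with $u\in V(G)$ and $v\in V(G)\setminus S$, I would argue that its adjacency is forced by $G$ alone: applying \ref{enum:circ1-0} to the circular pair $\{v,\bar v\}$ of $G'$ gives $u\bar v\in E(G')$ iff $N_{G'}[u]\not\subseteq N_{G'}[v]$, and for $u,v\in V(G)$ the relation $N_{G'}[u]\subseteq N_{G'}[v]$ coincides with $N_G[u]\subseteq N_G[v]$ — one direction by restricting neighbourhoods to $V(G)$, the other because $N_G[u]\subseteq N_G[v]$ makes $uv$ an inclusion edge whose type and direction are preserved in $G'$ (here the absence of true twins pins down the direction). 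Thus $u\bar v\in E(G')$ iff $N_G[u]\not\subseteq N_G[v]$, a condition depending only on $G$, hence the same in $G_1$ and $G_2$.

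The remaining, and main, case is a pair $\bar v\bar w$ of two new vertices ($v,w\in V(G)\setminus S$ distinct); here I expect the bulk of the work. I would prove that $\bar v\bar w\in E(G')$ iff $vw$ is \emph{not} a 2-overlap edge of $G$, again a condition intrinsic to $G$. For the forward implication, if $vw$ is a 2-overlap edge then $\{v,w\}$ is a spanning pair, and combining \ref{enum:c2} for $\{v,w\}$ with \ref{enum:circ1-0} for $\{w,\bar w\}$ shows $N_{G'}[\bar w]\subseteq N_{G'}[v]$ (the needed membership $v\bar w\in E(G')$ following from the previous case, since $v,w$ overlap), whence $\bar v\bar w\notin E(G')$ by \ref{enum:circ1-0} for $\{v,\bar v\}$. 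For the converse I would split into the cases $vw\notin E$, $vw$ a 1-overlap edge, and $vw$ an inclusion edge: when $vw\notin E$ the spanning-pair condition \ref{enum:c2} for $\{w,\bar w\}$ gives $N_{G'}[v]\subseteq N_{G'}[\bar w]$, so $N_{G'}[\bar w]\subseteq N_{G'}[v]$ would force true twins, a contradiction; the 1-overlap case uses \ref{enum:circ1-ii} to make $\bar v w$ a 1-overlap edge, and the inclusion case uses \ref{enum:circ1-i} to make $\bar v w$ (or $\bar w v$) a 2-overlap edge, in either subcase producing $N_{G'}[\bar v]\not\subseteq N_{G'}[w]$ and hence $\bar v\bar w\in E(G')$ via \ref{enum:circ1-0} for $\{w,\bar w\}$. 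With all three adjacency types shown to depend only on $G$, the map $\phi$ preserves edges and non-edges and is therefore an isomorphism, proving the circular completion unique up to isomorphism. The main obstacle is precisely this new--new case, whose several subcases each require chasing the containment characterizations of Lemma \ref{lem:circ1} through the spanning-pair definition.
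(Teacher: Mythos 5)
Your proposal is correct and follows essentially the same route as the paper: the same canonical bijection fixing $V(G)$ and matching circular-pair partners (well-defined via \ref{enum:compl-minimal}, \ref{clm:circpairindsubgraph} and Corollary~\ref{cor:circpairsincompl}), and the same three-case adjacency check driven by Lemma~\ref{lem:circ1} and preservation of edge types. The only difference is stylistic — you phrase each case as an explicit $G$-intrinsic characterization of the adjacency (e.g.\ $\bar v\bar w\in E$ iff $vw$ is not a 2-overlap edge of $G$) where the paper derives a contradiction from a hypothetical mismatch — but the underlying lemmas and case structure coincide.
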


\begin{proof}
By Lemma~\ref{lem:complexistence}, we know that there exists a circular completion $G'$ of $G$.
Let $G''$ be another circular completion of $G$. Note that $G$ is an induced subgraph of both
$G'$ and $G''$. We define the bijection $f:V(G')\rightarrow V(G'')$ as:
\vspace{.1in}

\noindent\begin{center}
\begin{tabular}{p{.1\textwidth}p{.75\textwidth}}
$f(u)=u$,&if $u\in V(G)$\\
$f(u)=v$,& if $u\in V(G')\setminus V(G)$, where $v\in V(G'')\setminus V(G)$ is a vertex such that there exists $x\in V(G)$ for which $\{u,x\}$ is a circular pair in $G'$ and $\{v,x\}$ is a circular pair in $G''$.
\end{tabular}
\end{center}
\vspace{.1in}

Observe that \ref{enum:compl-minimal}, \ref{clm:circpairindsubgraph} and Corollary~\ref{cor:circpairsincompl} together ensures that $f$ is well-defined.
We claim that $f$ is an isomorphism between the graphs $G'$ and $G''$ and that therefore $G'$ and $G''$ are isomorphic graphs.
Note that by symmetry between $G'$ and $G''$, we only need to prove that if $uv\in E(G')$ then $f(u)f(v)\in E(G'')$.

Suppose that $uv\in E(G')$ and $f(u)f(v)\not\in E(G'')$.
If $u,v\in V(G)$, then $f(u)=u$ and $f(v)=v$. By the definition of circular completion, $G$ is an induced subgraph of both $G'$ and $G''$. Therefore, $uv\in E(G')$ implies that $f(u)f(v)\in E(G'')$, a contradiction.
Now suppose that $u\in V(G)$ and $v\in V(G')\setminus V(G)$. By definition of $f$, we have $f(u)=u$ and therefore $uf(v)\not\in E(G'')$. Let $v'\in V(G')$ be such that $\{v',v\}$ is a circular pair in $G'$. By \ref{enum:compl-minimal}, we know that $v'\in V(G)$ and by the construction of $f$, we know that $\{v',f(v)\}$ is a circular pair in $G''$. Now, $uf(v)\not\in E(G'')$ implies that $N_{G''}[u]\subseteq N_{G''}[v']$. Since $u,v'\in V(G)$ and $G$ is an induced subgraph of $G''$, we have $N_G[u]\subseteq N_G[v']$. As there are no true twins in $G$, we have $N_G[u]\subset N_G[v']$. Since $uv'$ is an inclusion edge in $G'$ (by definition of circular completion), this means that $N_{G'}[u]\subset N_{G'}[v']$. But $v\in N_{G'}[u]\setminus N_{G'}[v']$, a contradiction.

Finally, suppose that $u,v\in V(G')\setminus V(G)$. By \ref{enum:compl-minimal}, we know that there exist vertices $u',v'\in V(G)$ such that $\{u,u'\}$, $\{v,v'\}$ are circular pairs in $G'$ and by definition of $f$, we know that $\{f(u),u'\}$, $\{f(v),v'\}$ are circular pairs in $G''$. Since $f(u)f(v)\not\in E(G'')$, we have $N_{G''}[f(u)]\subseteq N_{G''}[v']$ and $N_{G''}[f(v)]\subseteq N_{G''}[u']$. From \ref{enum:circ1-i}, we can infer that $u'v'$ is a 2-overlap edge in $G''$. By definition of circular completion, $u'v'$ is a 2-overlap edge of $G'$ as well. But notice that we have $uu'\not\in E(G')$ and $v\in N_{G'}[u]\setminus N_{G'}[v']$. As $\{u',v'\}$ is a spanning pair of $G'$, this contradicts \ref{enum:circ1-0}.

This completes the proof.
\end{proof}

\begin{lemma}\label{lem:circ2}
Let $G$ be a graph with no universal vertices or true twins and let $G'$ be
its circular completion. Then, $G'$ is a circular-arc graph if and only if $G$
is.
\end{lemma}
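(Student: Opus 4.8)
The plan is to prove the two directions separately, with the forward implication essentially free and the reverse one carrying all the content. For the direction that $G'$ being a circular-arc graph forces $G$ to be one, I would simply invoke that $G$ is an induced subgraph of $G'$ and that the class of circular-arc graphs is closed under taking induced subgraphs: discarding from any representation of $G'$ the arcs assigned to the vertices of $V(G')\setminus V(G)$ yields a representation of $G$.

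For the converse, suppose $G$ is a circular-arc graph. Since $G$ has no universal vertices or true twins, Theorem \ref{thm:hsu} supplies a circular-arc representation $\{[l_u,r_u]\}_{u\in V(G)}$ of $G$ that is normalized with respect to $G$. I would then apply Lemma \ref{lem:normalized} with the ambient graph taken to be $G'$ and with $X=V(G)$, so that $G'[X]=G$. To do this I must verify the standing assumption of that lemma together with its two hypotheses. The standing assumption, that $G'$ has no universal vertices or true twins, is exactly \ref{enum:compl-nounivtt}. Hypothesis~(i), that each vertex of $V(G')\setminus V(G)$ is circularly-paired with a vertex of $X=V(G)$, follows because $G'$ is circularly-paired, so each such vertex has a circular partner in $G'$, and by \ref{enum:compl-minimal} at least one vertex of any circular pair of $G'$ lies in $V(G)$; hence that partner lies in $V(G)$.

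The crux is hypothesis~(ii): I must argue that a representation normalized with respect to $G$ is in fact also normalized with respect to $G'$. By the definition of circular completion, every edge $uv$ with $u,v\in V(G)$ has the same type in $G$ as in $G'$; in particular $uv$ is a 2-overlap edge of $G$ if and only if it is a 2-overlap edge of $G'$, which transports condition \ref{enum:n2} from $G$ to $G'$. For condition \ref{enum:n1} I would additionally check that the \emph{direction} of containment of closed neighbourhoods is preserved: if $uv$ is an inclusion edge of $G$ with $N_G[u]\subsetneq N_G[v]$ (strict, as $G$ has no true twins), then $uv$ remains an inclusion edge of $G'$, and since $G$ is an induced subgraph of $G'$ there is a vertex of $V(G)$ lying in $N_{G'}[v]\setminus N_{G'}[u]$; this rules out $N_{G'}[v]\subseteq N_{G'}[u]$ and forces $N_{G'}[u]\subsetneq N_{G'}[v]$, exactly as in the proof of Claim \ref{clm:delX}. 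Thus the representation satisfies \ref{enum:n1} and \ref{enum:n2} relative to $G'$ and is normalized with respect to $G'$.

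With all hypotheses in hand, Lemma \ref{lem:normalized} yields that $G'$ is a circular-arc graph, completing the proof. The only delicate point is the preservation of the inclusion direction under circular completion, and it rests entirely on the no-true-twins hypothesis together with the fact that edge types are preserved by circular completion; every other step is a direct bookkeeping application of the already-established structural lemmas.
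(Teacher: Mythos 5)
Your proposal is correct and follows essentially the same route as the paper: Theorem \ref{thm:hsu} for a normalized representation of $G$, preservation of edge types under circular completion to transfer normalization to $G'$, \ref{enum:compl-minimal} for the pairing hypothesis, and Lemma \ref{lem:normalized} to conclude. Your extra care about the \emph{direction} of neighbourhood inclusion in \ref{enum:n1} is a point the paper's proof passes over silently (it is implicitly handled in the proof of \ref{clm:circ2-2}), and your argument for it is correct.
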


\begin{proof}
Clearly, if $G'$ has a circular-arc representation, then so does $G$, since it
is induced in $G'$. For the converse, assume that $G$ is a circular-arc graph.
Recall that $G$ has no universal vertices or true twins. So by
Theorem~\ref{thm:hsu}, there exists a normalized circular-arc representation of
$G$.  By \ref{clm:circ2-2}, recall that every edge $uv\in E(G)$ has the same
type both in $G$ and $G'$. This implies that the representation of $G$ is
normalized with respect to $G'$.  Also, by \ref{enum:compl-minimal},
all vertices in $V(G')\setminus V(G)$
are circularly-paired with vertices in $V(G)$.  Therefore, by Lemma
\ref{lem:normalized}, $G'$ is a circular-arc graph.
\end{proof}

\section{Edge-labelled graphs}\label{sec:edgelabelledgraphs}
In this section, we deal with \emph{edge-labelled graphs}. An edge-labelled
graph $G$ is one in which every edge is labelled either as an {\em inclusion
edge} or an {\em overlap edge}.
Every vertex $u$ also has a loop $uu$, which is labelled as {\em inclusion edge}.
Note that the labels on the edges of $G$ need not correspond to the relationship
between the neighbourhoods of the two endpoints. That is, it is possible that
an edge $uv$ is labelled as an overlap edge even though $N[u]$ and $N[v]$ are
comparable and conversely, an edge $uv$ could be labelled as an inclusion edge
even though $N[u]$ and $N[v]$ are incomparable.

We shall now introduce some notions and terminology from \cite{mcc03},
which will be useful for us to prove Lemma~\ref{lem:non-inverting}.

Let $G=(V,E)$ be an edge-labelled graph. For $u,v\in V$,
we say that $u$ {\em overlaps} $v$, if $uv$ is an overlap edge. A vertex
$z\in V$ and a walk $P$ in $G$ {\em avoid} each other if every neighbour of $z$
on $P$ overlaps $z$, and whenever $z$ overlaps consecutive vertices $x,y$ on
$P$, then $xy$ is an inclusion edge.  For brevity, an edge $xy$ avoids $z$, if
the walk $x\tp y$ avoids $z$. We write $(x,z)\Delta (y,z)$ and $(z,x)\Delta (z,
y)$ whenever an edge $xy$ avoids $z$. Clearly, whenever this happens, neither
$xz$ nor $yz$ is an inclusion edge.

We say that a walk $P=x_1\tp x_2\tp\cdots\tp x_k$ in $G$ avoids a walk $Q=y_1\tp
y_2\tp\cdots\tp y_k$ if for every $i$, the vertex $x_i$ avoids the edge
$y_iy_{i+1}$, and the vertex $y_{i+1}$ avoids the edge $x_ix_{i+1}$. Note that
we may assume that for each $i$, either $x_i=x_{i+1}$ or $y_i=y_{i+1}$.
(If not, we can repeat some vertices on $P$ or $Q$ to achieve this.) Therefore,
whenever we have two such paths $P$ and $Q$, we shall assume that $(x_1,y_1)
\Delta (x_2,y_2)\Delta\cdots\Delta (x_k,y_k)$. This also implies that if $P$
avoids $Q$, then $Q$ avoids $P$, and therefore we shall simply say that $P$ and
$Q$ avoid each other.

A pair $(a,b)$ is {\em related} to pair $(u,v)$ if there exists an $au$-walk
and a $bv$-walk that avoid each other. As observed above, this means that there
exist $x_1,x_2,\ldots,
x_k,y_1,y_2,\ldots,y_k\in V$ such that $(a,b)\Delta (x_1,y_1)\Delta (x_2,y_2)$
$\Delta \cdots \Delta (x_k,y_k)\Delta (u,v)$. This definition
implies that $(a,b)$ is related to itself if and only if $(a,b)$ is not an
inclusion edge.

\subsection{\texorpdfstring{$\bf\Delta$}{}-implication classes}
\label{sec:deltaclasses}

A maximal set of pairwise related pairs is a {\em
$\Delta$-implication class}.  For a $\Delta$-implication class $A$, we write
$A^{-1}$ for the set $\{(u,v)~|~(v,u)\in A\}$. Observe that $(a,b)$ is related
to $(u,v)$ if and only if $(b,a)$ is related to $(v,u)$. Therefore, $A^{-1}$ is
also a $\Delta$-implication class. It is also easy to see that if $A\cap A^{-1}
\neq\emptyset$, then $A=A^{-1}$. We define $span(A)$ to be the set of all
vertices $u$ such that
$(u,v)\in A$ or $(v,u)\in A$ for some $v$.  In other words, $span(A)$ is the set
of all vertices incident to pairs in $A$. Clearly, $span(A)=span(A^{-1})$.

\begin{lemma}\label{lem:abc}
Let $A,B,C$ be $\Delta$-implication classes such that $C^{-1}\neq A\neq B$.
Suppose that there are vertices $a,b,c$ such that $(a,b)\in C$, $(b,c)\in A$,
$(a,c)\in B$.  Then\smallskip

\begin{compactenum}[\qquad \rm(\thetheorem.1)]
\item for every $(u,v)\in A$, we have $(a,u)\in C$ and $(a,v)\in
B$, and
\setreflabel{(\thetheorem.\theenumi)}\label{enum:first}
\item no pair in $A$ is incident to $a$.
\setreflabel{(\thetheorem.\theenumi)}\label{enum:second}
\addtocounter{claim}{\theenumi}
\end{compactenum}
\end{lemma}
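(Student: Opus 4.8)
The plan is to read this as the $\Delta$-analogue of the classical Triangle Lemma for implication classes, and to prove \ref{enum:first} by induction on the length of a $\Delta$-chain inside $A$ joining the seed pair $(b,c)$ to an arbitrary $(u,v)\in A$. Throughout I maintain the invariant that the current pair $(p,q)\in A$ sits in an \emph{overlap triangle} with $a$: the edges $ap$, $aq$, $pq$ are all overlap edges, with $(a,p)\in C$ and $(a,q)\in B$. The base case is exactly the hypothesis $(a,b)\in C$, $(b,c)\in A$, $(a,c)\in B$. Since $(b,c)$ and $(u,v)$ lie in the same $\Delta$-implication class, they are joined by a sequence of single $\Delta$-steps, so it suffices to push the invariant across one step; the final pair then yields $(a,u)\in C$ and $(a,v)\in B$, which is \ref{enum:first}.

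A single $\Delta$-step either keeps the second coordinate fixed, $(p,q)\Delta(p',q)$, or the first coordinate fixed, $(p,q)\Delta(p,q')$; in each case the \emph{fixed} coordinate is the vertex $z$ for which the bridging edge avoids $z$ (so $pp'$ avoids $q$, resp.\ $qq'$ avoids $p$). Consider a step of the second kind. I want $(a,q')\in B$, which I will obtain from $(a,q)\Delta(a,q')$ provided the edge $qq'$ avoids $a$. Here lies the only real difficulty: unlike the classical setting, where the forcing non-edge is literally reused, the vertex guaranteeing the step changes from $p$ to $a$, so the avoidance must be \emph{transferred} from $p$ to $a$. It is enough to show that $aq'$ is again an overlap edge and that $qq'$ is an inclusion edge, for then both avoidance conditions for $qq'$ against $a$ hold: the first because $aq,aq'$ are overlap (hence non-inclusion) edges, and the second because, once $qq'$ is an inclusion edge, the implication ``$a$ overlaps both endpoints of $qq'$ $\Rightarrow$ $qq'$ is an inclusion edge'' is automatic.

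The heart of the proof is therefore ruling out the two bad labels for the newly created edge $aq'$. If $aq'$ were an \emph{inclusion} edge, then $aq'$ would avoid $p$ (the non-inclusion edges $ap,pq'$ satisfy the first condition, and the second is vacuous as $aq'$ is itself an inclusion edge), giving $(a,p)\Delta(q',p)$ and hence $(q',p)\in C$; but $(p,q')\in A$ forces $(q',p)\in A^{-1}$, so $C=A^{-1}$, contradicting $C^{-1}\neq A$. If $aq'$ were a \emph{non-edge}, then the overlap edge $ap$ would avoid $q'$, giving $(a,q')\Delta(p,q')$ and hence $(a,q')\in A$; but $qq'$ avoids $a$ in this case as well, so $(a,q)\Delta(a,q')$ already places $(a,q')\in B$, whence $A=B$, contradicting $A\neq B$. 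Thus $aq'$ is an overlap edge; a symmetric argument (again via $C^{-1}\neq A$) shows $pq'$ is an overlap edge, so the bridge condition for $qq'$ avoiding $p$ forces $qq'$ to be an inclusion edge, the overlap triangle is restored, and $(a,q')\in B$. The step of the first kind is handled identically with the roles of the two hypotheses interchanged. I expect this avoidance-transfer, together with the bookkeeping of the three possible labels (overlap, inclusion, non-edge) of each new edge, to be the main obstacle, and the distinctness hypotheses $C^{-1}\neq A$ and $A\neq B$ to be exactly what closes the bad cases.

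Finally, \ref{enum:second} follows from \ref{enum:first}: if some pair of $A$ were incident to $a$, say $(a,v)\in A$ or $(w,a)\in A$, then applying \ref{enum:first} to that pair would yield $(a,a)\in C$ or $(a,a)\in B$. This is impossible, since the loop $aa$ is an inclusion edge and hence is not related to itself, so it cannot lie in any $\Delta$-implication class.
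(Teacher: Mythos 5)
Your overall architecture (induct along the $\Delta$-chain from $(b,c)$ to $(u,v)$, transfer the avoidance from the fixed coordinate $p$ to $a$, and use $C^{-1}\neq A$ and $A\neq B$ to kill the bad labels of $aq'$) is the same as the paper's, and your derivation of \ref{enum:second} from \ref{enum:first} is exactly the paper's. But there is a genuine gap in the invariant you carry. You assert that the current pair $(p,q)\in A$ always sits in an \emph{overlap triangle} with $a$, i.e.\ that $ap$, $aq$, $pq$ are all overlap edges, and you claim the base case ``is exactly the hypothesis.'' It is not: membership of $(a,b)$ in $C$, $(b,c)$ in $A$, $(a,c)$ in $B$ only says that none of $ab$, $bc$, $ac$ is an \emph{inclusion} edge; each may be a non-edge. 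Non-edge pairs genuinely live in $\Delta$-implication classes (the definition of ``related'' explicitly admits them, and the lemma is later invoked in exactly such situations, e.g.\ in the proof of Corollary~\ref{cor:nonedgeinspan} where $(x,y)$ is a non-edge, and in Lemma~\ref{lem:deltaprime} where ``every non-edge belongs to some $\Delta$-implication class''). So the overlap-triangle invariant is false at the base, and were it provable it would show that every pair of $A$ is an overlap edge, which is precisely the \emph{extra hypothesis} of Corollary~\ref{cor:nonedgeinspan} rather than a general fact.

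The gap is not cosmetic, because every contradiction you derive leans on the invariant. Your ``$aq'$ is a non-edge $\Rightarrow A=B$'' step uses the bridging edge $ap$ to get $(a,q')\Delta(p,q')$, which requires $ap$ to be an actual (overlap) edge; if $ap$ is a non-edge there is no such $\Delta$-step, no contradiction arises, and indeed none should --- $aq'$ being a non-edge is a perfectly consistent outcome in which one simply concludes $(a,q')\in B$ directly. Likewise ``$pq'$ is an overlap edge by symmetry'' and ``the bridge condition forces $qq'$ to be an inclusion edge'' need $ap$, $pq$, $pq'$ to be overlap edges. Consequently you never confront the genuinely delicate configuration: $aq$, $aq'$, $qq'$ all overlap edges (so $qq'$ fails to avoid $a$) while $pq$ or $pq'$ may be non-edges. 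The paper's proof keeps only the correct, weaker invariant ($(a,x_i)\in C$ and $(a,y_i)\in B$, hence $ax_i,ay_i$ not inclusion edges) and resolves that configuration by a further case split on whether $ax_i$ and $x_iy_{i+1}$ are non-edges or overlap edges, in each branch manufacturing a $\Delta$-step that lands $(y_i,x_i)$ in $C$ and contradicts $C^{-1}\neq A$. To repair your proof you would need to drop the overlap-triangle invariant and add this missing case analysis.
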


\begin{proof}
Consider $(u,v)$ in $A$. Since $(b,c)$ is also in $A$, there exists a $bu$-walk
$P=x_1\tp x_2\tp\cdots\tp x_k$ and a $cv$-walk $Q=y_1\tp y_2\tp\cdots\tp y_k$
that avoid each other. Note that $(x_i,y_i)\in A$ for all $1\leq i\leq k$, by
the definition of a $\Delta$-implication class.
As observed before, we can assume that
$(x_1,y_1)\Delta(x_2,y_2)\Delta\cdots\Delta(x_k,y_k)$.

Recall that we assume $(a,b)\in C$ and $(a,c)\in B$. In other words, $(a,x_1)\in
C$ and $(a,y_1)\in B$. The statement \ref{enum:first} is a consequence of the
the following claim.

\begin{claim}
For every $i\in\{1,\ldots,k\}$, we have $(a,x_i)\in C$ and $(a,y_i)\in B$.
\end{claim}

\begin{proofclaim}
The proof is by induction. The claim holds for $i=1$ as noted above. 

Therefore assume that the claim holds for $i\geq 1$, namely $(a,x_i)\in C$,
$(a,y_i)\in B$. We prove that the claim holds for $i+1$. Note that the fact that
$(a,x_i)\in C$, $(a,y_i)\in B$ means that neither of $ax_i$, $ay_i$ is an
inclusion edge  (since each of $(a,x_i)$, $(a,y_i)$ belongs to some
$\Delta$-implication class).  Recall that $(x_i,y_i)\Delta(x_{i+1},y_{i+1})$. 

There are two possibilities: $x_i=x_{i+1}$ or $y_i=y_{i+1}$.  \smallskip

{\em Case 1:} suppose first that $x_i=x_{i+1}$ and $y_i\neq y_{i+1}$.  This
implies that neither of $x_iy_i$, $x_iy_{i+1}$ is an inclusion edge, while
$y_iy_{i+1}$ is an edge (inclusion or overlap).  Recall that neither of $ax_i$,
$ay_i$ is an inclusion edge. Note that $(a,x_{i+1})\in C$, since $(a,x_i)\in C$
and $x_i=x_{i+1}$.  Thus it remains to prove that $(a,y_{i+1})\in B$.  In
particular, we prove that $(a,y_i)\Delta(a,y_{i+1})$, which will imply
$(a,y_{i+1})\in B$, since $(a,y_i)\in B$. 

Recall that $y_iy_{i+1}$ is an edge (inclusion or overlap).  For contradiction,
suppose that $(a,y_i)\notdelta(a,y_{i+1})$. Since $y_iy_{i+1}$ is an edge, while
$ay_i$ is either a non-edge or an overlap edge, this means two possibilities:
either $ay_{i+1}$ is an inclusion edge, or $ay_i$, $ay_{i+1}$, $y_iy_{i+1}$ are
all overlap edges.

Suppose that $ay_{i+1}$ is an inclusion edge, and recall that
$(x_i,y_i)\Delta(x_{i+1},y_{i+1})$, and neither $ax_i$ nor $x_iy_{i+1}$ is 
an inclusion edge.  This implies that
$(a,x_i)\Delta(y_{i+1},x_i)=(y_{i+1},x_{i+1})\Delta(y_i,x_i)$.  And so since
$(a,x_i)\in C$, we conclude that $(y_i,x_i)\in C$. However, recall that
$(x_i,y_i)\in A$ and $A\neq C^{-1}$, a contradiction.

Suppose that $ay_i$, $ay_{i+1}$, $y_iy_{i+1}$ are overlap edges. Recall that
neither of $ax_i$, $x_iy_{i+1}$ is an inclusion edge.  If at least one of
$ax_i$, $x_iy_{i+1}$ is a non-edge, then we recall that
$(x_i,y_i)\Delta(x_{i+1},y_{i+1})$ and since $ay_{i+1}$ is an overlap edge, we
deduce that $(a,x_i)\Delta(y_{i+1},x_i)=(y_{i+1},x_{i+1})\Delta(y_i,x_i)$. And
so since $(a,x_i)\in C$, we again deduce that $(y_i,x_i)\in C$, but
$(x_i,y_i)\in A\neq C^{-1}$, a contradiction.  This implies that both $ax_i$,
$x_iy_{i+1}$ are overlap edges. Since $(x_i,y_i)\Delta(x_{i+1},y_{i+1})$ and
$y_iy_{i+1}$ is an overlap edge, it follows that $x_iy_i$ is a non-edge.  Thus
since $ay_i$ is an overlap edge, we deduce $(a,x_i)\Delta(y_i,x_i)$,  and so
again $(y_i,x_i)\in C$, but $(x_i,y_i)\in A\neq C^{-1}$.

This concludes Case 1.
\smallskip

\noindent{\em Case 2:} suppose that $y_i=y_{i+1}$ and $x_i\neq x_{i+1}$.  Apply
Case 1 with $C':=B$, and $B':=C$, and $A':=A^{-1}$. This is possible, since
$A'=A^{-1}\neq C=B'$ and $(C')^{-1}=B^{-1}\neq A^{-1}=A'$, because
$C^{-1}\neq A\neq B$.

\end{proofclaim}
\smallskip

Now for \ref{enum:second}, suppose that some edge $(u,v)\in A$ is incident to
$a$, i.e., $a\in \{u,v\}$. By (i), we deduce that $(a,u)\in C$ and $(a,v)\in B$.
In particular, $a\neq u$ and $a\neq v$, but $a\in\{u,v\}$, a contradiction.
\end{proof}

\begin{corollary}\label{cor:transitive}
Let $D$ be a $\Delta$-implication class such that $D\neq D^{-1}$. Then $D$ is
transitive.
\end{corollary}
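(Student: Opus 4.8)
The plan is to prove transitivity directly: I take two pairs $(a,b),(b,c)\in D$ and show $(a,c)\in D$. First I record that $a,b,c$ are distinct. No pair in an implication class can be a loop, since loops are inclusion edges and are therefore not related to themselves, so $a\neq b$ and $b\neq c$; and if $a=c$ then $(a,b),(b,a)\in D$ would force $D\cap D^{-1}\neq\emptyset$ and hence $D=D^{-1}$, contrary to hypothesis. The argument then splits into two parts: first, show that the pair $(a,c)$ actually lies in some $\Delta$-implication class $B$ (equivalently, that $ac$ is not an inclusion edge); second, show that this class $B$ must be $D$ itself.

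For the first part, I suppose toward a contradiction that $ac$ is an inclusion edge, and claim that then the edge $ac$ avoids $b$. The first defining condition of avoidance holds because $ab$ and $cb$ are non-inclusion edges (being incident to pairs of $D$), so whichever of $a,c$ is a neighbour of $b$ necessarily overlaps $b$. The second condition holds as well: if $b$ overlaps both $a$ and $c$, then the required edge $ac$ is an inclusion edge by our assumption. Hence $ac$ avoids $b$, which is exactly the relation $(a,b)\Delta(c,b)$. Since $(a,b)\in D$ and this $\Delta$-step also forces $cb$ to be non-inclusion, the pair $(c,b)$ is related to $(a,b)$ and therefore lies in $D$. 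But $(b,c)\in D$ gives $(c,b)\in D^{-1}$, so $D\cap D^{-1}\neq\emptyset$ and thus $D=D^{-1}$, a contradiction. Therefore $ac$ is not an inclusion edge, so $(a,c)$ is related to itself and belongs to some $\Delta$-implication class $B$.

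For the second part, I apply Lemma~\ref{lem:abc} with $C:=D$, $A:=D$, and $B$ the class just found. The hypotheses are met: $C^{-1}=D^{-1}\neq D=A$ by assumption, while $(a,b)\in C$, $(b,c)\in A$, and $(a,c)\in B$ by construction. Suppose $B\neq D$, i.e. $A\neq B$. Then conclusion~\ref{enum:second} of the lemma asserts that no pair in $A=D$ is incident to $a$; yet $(a,b)\in D$ is incident to $a$, a contradiction. Hence $B=D$, that is $(a,c)\in D$, which establishes transitivity.

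The main obstacle is the first part: Lemma~\ref{lem:abc} cannot even be invoked until $(a,c)$ is known to lie in a class $B$, and the delicate feature of this edge-labelled setting is that the pairs excluded from all implication classes are precisely the inclusion edges (in contrast to the non-edges in classical comparability-graph theory). The crucial observation that tips the balance is that an inclusion edge $ac$ is automatically avoided by $b$ whenever $ab$ and $cb$ are non-inclusion, so that the antisymmetry hypothesis $D\neq D^{-1}$ is what forbids this case. Once this is secured, the identification $B=D$ is a one-line consequence of Lemma~\ref{lem:abc}.
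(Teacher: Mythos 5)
Your proposal is correct and follows essentially the same route as the paper's proof: first rule out that $ac$ is an inclusion edge by observing that this would give $(a,b)\Delta(c,b)$ and hence $D=D^{-1}$, then apply Lemma~\ref{lem:abc} with $A=C=D$ to force the class of $(a,c)$ to be $D$ itself. Your explicit verification of the avoidance conditions and of the distinctness of $a,b,c$ merely spells out details the paper leaves implicit.
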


\begin{proof}
Suppose that $(a,b),(b,c)\in D$ while
$(a,c)\not\in D$. As $(a,b),(b,c)\in D$, we know that $ab$ and $bc$ are not
inclusion edges. If $ac$ is an inclusion edge, then we have $(a,b)\Delta (c,b)$
which implies that $(c,b)\in D$. As we also have $(b,c)\in D$, this would mean
that $D=D^{-1}$, which is a contradiction to our assumptions. Therefore, $ac$
is not an inclusion edge and hence $(a,c)$ belongs to some $\Delta$-implication
class $B$. As we have assumed that $(a,c)\not\in D$, we have $B\neq D$. By
letting $A=C=D$ and applying Lemma~\ref{lem:abc}, we can conclude that no pair
from $A$ contains $a$, which contradicts the fact that $(a,b)$ is in $D=A$.
\end{proof}

\begin{corollary}\label{cor:nonedgeinspan}
Let $A$ be a $\Delta$-implication class such that for every $(u,v)\in A$, $uv$
is an overlap edge. Then, $span(A)$ induces a clique in $G$.
\end{corollary}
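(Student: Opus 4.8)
The plan is to prove that all vertices of $span(A)$ share a single closed neighbourhood, from which the clique property is immediate. The engine is one observation that I would establish first: \emph{if $(a,b)\in A$ and $d$ is any vertex with $ad\in E$ but $bd\notin E$, then a contradiction arises.} Indeed, since $(a,b)\in A$, the edge $ab$ is an overlap edge, so $a$ is a neighbour of $b$ that overlaps $b$; and since $bd\notin E$, the vertex $d$ is not a neighbour of $b$, so $b$ does not overlap both endpoints of $ad$. Thus both conditions in the definition of avoidance hold, i.e.\ the edge $ad$ avoids $b$, which by definition yields $(a,b)\,\Delta\,(d,b)$. As $(a,b)\in A$ and a single $\Delta$-step stays inside the implication class, we get $(d,b)\in A$, forcing $db$ to be an overlap edge --- contradicting $bd\notin E$. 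Symmetrically, the edge $bd$ avoids $a$ produces $(a,b)\,\Delta\,(a,d)$, hence $(a,d)\in A$, ruling out the case $bd\in E$, $ad\notin E$. Combining the two directions, for every pair $(a,b)\in A$ and every vertex $d$ we have $ad\in E\iff bd\in E$; that is, $N[a]=N[b]$.

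Next I would propagate this equality across the whole class. Fix a reference pair $(x_0,y_0)\in A$ and put $M:=N[x_0]=N[y_0]$. Every pair of $A$ is related to $(x_0,y_0)$, hence joined to it by a chain of $\Delta$-steps; along such a chain consecutive pairs share a coordinate, while by the previous paragraph the two coordinates of \emph{each} pair have equal closed neighbourhoods. A one-line induction along the chain then shows that both coordinates of every pair of $A$ have closed neighbourhood exactly $M$, so every vertex of $span(A)$ has closed neighbourhood $M$.

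The clique property now drops out: for $p,q\in span(A)$ we have $q\in N[q]=M=N[p]$, so $pq\in E$ (or $p=q$); hence $span(A)$ induces a clique.

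The only genuinely delicate point is the first paragraph. One must verify the two avoidance conditions carefully and, crucially, keep track of which of the two notations $(x,z)\Delta(y,z)$ or $(z,x)\Delta(z,y)$ an avoiding edge produces, so that the manufactured pair lands back in $A$ (not merely in $A^{-1}$); this is why I route the two directions through $(d,b)$ and $(a,d)$ respectively. It is also exactly here --- and nowhere else --- that the hypothesis enters: the contradiction is precisely that the derived pair would be a non-edge yet belong to $A$, whose pairs are all overlap edges. Once this observation is in hand, the remaining propagation and conclusion are routine, and in particular the argument does not need Lemma~\ref{lem:abc} or Corollary~\ref{cor:transitive}.
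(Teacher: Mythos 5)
Your proof is correct, but it takes a genuinely different route from the paper's. The paper supposes a non-edge $xy$ with $x,y\in span(A)$, places $(x,y)$ in a third implication class $B$ with $A\neq B\neq A^{-1}$, manufactures $(w,y)\in B$ from a pair $(x,w)\in A$, and then applies Lemma~\ref{lem:abc} to conclude that no pair of $A$ is incident to $y$, contradicting $y\in span(A)$. You instead establish the strictly stronger fact that every vertex of $span(A)$ has the same closed neighbourhood. Your key observation is sound: if $(a,b)\in A$ and $d$ is adjacent to exactly one of $a,b$, then the edge joining $d$ to that one avoids the other (the only neighbour of the other vertex on that one-edge walk is its $A$-partner, which overlaps it, and the second avoidance condition is vacuous), so a single $\Delta$-step places a non-edge pair in $A$, contrary to the hypothesis; this is precisely the mechanism of case (i) of the paper's own Claim~\ref{clm:straddle}, specialized to the present setting. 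Your coordinate bookkeeping is correct, and in fact even a landing in $A^{-1}$ would still yield the contradiction, since $uv$ is an overlap edge iff $vu$ is. The propagation step is also fine: every intermediate pair of a $\Delta$-chain starting at a pair of $A$ lies in $A$, and consecutive pairs share a coordinate by the very definition of the $\Delta$-relation. The trade-off: the paper's argument reuses its general three-class machinery (Lemma~\ref{lem:abc}), while yours is self-contained, needs only single-step computations, and extracts more information (all of $span(A)$ consists of vertices with identical closed neighbourhoods, not merely a clique) --- though only the clique property is needed where the corollary is applied.
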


\begin{proof}
Suppose that there exist $x,y\in span(A)$ such that $xy$ is a non-edge. By
definition of $A$, we know that $(x,y)\not\in A\cup A^{-1}$. As $xy$ is a
non-edge, $(x,y)$ belongs to some $\Delta$-implication class $B$ such that $A
\neq B\neq A^{-1}$. As $x\in span (A)$, there exists some vertex $w$ such that
$(x,w)\in A$ or $(w,x)\in A$. Note that by definition of $A$, this means that
$xw$ is an overlap edge. If $wy$ is an inclusion edge, then we have $(x,w)\Delta
(x,y)$, implying that $(x,y)\in A\cup A^{-1}$, which is a contradiction.
Therefore, $wy$ is not an inclusion edge. Since we also know that $xy$ is a
non-edge and $xw$ is an overlap edge, it follows that $(x,y)\Delta (w,y)$ and
therefore, $(w,y)\in B$. Now, applying Lemma~\ref{lem:abc} with $A'=A$, $B'=
B^{-1}$ and $C'=B^{-1}$, we can conclude that no pair in $A$ is incident to $y$.
But this contradicts the fact that $y\in span(A)$. Therefore, there do not exist
$x,y\in span(A)$ such that $xy$ is a non-edge, implying that $span(A)$ induces a
clique in $G$.
\end{proof}

\subsection{\texorpdfstring{$\bf\Delta$}{}-modules}\label{sec:deltamodules}

A {\em $\Delta$-module} is a set $S\subseteq V$ that satisfies the following
two conditions.
\begin{compactenum}[\qquad(a)]
\item for every vertex $x\in V\setminus S$, one of the following is true:
\begin{compactitem}[--]
\item $x$ is nonadjacent to every vertex in $S$, or
\item for all $s\in S$, $xs$ is an inclusion edge, or
\item for all $s\in S$, $xs$ is an overlap edge,
\end{compactitem}
\item if $S$ is not a clique of $G$, then no vertex in $V\setminus S$
overlaps a vertex in $S$.
\end{compactenum}\smallskip

\noindent A $\Delta$-module $S$ is {\em trivial} if $S=V$ or $|S|=1$. Otherwise
it is {\em non-trivial}.
In \cite{mcc03}, the following is proved.

\begin{lemma}[\rm\cite{mcc03}]
If $A$ is a $\Delta$-implication class, then
\begin{compactenum}[\qquad\rm (\thetheorem.1)]
\item if $F$ is a $\Delta$-module containing vertices $a,b$, then $(a,b)\in A$
implies that $span(A)\subseteq F$.
\setreflabel{\rm (\thetheorem.\theenumi)}\label{enum:moduleoverlap}
\item $span(A)$ is a $\Delta$-module.
\setreflabel{\rm (\thetheorem.\theenumi)}\label{enum:spanmodule}
\end{compactenum}
\addtocounter{claim}{\theenumi}
\end{lemma}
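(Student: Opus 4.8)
The plan is to prove the two parts separately, using the definition of $\Delta$-module together with the triangle-type Lemma~\ref{lem:abc} and its corollaries. Throughout, the engine is that a single $\Delta$-step $(p,z)\Delta(q,z)$ moves one coordinate along an edge $pq$ that avoids the fixed coordinate $z$, so that $pz,qz$ are non-inclusion and, if both are overlap edges, $pq$ is an inclusion edge.

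For part~\ref{enum:moduleoverlap} I would fix a pair $(a,b)\in A$ with $a,b\in F$ and establish the invariant that every pair of $A$ lies in $F\times F$; since $A$ is generated from $(a,b)$ by $\Delta$-steps whose intermediate pairs stay inside $A$, it suffices to show that a step $(p,z)\Delta(q,z)$ with $p,z\in F$ forces $q\in F$ (the case where the first coordinate is fixed is symmetric). Suppose $q\notin F$. Since $pq\in E$ and $p\in F$, the first alternative of condition~(a) of the $\Delta$-module definition (non-adjacency to all of $F$) is excluded, so either $qs$ is an inclusion edge for all $s\in F$, or $qs$ is an overlap edge for all $s\in F$. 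In the first case $qz$ would be an inclusion edge, contradicting that $pq$ avoids $z$. In the second case $q$ overlaps both $p$ and $z$; if $F$ is not a clique this contradicts condition~(b), while if $F$ is a clique then $pz\in E$, so the avoidance of $z$ by $pq$ forces $p$ to overlap $z$, whence $z$ overlaps both endpoints of the edge $pq$ and the second clause of avoidance forces $pq$ to be an inclusion edge, contradicting that $q$ overlaps $p$. Hence $q\in F$, the invariant holds, and $span(A)\subseteq F$.

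For part~\ref{enum:spanmodule} I would first establish condition~(a), that any $x\notin span(A)$ has one uniform edge-type (non-adjacent, inclusion, or overlap) to all of $span(A)$. The key point is that $x$'s type cannot change across a generating $\Delta$-step $(p,z)\Delta(q,z)$: if $xp$ and $xq$ had different types, one extracts from the mismatch an edge incident to $x$ that avoids a vertex of $span(A)$, and the corresponding $\Delta$-step places a pair incident to $x$ into $A$, contradicting $x\notin span(A)$. For example, if $xp$ is an overlap edge and $xq$ a non-edge while $pq$ is an overlap edge, then $xp$ avoids $q$, so $(x,q)\Delta(p,q)$, and chaining this into a pair of $A$ forces $x\in span(A)$. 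Since every vertex of $span(A)$ is reached from $a$ or $b$ through such steps (and the two base types at $a,b$ are reconciled the same way through the non-inclusion pair $(a,b)$ itself), $x$ has a constant type on $span(A)$. Condition~(b) is then short: if $span(A)$ is not a clique, Corollary~\ref{cor:nonedgeinspan} yields a non-edge pair $(u,v)\in A$; were some $x\notin span(A)$ to overlap a vertex of $span(A)$, condition~(a) makes $x$ overlap every vertex of $span(A)$, in particular $u$ and $v$, so $xu$ avoids $v$ and $(x,v)\Delta(u,v)\in A$ forces $x\in span(A)$, a contradiction.

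The main obstacle is the case analysis behind condition~(a): one must match every possible type mismatch of $xp$ versus $xq$ across the edge $pq$ (and likewise for the base pair $(a,b)$) to a concrete avoiding edge incident to $x$, or alternatively to an application of Lemma~\ref{lem:abc} yielding the contradictory conclusion that no pair of $A$ is incident to a vertex of $span(A)$. The delicate bookkeeping is ensuring in each branch that the three implication classes involved are genuinely distinct in the required sense ($C^{-1}\neq A\neq B$), which is exactly what rules out the spurious identifications and is where Corollary~\ref{cor:transitive} and Lemma~\ref{lem:abc} do the real work.
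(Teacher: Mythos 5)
Your treatment of part~\ref{enum:moduleoverlap} is correct and follows essentially the same route as the paper: both arguments take the generating chain from $(a,b)$, locate the first $\Delta$-step whose new vertex leaves $F$, use condition (a) of the $\Delta$-module definition to force that vertex to overlap all of $F$ and condition (b) to force $F$ to be a clique, and then obtain a contradiction with the avoidance condition on the step's edge.

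Part~\ref{enum:spanmodule} is where there is a genuine gap. Your propagation step compares the type of $x$ with respect to $p$ against its type with respect to $q$ across the walk edge $pq$ of a step $(p,z)\Delta(q,z)$, and your sample deduction is ``$xp$ avoids $q$, so $(x,q)\Delta(p,q)$, and chaining this into a pair of $A$ forces $x\in span(A)$.'' This does not work as stated: $(p,q)$ is not a pair of $A$ --- the pairs of $A$ produced by the step are $(p,z)$ and $(q,z)$, whereas $pq$ is merely an edge of the walk and may even be an inclusion edge, in which case $(p,q)$ lies in no $\Delta$-implication class at all. So $(x,q)\Delta(p,q)$ places $(x,q)$ in the class of $(p,q)$, which you have no way to identify with $A$; and Lemma~\ref{lem:abc} cannot be applied to the triangle $p,q,z$ to repair this, because two of its sides, $(p,z)$ and $(q,z)$, lie in the same class $A$, violating the hypothesis $C^{-1}\neq A\neq B$. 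The comparison that actually closes the argument is the one \emph{within} each pair of $A$: for $(p,z)\in A$ and an external vertex $x$, any mismatch between the types of $xp$ and $xz$ (edge versus non-edge; inclusion versus overlap; or both overlap while $pz$ is a non-edge) directly yields $(p,z)\Delta(x,z)$ or $(p,z)\Delta(p,x)$, i.e., a pair of $A$ incident to $x$. This is exactly the auxiliary claim the paper proves first. Since consecutive pairs of the generating chain share a coordinate, the within-pair comparison propagates $x$'s type along the chain and gives condition (a); your across-edge comparison is a consequence of it, not a substitute for it. Your handling of condition (b) via Corollary~\ref{cor:nonedgeinspan} and a non-edge pair $(u,v)\in A$ is correct and matches the paper.
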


\begin{proof}
For \ref{enum:moduleoverlap}, suppose that there exists $c\in span(A)\setminus
F$. As $c\in span(A)$, there exist $x_1,y_1,x_2,y_2,\ldots,x_k,y_k\in V(G)$ such
that $(a,b)=(x_1,y_1)\Delta (x_2,y_2)\Delta\cdots\Delta (x_k,y_k)$ where for
each $i$, $1\leq i\leq k-1$, either $x_i=x_{i+1}$ or $y_i=y_{i+1}$ and $c\in\{
x_k,y_k\}$. Clearly, as $c\not\in F$ and $x_1,y_1\in F$, there exist some $i$
for which $x_i,y_i\in F$ and some $z\not\in F$ such that
$x_{i+1}=z$ and $y_{i+1}=y_i$ or $x_{i+1}=x_i$ and
$y_{i+1}=z$. By symmetry, we shall assume that $x_{i+1}=z$ and $y_{i+1}=y_i$.
In short, we have $(x_i,y_i)\Delta (z,y_i)$. This implies that $x_iz$ is an
edge. As $F$ is a $\Delta$-module and $z\not\in F$, this means that $y_iz$ is
also an edge. As $(z,y_i)\in A$, we know that $y_iz$ is an overlap
edge and this by definition of $\Delta$-module means that $x_iz$ is also an
overlap edge. Again by definition of $\Delta$-module, this further implies that
$x_iy_i$ is an edge. But as $(x_i,y_i)\in A$, $x_iy_i$ is an overlap edge and
this contradicts the fact that $(x_i,y_i)\Delta (z,y_i)$.
This proves \ref{enum:moduleoverlap}.

\medskip

For proving \ref{enum:spanmodule}, we shall first prove the following simple
claim.

\begin{claim}\label{clm:straddle}
Suppose that $(x,y)\in A$, where $A$ is a $\Delta$-implication class and $z$
is a vertex such that one of the following conditions is true:
\begin{compactenum}[\qquad\rm (i)]
\item One of $xz,yz$ is an edge and the other is a non-edge,
\item One of $xz,yz$ is an inclusion edge and the other is an overlap edge,
\item Both $xz,yz$ are overlap edges and $xy$ is a non-edge.
\end{compactenum}
Then $z\in span(A)$.
\end{claim}

\begin{proofclaim}
Suppose that condition (i) is true. We shall assume by symmetry that $xz$ is an
edge and $yz$ is a non-edge. Then we have $(x,y)\Delta (z,y)$ and therefore $z
\in span(A)$. If condition (ii) is true, then again we can assume by symmetry
that $xz$ is an inclusion edge and $yz$ is an overlap edge, which implies that
$(x,y)\Delta (z,y)$, leading us to the conclusion that $z\in span(A)$. Finally,
if condition (iii) is true, then we again have $(x,y)\Delta (z,y)$, which again
gives us $z\in span(A)$.
\end{proofclaim}

Now we shall prove \ref{enum:spanmodule}. Suppose that $span(A)$ is not a
$\Delta$-module. Then there exist vertices $u,v\in span(A)$ and a vertex $z\not
\in span(A)$ such that one of the following situations occurs: (a) one
of $uz,vz$ is an edge and the other a non-edge or (b) one of $uz,vz$ is an
inclusion edge while the other is an overlap edge or (c) both $uz,vz$ are
overlap edges but $uv$ is not an edge. Since $u,v\in span(A)$, there exist
$x_1,y_1,x_2,y_2,\ldots,x_k,y_k\in span(A)$ such that $(x_1,y_1)\Delta
(x_2,y_2)\Delta\cdots\Delta(x_k,y_k)$, where each $(x_i,y_i)\in A$, $u\in
\{x_1,y_1\}$ and $v\in\{x_k,y_k\}$. As usual, we assume that for each $i$,
$1\leq i\leq k-1$, either $x_i=x_{i+1}$ or $y_i=y_{i+1}$. Clearly, if situation
(a) occurs, then there exists some $i$ such that one of $x_iz,y_iz$ is an edge
while the other is not. But this contradicts \ref{clm:straddle}. Similarly, if
situation (b) occurs, then there exists some $i$
such that one of $x_iz,y_iz$ is an inclusion edge while the other is an overlap
edge, which again contradicts \ref{clm:straddle}. Now, suppose that situations
(a) and (b) do not occur but situation (c) occurs.  As situations (a)
and (b) do not occur and $uz$ is an overlap edge,
it must be the case that for every vertex $w\in span (A)$,
$wz$ is an overlap edge. Therefore, if there is some $(x,y)\in A$
such that $xy$ is a non-edge, then it contradicts \ref{clm:straddle}. This tells
us that for any $(x,y)\in A$, $xy$ is an overlap edge.
By Corollary~\ref{cor:nonedgeinspan}, this means that
$span(A)$ induces a clique in $G$, contradicting the fact that $uv$ is a
non-edge. This proves \ref{enum:spanmodule}.
\end{proof}

\begin{lemma}\label{lem:deltaprime}
Suppose that every $\Delta$-module in $G$ is trivial. Suppose further that $G$
has at least one overlap edge, or at least one non-edge. Then either there are
exactly two $\Delta$-implication classes $A$, $A'$ where $A'=A^{-1}$ or one
$\Delta$-implication class $A=A^{-1}$.
\end{lemma}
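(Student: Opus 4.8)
The plan is to argue by contradiction. Assume there exist two $\Delta$-implication classes $A,B$ whose inverse-closed pairs are disjoint, i.e. $\{A,A^{-1}\}\cap\{B,B^{-1}\}=\emptyset$, and derive a contradiction. This forces every class to lie in a single set $\{A,A^{-1}\}$, which is exactly the claimed dichotomy: one class when $A=A^{-1}$, and exactly two classes $A,A^{-1}$ otherwise. The existence of at least one class is immediate from the hypothesis, since an overlap edge or a non-edge is a pair that is not an inclusion edge and hence is related to itself, so it lies in some class.

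First I would record the consequence of primeness that drives everything: for every $\Delta$-implication class $A$ we have $span(A)=V$. Indeed, $span(A)$ is a $\Delta$-module by \ref{enum:spanmodule}, and since every $\Delta$-module is assumed trivial it equals $V$ or a single vertex; but any pair $(x,y)\in A$ is non-inclusion, so $x\neq y$ and $\{x,y\}\subseteq span(A)$, ruling out the singleton. Thus $span(A)=V$ for every class.

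Next I would build the triangle configuration needed for Lemma~\ref{lem:abc}. Pick any pair $(a,c)\in B$. Since $c\in span(A)=V$, there is a vertex $b$ with $(b,c)\in A$ (if $c$ occurs in $A$ only as a first coordinate, replace $A$ by $A^{-1}$, which preserves the disjointness of the inverse-closed pairs). Because $A\neq B$, the pairs $(a,c)$ and $(b,c)$ are unrelated, so $(a,c)\notdelta(b,c)$; by the definition of $\Delta$ this means the edge $ab$ does not avoid $c$. Since $ac$ and $bc$ are both non-inclusion, the first avoidance condition (every neighbour of $c$ among $a,b$ overlaps $c$) holds automatically, so the failure of avoidance means precisely that $c$ overlaps both $a$ and $b$ while $ab$ is not an inclusion edge. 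In particular $(a,b)$ lies in some class $C$.

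Finally I would apply Lemma~\ref{lem:abc} twice, according to whether $(b,a)\in A$. If $(b,a)\notin A$, then $C^{-1}\neq A$ and $A\neq B$, so the triple $a,b,c$ with $(a,b)\in C$, $(b,c)\in A$, $(a,c)\in B$ meets the hypotheses, and \ref{enum:second} gives that no pair in $A$ is incident to $a$, i.e. $a\notin span(A)=V$, a contradiction. If instead $(b,a)\in A$, I apply the lemma with apex $b$: matching $(b,a)\in A$, $(a,c)\in B$, $(b,c)\in A$ against the template $(\alpha,\beta)\in C'$, $(\beta,\gamma)\in A'$, $(\alpha,\gamma)\in B'$ forces $A'=B$, $B'=A$, $C'=A$, whose hypotheses $C'^{-1}=A^{-1}\neq B=A'$ and $A'=B\neq A=B'$ both hold by disjointness; the conclusion then expels $b$ from $span(B)=V$, again a contradiction. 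The main obstacle is exactly this second subcase: one must notice that even when $(a,b)$ falls into $A^{-1}$, the very same Triangle Lemma re-applies under a permuted assignment of classes and a different apex, and one must check that the disjointness hypothesis survives the possible swap $A\leftrightarrow A^{-1}$ and supplies both required inequalities in each application. Everything else reduces to unwinding the avoidance definition and invoking $span=V$.
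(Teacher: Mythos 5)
Your proof is correct and follows essentially the same route as the paper's: triviality of $\Delta$-modules forces $span(A)=V$ for every implication class, a triangle of pairs $(a,b),(b,c),(a,c)$ is assembled, and Lemma~\ref{lem:abc} is invoked twice under a case split to expel a vertex from a span. The one blemish is the word ``precisely'': $(a,c)\notdelta(b,c)$ can also hold simply because $ab$ is a non-edge (so that $c$ need not overlap both $a$ and $b$), but the conclusion you actually use --- that $ab$ is not an inclusion edge and hence $(a,b)$ lies in some $\Delta$-implication class $C$ --- survives in that subcase, so nothing breaks.
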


\begin{proof}
Note that $G$ has at least one $\Delta$-implication class and also its reverse,
since every overlap edge (every non-edge) belongs to some $\Delta$-implication
class, and $G$ contains at least one such pair by our assumption.  For
contradiction, let $A$, $C$ be two distinct $\Delta$-implication classes, where
$C\neq A^{-1}$. By \ref{enum:spanmodule}, both $span(A)$ and $span(C)$ are
$\Delta$-modules. Since every $\Delta$-module of $G$ is trivial, we conclude
$span(A)=span(C)=V$.

Consider any $b\in V$. Since $b\in span(A)$, there exists $c\in V$ such that
$(b,c)\in A$ or $(c,b)\in A$. Likewise, since $b\in span(C)$, there exists $a\in
V$ such that $(a,b)\in C$ or $(b,a)\in C$.
By possibly replacing $A$ by $A^{-1}$ and $C$ by $C^{-1}$, we may assume that
$(b,c)\in A$ and $(a,b)\in C$.

Note that neither of $ab$, $bc$ is an inclusion edge, since the pairs $(a,b)$,
$(b,c)$ belong to $\Delta$-implication classes $C$ and $A$, respectively.  Since
$A\neq C\neq A^{-1}$, it follows that $(a,b)\notdelta(c,b)$.  This implies that
$ac$ is not an inclusion edge, since neither of $ab$, $cb$ is.  In particular,
$(a,c)$ belongs to some $\Delta$-implication class $B$. If $B\neq A$, then by
Lemma \ref{lem:abc} no pair in $A$ is incident to $a$, namely $a\not\in
span(A)$, but $a\in V=span(A)$, a contradiction.  Thus $B=A$ in which case we
apply Lemma \ref{lem:abc} with $A':=C$, and $B':=B$, and $C':=A^{-1}$. This is
possible since $(C')^{-1}=A\neq C=A'$ and $A'=C\neq A=B=B'$. From this we
conclude that no pair in $A'=C$ is incident to $c$, namely $c\not\in span(C)$,
but $c\in V=span(C)$, a contradiction.

This shows that no such $\Delta$-implication classes $A$, $C$ exist and so there
are only two $\Delta$-implication classes in $G$, namely some class $A$ and its
reverse $A^{-1}$. This concludes the proof.
\end{proof}

\subsection{Interval orderings, consistent representations and
\texorpdfstring{$\bf\Delta$-invertible pairs}{}}

\begin{definition}[interval ordering]\label{def:intervalordering}
A linear ordering $<$ of $V(G)$ is said to be an \uline{interval ordering}
of $G$ if there are no three vertices $a,b,c\in V(G)$ with $a<b<c$ such that:
\begin{compactenum}[\qquad (i)]
\item $ab$ is a non-edge and $ac$ is an edge, or
\item $ab$ is an inclusion edge, $ac$ is a non-edge, and $bc$ is an edge, or
\item $ab$ is an overlap edge, $ac$ is an edge, and $bc$ is a non-edge, or
\item $ab$ and $bc$ are overlap edges, while $ac$ is an inclusion edge, or
\item $ab$ and $bc$ are inclusion edges, while $ac$ is an overlap edge.
\end{compactenum}
\end{definition}

\begin{definition}[consistent interval representation]
An interval representation of $G$ is said to be \emph{consistent} with the
labelling of $G$ if for any two vertices $u,v\in V(G)$, the interval for $u$
overlaps the interval for $v$ if and only if $uv$ is an overlap edge.
\end{definition}

\begin{lemma}\label{lem:orderrep}
$G$ has an interval ordering if and only if it has a consistent interval
representation.
\end{lemma}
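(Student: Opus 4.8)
The plan is to prove both directions by translating between the combinatorial obstruction (the interval ordering and its five forbidden patterns) and a geometric object (the consistent interval representation). The statement is an ``if and only if,'' so I would establish each implication separately, and I expect the harder direction to be recovering a representation from an ordering.

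\textbf{From a consistent representation to an interval ordering.} First I would assume $G$ has a consistent interval representation, assigning to each vertex $u$ an interval $[l_u,r_u]$ with distinct endpoints, where two intervals overlap (neither contains the other, but they intersect) exactly when $uv$ is an overlap edge, and one contains the other exactly when $uv$ is an inclusion edge (and they are disjoint exactly when $uv$ is a non-edge). I would then define the ordering $<$ on $V(G)$ by sorting the vertices according to their left endpoints, so $u<v$ iff $l_u<l_v$. The task is to verify that none of the five forbidden configurations (i)--(v) of Definition~\ref{def:intervalordering} can arise for a triple $a<b<c$. Each case is a routine check about intervals on a line: for instance, in case (i), if $ab$ is a non-edge then $[l_a,r_a]$ lies entirely to the left of $[l_b,r_b]$, hence also entirely to the left of $[l_c,r_c]$ since $l_b<l_c$, forcing $ac$ to be a non-edge, contradicting that $ac$ is an edge. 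The remaining cases (ii)--(v) are similar elementary statements about containment, overlap, and disjointness of three intervals ordered by left endpoint, and I would handle them by a short case analysis of the relative positions of $r_a$, $l_b$, $r_b$, $l_c$.

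\textbf{From an interval ordering to a consistent representation.} This is the direction I expect to be the main obstacle, since it requires constructing geometry from a purely order-theoretic condition. Given an interval ordering $<$, I would try to build intervals incrementally (or by directly assigning left and right endpoints consistent with the order) so that the intersection pattern matches the edge labels. The natural approach is to place left endpoints in the order dictated by $<$, and then argue that the right endpoints can be consistently placed: the forbidden patterns are precisely designed to rule out the local obstructions to such a placement. Concretely, I would show that for any two vertices $u<v$, the label of $uv$ determines the required relative position of $r_u$ versus $l_v$ and $r_v$ (non-edge forces $r_u<l_v$; overlap forces $l_v<r_u<r_v$; inclusion with $N$-containment forces one interval to nest in the other), and then verify that these local constraints are globally satisfiable because the forbidden triples guarantee transitivity/consistency of these constraints. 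The crux will be proving that no cyclic or contradictory constraint on the right-endpoint order arises; this is exactly what conditions (i)--(v) prevent, and I would argue that the absence of these patterns lets me extend the left-endpoint order to a consistent total order on all $2|V(G)|$ endpoints. I would likely formalize this by defining a partial order on the endpoints forced by the labels and the ordering $<$, and then show it is acyclic (hence extendable to a linear order of endpoints yielding the representation), deriving any hypothetical cycle back to one of the five forbidden configurations.
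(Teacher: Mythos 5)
Your forward direction (sorting vertices by left endpoints and checking that each of the five patterns would contradict elementary facts about intervals) is exactly the paper's argument and is fine. The issue is the converse, which you correctly identify as the hard direction but do not actually prove. Your plan reduces it to the claim that the constraint relation on the $2|V(G)|$ endpoints --- left endpoints totally ordered by $<$, plus $r_u<l_v$ for non-edges, $l_v<r_u<r_v$ for overlap edges, and $r_v<r_u$ for inclusion edges (always with $u<v$) --- is acyclic, and you assert that any directed cycle ``derives back to one of the five forbidden configurations.'' That assertion is the entire content of the hard direction and is not justified: conditions (i)--(v) speak only about triples of vertices, whereas a cycle in your endpoint digraph can be arbitrarily long, can alternate between left and right endpoints, and can mix all three constraint types. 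To close the gap you would need a genuine shortening argument (take a minimal cycle and exhibit a triple $a<b<c$ inside it violating one of (i)--(v)), and it is not obvious in advance that every long cycle localizes to a triple; nothing in your write-up addresses this.

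The paper avoids the global acyclicity question entirely by a different route: it proves by induction on $|V(G)|$ the stronger statement that there is a consistent representation whose left endpoints occur in the order $<$. It removes the $<$-least vertex $x$, represents the rest inductively, and then inserts $x$ with $l_x$ to the left of everything and $r_x=t+\epsilon$ where $t=\max(\{l_y\}\cup\{r_z\mid z\in S\})$, $y$ being the $<$-greatest neighbour of $x$ and $S$ the set of vertices joined to $x$ by inclusion edges. With this explicit choice, every potential failure of the intersection pattern involves $x$ and at most two other vertices, so it is immediately one of the five forbidden triples; no global consistency lemma is needed. If you want to keep your endpoint-order approach, you must supply the missing reduction of arbitrary cycles to triples; otherwise the greedy insertion at the $<$-minimum is the cleaner way to make the converse rigorous.
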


\begin{proof}
It is easy to see that given a consistent interval representation of $G$, just
listing the vertices in the order of the left endpoints of their corresponding
intervals gives an interval ordering $<$ of $G$. For instance, the last property
holds because $a < b < c$ implies that the order of the left endpoints is
$l_a < l_b < l_c$ and hence $r_a > r_b$ since $ab$ is an inclusion edge,
and $r_b > r_c$ since $bc$ is an inclusion edge, whence $r_a > r_c$ and
so the interval for $a$ contains the interval for $c$, contradicting the fact
that $ac$ is an overlap edge.

Conversely, we shall show that if there exists an interval ordering $<$ of
$V(G)$, then $G$ has a consistent interval representation.
We shall prove by induction on $|V(G)|$ that there exists a consistent interval
representation of $G$ in which the left endpoints of the intervals occur in the
same order in which their corresponding vertices occur in the ordering $<$.
If $|V(G)|=1$, then assigning any interval to the only
vertex in $V(G)$ gives the required representation of $G$. Suppose that $|V(G)|
=n$ and that the claim is true for smaller values of $|V(G)|$. Let $x$ be the
least element of $<$ and let $G'$ be the edge-labelled graph obtained by
removing $x$ from $G$. Note that the labels of edges that are not incident on
$x$ are preserved in $G'$. Also, let $<'$ be the restriction of $<$ to $V(G')$.
By the inductive hypothesis, there exists a collection of intervals $\{[l_u,
r_u]\}_{u\in V(G')}$ that forms a consistent interval representation of $G'$ and
in which for $u,v\in V(G')$, we have $u<'v$ if and only if $l_u<l_v$.
We shall define the interval $[l_x,r_x]$ corresponding to $x$ that can be added
to this representation so that we will get a consistent interval representation
of $G$ in which for $u,v\in V(G)$, we will have $u<v$ if and only if $l_u<l_v$.

Define $l_x$ to be a value that is less than $\min\{l_u~|~u\in V(G')\}$.
Define $y$ to be the greatest vertex in the ordering $<$ such that $xy$ is an
edge. Notice that $y$ always exists as $xx$ is considered to be an edge.
Let $S$ be the set of vertices other than $x$ that have inclusion edges to $x$.
Let $t=\max(\{l_y\}\cup\{r_z~|~z\in S\})$. Define $r_x$ to be $t+\epsilon$,
where $\epsilon$ is chosen to be small enough to guarantee that no endpoint of
an interval of the representation occurs in the interval $(t,r_x)$.

Note that we only need to prove that for any $u\in V(G')$, the intervals
$[l_u,r_u]$ and $[l_x,r_x]$ are disjoint if and only if $xu$ is not an edge of
$G$ and overlap if and only if $xu$ is an overlap edge of $G$.

Let $u\in V(G')$.

First, let us consider the case when $xu$ is an edge of $G$. Clearly, we have
$y\geq u$, and therefore $r_x>l_u$. Since $l_x<l_u$, the intervals $[l_u,
r_u]$ and $[l_x,r_x]$ intersect.

Now, consider the case when $xu$ is not an edge of $G$. Suppose that
the intervals $[l_x,r_x]$ and $[l_u,r_u]$ intersect. As $l_x<l_u$, it must be
the case that $l_u<t$. If $l_u<l_y$, then we have $x<u<y$ and these three
vertices violate condition (i). Therefore, it must be the case that there exists
a vertex $z\in S$ such that $l_u<r_z$. If $l_u<l_z$, then we have $x<u<z$ and
these vertices violate condition (i). If $l_z<l_u$, then the intervals $[l_u,
r_u]$ and $[l_z,r_z]$ intersect, and therefore, by the inductive hypothesis,
$uz$ is an edge of $G$. Since $x<z<u$, these vertices violate condition (ii).
Therefore, the intervals $[l_u,r_u]$ and $[l_x,r_x]$ are disjoint.

Next, let us consider the case when $xu$ is an inclusion edge. We claim that the
interval $[l_x,r_x]$ contains the interval $[l_u,r_u]$. If that is not the case,
then we have $r_x<r_u$. This implies that $t<r_u$, but this is impossible by
definition of $t$ as $u\in S$.

Finally, consider the case when $xu$ is an overlap edge. Suppose that the
interval $[l_u,r_u]$ does not overlap the interval $[l_x,r_x]$. Then, it can
only mean that $l_x<l_u<r_u<r_x$. This implies that $r_u\leq t$. As $u\not\in
S$, we can conclude that $r_u<t$. This can happen only if either $r_u<l_y$ or
there exists a vertex $z\in S$ such that $r_u<r_z$. If $r_u<l_y$, then $uy$ is
not an edge (by the inductive hypothesis) and therefore, the vertices $x,u,y$
violate condition (iii). If there exists a vertex $z\in S$ such that $r_u<r_z$,
then we have two possibilities: either (a) $l_u<l_z$, in which case $uz$ is an
overlap edge (by the inductive hypothesis), implying that $x,u,z$ violate
condition (iv), or (b) $l_z<l_u$, in which case $uz$ is an inclusion edge (by
the inductive hypothesis), implying that $x,z,u$ violate condition (v).

Therefore, the intervals $\{[l_v,r_v]\}_{v\in V(G)}$ form a consistent interval
representation of $G$.
\end{proof}

\begin{definition}[$\Delta$-invertible pair]
A pair of vertices $\{a,b\}\subseteq V(G)$ is said to be a $\Delta$-invertible
pair in $G$ if $(a,b)$ is related to $(b,a)$ in $G$.
\end{definition}

Note that if $A$ is a $\Delta$-implication class of $G$ and $A=A^{-1}$, then
any pair from $A$ is a $\Delta$-invertible pair in $G$. Conversely, if $\{a,b\}$
is a $\Delta$-invertible pair in $G$, then the $\Delta$-implication class $A$
that contains $(a,b)$ has the property that $A=A^{-1}$.

\begin{definition}
A \emph{consistently edge-labelled graph} $G$ is an edge-labelled graph that
has a transitive orientation $D_c$ of its inclusion edges such that for each
$(u,v)\in D_c$, $N[v]\subseteq N[u]$.
\end{definition}

\begin{lemma}\label{lem:delinvpair}
Let $G$ be a consistently edge-labelled graph. Then $G$ has a consistent
interval representation if and only if $G$ contains no $\Delta$-invertible pair.
\end{lemma}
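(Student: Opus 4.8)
By Lemma~\ref{lem:orderrep}, a consistent interval representation exists if and only if $G$ admits an interval ordering, so it suffices to show that $G$ has an interval ordering if and only if it contains no $\Delta$-invertible pair. The key observation is that a $\Delta$-invertible pair is precisely the obstruction to orienting the $\Delta$-implication classes consistently: recall that $\{a,b\}$ is $\Delta$-invertible iff the class $A$ containing $(a,b)$ satisfies $A = A^{-1}$. So the plan splits naturally.

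\textbf{Forward direction (interval ordering $\Rightarrow$ no $\Delta$-invertible pair).} Suppose $<$ is an interval ordering and, for contradiction, that $\{a,b\}$ is a $\Delta$-invertible pair, so there is a sequence $(a,b) = (x_1,y_1)\,\Delta\,(x_2,y_2)\,\Delta\,\cdots\,\Delta\,(x_k,y_k) = (b,a)$. The strategy is to track, along this $\Delta$-chain, the relative order of the two coordinates under $<$. I would define a sign at each step according to whether $x_i < y_i$ or $y_i < x_i$, and show using the five forbidden configurations (i)--(v) of Definition~\ref{def:intervalordering} that the $\Delta$-relation $(x_i,y_i)\,\Delta\,(x_{i+1},y_{i+1})$ cannot flip this sign. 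This gives the ``sign invariance'' that underlies any knotting-graph argument: each single $\Delta$-step preserves the ordering of the pair. But the chain starts at $(a,b)$ and ends at $(b,a)$, which have opposite signs — a contradiction. The work here is a careful case analysis, for each of the two cases $x_i = x_{i+1}$ and $y_i = y_{i+1}$, verifying that a sign flip would force one of the forbidden triples; the consistency of the labelling (the transitive orientation $D_c$ of the inclusion edges with $N[v]\subseteq N[u]$) is what guarantees the labels interact with $<$ as the definition of interval ordering requires.

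\textbf{Reverse direction (no $\Delta$-invertible pair $\Rightarrow$ interval ordering).} Here the plan is constructive. Since no $\Delta$-invertible pair exists, every $\Delta$-implication class $A$ satisfies $A \neq A^{-1}$, so by Corollary~\ref{cor:transitive} each such class is a transitive relation; together with the given transitive orientation of inclusion edges, this lets me orient \emph{all} pairs consistently. I would build a relation $<$ on $V(G)$ by orienting each overlap edge according to the (now unambiguous) choice of one class from each pair $\{A, A^{-1}\}$, orienting inclusion edges by $D_c$, and orienting non-edges by the non-edge $\Delta$-classes in the same way; then take a topological sort. The main obstacle — and the technical heart of this direction — is proving that this combined relation is acyclic (equivalently, that the chosen orientation across \emph{all} classes, overlap and non-edge, is jointly transitive and compatible), so that a linear extension exists and genuinely avoids all five patterns of Definition~\ref{def:intervalordering}. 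I expect to lean on Lemma~\ref{lem:abc} to rule out the ``mixed'' cyclic triangles, exactly as in the classical Gallai theory of comparability graphs, where the $\Delta$-implication classes play the role of Gallai's implication classes and Lemma~\ref{lem:abc} supplies the crucial uniqueness/incidence constraints that force transitivity of the global orientation.
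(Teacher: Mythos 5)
Your forward direction is essentially the paper's argument: the paper observes directly that a single $\Delta$-step preserves the relative order of left endpoints in any consistent representation, which is exactly your ``sign invariance'' transported through Lemma~\ref{lem:orderrep}; the case analysis you defer to is the right one and goes through. The problem is in the reverse direction.

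The gap is your claim that the choice of one class from each pair $\{A,A^{-1}\}$ is ``unambiguous'' and that independently chosen representatives, glued together, can be made acyclic by appealing to Lemma~\ref{lem:abc}. Neither holds. Consider three vertices $a,b,c$ with all three edges labelled as overlap edges (realized, say, by the intervals $[1,4],[2,5],[3,6]$, so a consistent representation exists and there is no $\Delta$-invertible pair). Since no vertex avoids the opposite edge of the triangle, each of $(a,b)$, $(b,c)$, $(c,a)$ lies in its own singleton $\Delta$-implication class; there are three class pairs, and the choice $a\to b$, $b\to c$, $c\to a$ is a legitimate ``one representative per pair'' selection that is a directed $3$-cycle, so it admits no topological sort. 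Lemma~\ref{lem:abc} does not forbid this: its conclusion is only that no pair of $A$ is incident to $a$, which is vacuously satisfied here. In Gallai's theory the analogous difficulty is resolved by choosing classes \emph{sequentially} and recomputing (merging) the implication classes of the residual graph after each choice --- the choices are emphatically not independent. The paper instead resolves it by an induction on $\Delta$-modules: if $G$ has a non-trivial $\Delta$-module $S$ (as $\{a,b\}$ is in the triangle above), contract $S$ to a single vertex, build an interval orientation of the quotient recursively, and substitute back an interval orientation of $G[S]$, using \ref{enum:moduleoverlap} to check the result; only in the base case where all $\Delta$-modules are trivial does Lemma~\ref{lem:deltaprime} guarantee that there is exactly \emph{one} class pair $\{A,A^{-1}\}$, at which point the choice really is canonical up to reversal and Corollary~\ref{cor:transitive} supplies transitivity. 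Without this module induction (or an equivalent sequential decomposition with class merging), your construction of the relation $<$ can fail to be acyclic, so the reverse direction does not go through as written. The final step --- merging the resulting orientation $D$ with $D_c$ and checking that the union is a transitive tournament, hence an interval ordering --- is as you describe and is where the consistency hypothesis on the labelling is used.
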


\begin{proof}
It follows from the definition of the $\Delta$ relation that if $(a,b)$ is
related to $(c,d)$ in $G$, then in any consistent
interval representation of $G$, the left endpoint of the interval of $a$ is
before the left endpoint of the interval of $b$ if and only if the left endpoint
of the interval of $c$ is before the left endpoint of the interval of $d$.
Therefore, if $G$ contains a $\Delta$-invertible pair, then it cannot have a
consistent interval representation.

Suppose that $G$ contains no $\Delta$-invertible pair. We shall show that $G$
has an interval ordering, which by Lemma~\ref{lem:orderrep} is enough to prove
that $G$ has a consistent interval representation.

We shall prove this by producing
an \uline{interval orientation} of $G$. This is an acyclic orientation $D$
of non-edges and overlap edges of $G$ such that for $x,y,z\in V(G)$, if $z$
avoids $xy$, then we either have $(x,z),(y,z)\in D$ or $(z,x),(z,y)\in D$.
In other words, it is an acyclic orientation of the non-edges and overlap edges
of $G$ such that if $(a,b)$ is related to $(c,d)$, where $a,b,c,d\in V(G)$,
then either both or neither of the two pairs are in the orientation.

Note that no $(x,y)$ is related to $(y,x)$, since $G$ contains no
$\Delta$-invertible pair.

We shall prove that $G$ has an interval orientation by induction on
$|V(G)|$. Let $S\subseteq V(G)$ be a $\Delta$-module of $G$ such that $S\neq
V(G)$. We can replace $S$
by any one vertex of $S$, construct an interval orientation of the result, and
then substitute back $S$ along with its interval orientation.  We claim that the
resulting orientation, which we shall call $D$, is an interval orientation of
$G$. Clearly, $D$ is acyclic.
Suppose there are $x,y,z\in V(G)$ such that $z$ avoids $xy$, but $(x,z)$,
$(z,y)$ are in $D$. Exactly two of $x,y,z$ must be in $S$.  The case $x,y\in S$
and $z\in V(G)\setminus S$ is impossible, since every vertex outside $S$ either
dominates $S$ or is dominated by $S$ in the orientation, by construction.  So by
symmetry, $x,z\in S$ and $y\in V(G)\setminus S$. Note that since $z$ avoids
$xy$, we have $(x,z)\Delta (y,z)$, implying that $(x,z)$ and $(y,z)$ are in the
same $\Delta$-implication class. But $S$ is a $\Delta$-module that contains $x$
and $z$ but not $y$, contradicting \ref{enum:moduleoverlap}. Therefore, $D$ is
an interval orientation of $G$.

Thus it remains to assume that $G$ contains no non-trivial $\Delta$-module.
If $G$ has no non-edges or overlap edges, then clearly it has an interval
orientation (which is the empty set). Otherwise, because $G$ contains no
$\Delta$-invertible pair, Lemma~\ref{lem:deltaprime} implies that there is
a $\Delta$-implication
class $A$ such that for every $ab$ that is an overlap edge or a non-edge of
$G$, exactly one of $(a,b)$ or $(b,a)$ is in $A$. By
Corollary~\ref{cor:transitive}, $A$ is acyclic.
As $A$ is a $\Delta$-implication class of $G$, we know that
for any two related pairs $(a,b)$, $(c,d)$ of $G$, one of them is in $A$ if
and only if the other also is. Therefore, $A$ is an interval orientation of
$G$.

This constructs an interval orientation $D$ of $G$. Suppose that $D$ is not
transitive. That is, there exist $a,b,c\in V(G)$ such that $(a,b),(b,c)\in D$
while $(a,c)\not\in D$. Clearly, $(c,a)\not\in D$ as $D$ is acyclic. Therefore
$ac$ is an inclusion edge, which implies that $(a,b)\Delta (c,b)$. But
this gives $(b,c)\Delta (c,b)$, which is impossible as $G$ has no
$\Delta$-invertible pair. Therefore, $D$ is transitive.
As $G$ is a consistently edge-labelled graph, there is a transitive orientation
$D_c$ of the inclusion edges such that for each $(u,v)\in D_c$, we have $N[v]
\subseteq N[u]$.
It can be seen that $D\cup D_c$ is also a transitive orientation.
Indeed, note that $D\cup D_c$ is a tournament and that $D$ and
$D_c$ are disjoint. So if
it is not transitive, it contains an directed triangle. At most one edge of this
triangle can be from each of $D$, $D_c$, since they are both transitive
orientations, a contradiction.

This shows that $D\cup D_c$ is a transitive tournament, and therefore, a linear
order.
It is easy to verify using Definition~\ref{def:intervalordering} that
$D\cup D_c$ is an interval ordering of $G$. This completes the proof.
\end{proof}

\section{Non-inverting sets}


Let $G=(V,E)$ be a graph. In this section, the edges of $G$ are classified into
different edge types as given in Definition~\ref{def:edgetypes}.

\begin{definition}[non-inverting set]
A set $X\subseteq V$ is a \uline{non-inverting set} of $G$ if the following
holds:
\begin{compactenum}[\quad(V1)]
\item no 2-overlap edge of $G$ has both endpoints in $X$,
\setreflabel{\rm(V\theenumi)}\label{enum:m1}
\item no $x,y\in X$ are such that there is an $xy$-walk $P\subseteq X$ and a
$yx$-walk $Q\subseteq X$ that avoid each other.
\setreflabel{\rm(V\theenumi)}\label{enum:m2}
\end{compactenum}
\end{definition}

The next lemma constitutes the final ingredient needed for Theorem
\ref{thm:main}.

\begin{lemma}\label{lem:non-inverting}
Let $G$ be a circularly-paired graph with no true twins. If $G$ has a
non-inverting set that contains exactly one vertex of every circular pair of
$G$, then $G$ is a circular-arc graph.
\end{lemma}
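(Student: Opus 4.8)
The plan is to verify the two hypotheses of Lemma~\ref{lem:normalized} for the set $X$, the crux being to produce a circular-arc representation of $G[X]$ that is normalized with respect to $G$. First I would record some easy consequences of the hypotheses. Since $G$ is circularly-paired it has no universal vertex (a universal vertex is adjacent to everything and so lies in no circular pair, circular pairs being non-edges), and since $G$ has no true twins, Lemma~\ref{lem:circ1}, in particular~\ref{enum:circ1-iii}, guarantees that every vertex has a \emph{unique} circular partner; hence the circular pairs partition $V(G)$. As $X$ meets each circular pair in exactly one vertex, each $w\in V(G)\setminus X$ lies in a pair whose other member is in $X$, so $w$ is circularly-paired with a vertex of $X$, and hypothesis~(i) of Lemma~\ref{lem:normalized} holds. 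It thus remains to construct the representation required by hypothesis~(ii).

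To do this I would treat $G[X]$ as an edge-labelled graph, labelling each edge $uv$ (with $u,v\in X$) as an inclusion or overlap edge exactly according to its type in $G$ (cf. Definition~\ref{def:edgetypes}); by~\ref{enum:m1} no such overlap edge is a $2$-overlap edge of $G$. I claim this labelled graph is consistently edge-labelled. Orient each labelled inclusion edge $uv$ as $u\to v$ whenever $N_G[v]\subseteq N_G[u]$. This is well defined because an inclusion edge of $G$ has comparable closed neighbourhoods, distinct as $G$ has no true twins; it is transitive because $N_G[w]\subseteq N_G[v]\subseteq N_G[u]$ forces $uw$ to be an inclusion edge oriented $u\to w$; and since $G[X]$ is induced, $N_G[v]\subseteq N_G[u]$ implies $N_{G[X]}[v]\subseteq N_{G[X]}[u]$, which is exactly the condition demanded of the orientation $D_c$.

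The key step is to connect~\ref{enum:m2} with the invertible-pair machinery. Because the labels on $G[X]$ are inherited from the edge types of $G$, the label-based notion of one walk avoiding another (used to define $\Delta$-implication classes) coincides, for walks contained in $X$, with the avoidance used in~\ref{enum:m2}: ``$z$ overlaps $x$'' means $zx$ is an overlap edge of $G$ in both readings, and ``$x,y$ do not overlap'' for an edge $xy$ on a walk means precisely that $xy$ is an inclusion edge. Hence~\ref{enum:m2} says exactly that $G[X]$ contains no $\Delta$-invertible pair. Applying Lemma~\ref{lem:delinvpair} to the consistently edge-labelled graph $G[X]$ therefore yields a consistent interval representation; moreover, inspecting the construction (via the interval ordering $D\cup D_c$, whose left-endpoint order realises $D_c$) the containment of intervals respects $D_c$, so the interval of $u$ contains that of $v$ precisely when $N_G[v]\subseteq N_G[u]$. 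I emphasise that this last point genuinely needs the $D_c$-respecting construction: when the private neighbours of $u$ all lie outside $X$, one may have $N_{G[X]}[u]=N_{G[X]}[v]$, so consistency alone does not fix the containment direction.

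Finally I would reinterpret this interval representation as a circular-arc representation by placing all intervals inside a single proper arc of the circle, leaving a common uncovered arc; this leaves the intersection pattern unchanged and so represents $G[X]$. It is normalized with respect to $G$: condition~\ref{enum:n1} holds because proper containment of arcs occurs exactly for inclusion edges and, by the previous paragraph, in the direction dictated by $N_G$; condition~\ref{enum:n2} holds vacuously, since by~\ref{enum:m1} no $2$-overlap edge of $G$ has both ends in $X$, while no two intervals placed inside a proper arc can cover the circle. With both hypotheses of Lemma~\ref{lem:normalized} verified, we conclude that $G$ is a circular-arc graph. The main obstacle I anticipate is the bookkeeping of the third and fourth paragraphs: checking that the two avoidance notions really coincide on walks inside $X$, and that the representation delivered by Lemma~\ref{lem:delinvpair} orders its containments according to $N_G$ rather than merely according to $G[X]$, which is precisely what makes it normalized with respect to the ambient graph $G$.
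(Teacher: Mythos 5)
Your proposal is correct and follows essentially the same route as the paper: label $G[X]$ with the edge types inherited from $G$, orient the inclusion edges by $N_G$-containment to get a consistently edge-labelled graph, use \ref{enum:m2} to rule out $\Delta$-invertible pairs, invoke Lemma~\ref{lem:delinvpair} for a consistent interval representation, and finish with Lemma~\ref{lem:normalized}. You in fact spell out two points the paper leaves implicit (that the two avoidance notions coincide on walks inside $X$, and that the containment direction of the intervals follows $N_G$ via the left-endpoint order realising $D_c$), both of which check out.
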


\begin{proof}
Let $X$ be a non-inverting set that contains exactly one vertex of every
circular pair of $G$. 
Let $\cal X$ be the subgraph induced by $X$ in $G$ in which every edge is
labelled with the edge type it has in $G$. We can get a transitive orientation
$D_c$ of the inclusion edges of $\cal X$ by orienting every such edge $uv$ from 
$u$ to $v$ if and only if $N_G[v]\subseteq N_G[u]$.
Clearly, for $u,v\in X$, whenever we have $N_G[v]\subseteq N_G[u]$, we also have
$N_{\cal X}[v]\subseteq N_{\cal X}[u]$. Therefore, for each $(u,v)\in D_c$, we
have $N_{\cal X}[v]\subseteq N_{\cal X}[u]$. This shows that $\cal X$ is a
consistently edge-labelled graph.

Since \ref{enum:m2} holds for $X$ in $G$, we can conclude that $\cal X$ contains
no $\Delta$-invertible pair. Therefore,
from Lemma~\ref{lem:delinvpair}, we know that there is a consistent interval
representation $\{[l_u,v_u]\}_{u\in X}$ for $\cal X$. Since the labels on the
edges of $\cal X$ came from the neighbourhood relations from $G$, we can
conclude that if $u,v\in X$ and
$uv$ is an overlap edge of $G$, then intervals $[l_u,r_u]$, $[l_v,r_v]$ overlap
and if $uv$ is an inclusion edge of $G$ and $N_G[u]\subseteq N_G[v]$, then
$[l_v,r_v]$ properly contains $[l_u,r_u]$. Therefore, this is a circular-arc
representation of $G[X]$ normalized with respect to $G$. (For this recall that
no 2-overlap edge has both endpoints in $X$.) Thus by
Lemma~\ref{lem:normalized}, we conclude that $G$ is a circular-arc graph as
claimed.
\end{proof}

\section{Proof of Theorem {\ref{thm:main}}}\label{sec:charac}
\newcounter{save}
\setcounter{save}{\thetheorem}
\setcounter{theorem}{1}
\indent\ref{enum:I}$\Rightarrow$\ref{enum:II}:
Let $G=(V,E)$ be a circular-arc graph with no universal vertices or true
twins. By Theorem \ref{thm:hsu}, let $\{[l_u,r_u]\}_{u\in V}$ be a normalized
circular-arc representation of $G$.  Suppose that there exists an $xy$-walk
$P=x_1\tp x_2\tp\cdots\tp x_k$ and a $yx$-walk $Q=y_1\tp y_2\tp\cdots\tp y_k$
such that $P$ and $Q$ avoid each other and both walks avoid $z$. Recall that
we may assume that for all $i$, either $x_i=x_{i+1}$ or $y_i=y_{i+1}$.

Observe that the mirror image of the representation is
also normalized.  Thus by symmetry, we may assume that $l_z<l_{x_1}<l_{y_1}$.

\begin{claim}
For all $1\leq i\leq k$, we have $l_z<l_{x_i}<l_{y_i}$.
\end{claim}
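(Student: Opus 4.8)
The plan is to prove the claim by induction on $i$, with the base case $i=1$ already supplied by the symmetry reduction $l_z<l_{x_1}<l_{y_1}$. For the inductive step I would assume $l_z<l_{x_i}<l_{y_i}$ and use the normalization assumption that for every $i$ either $x_i=x_{i+1}$ or $y_i=y_{i+1}$ to split into two symmetric cases. The single tool driving the step is Lemma~\ref{lem:avoid}, used in contrapositive form: if two vertices both avoid an edge $ab\in E$, then their left endpoints cannot separate $l_a$ and $l_b$ on the circle; equivalently, the two avoiders lie in a common arc among the two arcs cut out by $l_a$ and $l_b$.

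Concretely, suppose first that $x_i=x_{i+1}$, so that $l_{x_{i+1}}=l_{x_i}$ and only $l_{x_i}<l_{y_{i+1}}$ remains to be shown. The edge $y_iy_{i+1}$ lies on $Q$, so $z$ avoids it because $Q$ avoids $z$, and $x_i$ avoids it because $P$ avoids $Q$. Applying Lemma~\ref{lem:avoid} to the edge $y_iy_{i+1}$ with avoiders $z$ and $x_i$, I conclude that $l_z$ and $l_{x_i}$ lie in the same arc determined by $l_{y_i}$ and $l_{y_{i+1}}$. Since the inductive hypothesis already places $l_{x_i}$ strictly between $l_z$ and $l_{y_i}$ in the clockwise order started at $l_z$, the only consistent location for $l_{y_{i+1}}$ is clockwise beyond $l_{x_i}$, that is $l_z<l_{x_i}<l_{y_{i+1}}$, as required. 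The case $y_i=y_{i+1}$ is handled symmetrically, now applying Lemma~\ref{lem:avoid} to the edge $x_ix_{i+1}$ with avoiders $z$ and $y_i=y_{i+1}$ (the former avoiding because $P$ avoids $z$, the latter because $P$ avoids $Q$), which yields $l_z<l_{x_{i+1}}<l_{y_i}=l_{y_{i+1}}$.

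I expect the main obstacle to be the circular bookkeeping rather than any deep idea. Lemma~\ref{lem:avoid} only forbids one alternating pattern of four endpoints, so I must first note that all four alternating placements are forbidden (by relabelling the edge and the two candidates) before phrasing its conclusion as ``the two avoiders share an arc'', and then translate this arc statement, together with the inductive hypothesis, into the linear clockwise order started at $l_z$, taking due care of the wrap-around. A second, more delicate point is that Lemma~\ref{lem:avoid} requires four distinct endpoints, so I would verify that the advancing vertex is distinct both from $z$ and from the stationary vertex. Distinctness from the stationary vertex follows from the avoidance convention, which forces $x_i$ to overlap or be non-adjacent to $y_{i+1}$ (hence $y_{i+1}\neq x_i$, since a vertex neither overlaps nor is non-adjacent to itself), while the inductive hypothesis already gives $x_i,y_i\neq z$ and $x_i\neq y_i$; ruling out that the advancing vertex equals $z$ is the one place where I would fall back on the normalized representation and on the overlap forced by the avoidance conditions.
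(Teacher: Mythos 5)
Your proposal is correct and follows essentially the same route as the paper: the paper also reduces the inductive step (phrased as a smallest failing index) to the two cases $x_i=x_{i+1}$ and $y_i=y_{i+1}$ and applies Lemma~\ref{lem:avoid} to the edge $y_iy_{i+1}$ with avoiders $z,x_i$ (respectively to $x_ix_{i+1}$ with avoiders $z,y_{i+1}$), exhibiting the forbidden alternation $l_z<l_{y_{i+1}}<l_{x_i}<l_{y_i}$ directly rather than via your equivalent ``the two avoiders share an arc'' repackaging. The degenerate coincidences you flag at the end are likewise passed over silently in the paper's version, so nothing further is needed.
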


\begin{proofclaim}
Consider the smallest $i$ for which the claim fails. We will derive a
contradiction. Clearly $i\geq 2$, since $l_z<l_{x_1}<l_{y_1}$. Thus we have
$l_z<l_{x_i}<l_{y_i}$ while $l_z<l_{y_{i+1}}<l_{x_{i+1}}$.

First suppose that $x_i=x_{i+1}$. Since $P$ and $Q$ avoid each other, the vertex
$x_i$ avoids
the edge $y_iy_{i+1}$. Also $z$ avoids $y_iy_{i+1}$, since it avoids both walks.
Since $x_i=x_{i+1}$, we have $l_z<l_{y_{i+1}}<l_{x_i}<l_{y_i}$ where
$y_iy_{i+1}\in E$. Thus by Lemma \ref{lem:avoid}, at least one of $z$, $x_i$
does not avoid $y_iy_{i+1}$, a contradiction.

Similarly if $y_i=y_{i+1}$. We find that $y_{i+1}$ and $z$ avoid the edge
$x_ix_{i+1}$ while $l_z<l_{x_i}<l_{y_{i+1}}<l_{x_{i+1}}$. This is again
impossible by Lemma \ref{lem:avoid}.
\end{proofclaim}

From this claim, we deduce $l_z<l_{x_k}<l_{y_k}$. Since $x_k=y_1$ and $y_k=x_1$,
this means that $l_z<l_{y_1}<l_{x_1}$. However, we assume that
$l_z<l_{x_1}<l_{y_1}$, a contradiction.

Therefore $G$ contains no such paths $P$, $Q$ and consequently contains no
anchored invertible pair. By Lemma \ref{lem:circ2}, the same holds for the
circular completion of $G$. This proves \ref{enum:II}.

\medskip

\ref{enum:II}$\Rightarrow$\ref{enum:I}:
Let $G=(V,E)$ be graph with no universal vertices or true twins.  By
Lemma~\ref{lem:circ2}, we may assume that $G$ is circularly-paired (otherwise we
replace $G$ by its circular completion).  Throughout the rest of the proof, we
shall assume that $G$ contains no anchored invertible pair.

Since $G$ is circularly-paired, then by Lemma \ref{lem:circ1}, every vertex
$u\in V$ is circularly-paired with a unique other vertex. We shall write $\bar
u$ to denote this vertex.  Note that $\bar{\,\bar u\,}=u$.

Throughout the rest of the proof, let $z$ denote a fixed vertex of $G$ of
minimum degree.  Let $X$ denote the set of vertices that overlap $z$.

\begin{claim}\label{clm:h0}
For each $x\in X$, the edge $xz$ is a
1-overlap edge
\end{claim}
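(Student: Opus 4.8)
The plan is to argue by contradiction. Suppose some $x\in X$ is joined to $z$ by a 2-overlap edge, i.e. $\{x,z\}$ is a spanning pair. Since $G$ is circularly-paired, both $z$ and $x$ have circular partners $\bar z$ and $\bar x$, which are unique by \ref{enum:circ1-iii}. My goal is to show that this assumption forces $\bar x$ and $z$ to be true twins, contradicting the standing hypothesis that $G$ has none. I would first record the distinctness facts I will need: $\bar z\neq x$ and $\bar x\neq z$, each because circular pairs are non-edges while $zx\in E$; the remaining needed distinctness, $\bar x\neq\bar z$, will fall out of the second step below. Thus all vertices in play are genuinely distinct and the minimum-degree comparison at the end will be non-degenerate.

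The key steps, in order, are as follows. First I would apply condition \ref{enum:c1} to the spanning pair $\{x,z\}$: since $\{z,\bar z\}$ is a circular pair we have $z\bar z\notin E$, hence $\bar z\notin N[z]$, and \ref{enum:c1} then gives $N[\bar z]\subseteq N[x]$. Second, because $\{x,\bar x\}$ is a circular pair we have $x\bar x\notin E$, so $\bar x\notin N[x]$; combined with $N[\bar z]\subseteq N[x]$ this yields $\bar x\notin N[\bar z]$, that is, $\bar z\,\bar x\notin E$ (and in particular $\bar x\neq\bar z$, since every vertex lies in its own closed neighbourhood). Third, I would feed this non-edge into Lemma~\ref{lem:circ1}: applying \ref{enum:circ1-0} to the circular pair $\{z,\bar z\}$ with $w=\bar x$, the non-adjacency $\bar z\,\bar x\notin E$ is equivalent to $N[\bar x]\subseteq N[z]$.

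The finishing move is where the minimum-degree hypothesis on $z$ enters, and this is the real crux of the claim. Having $N[\bar x]\subseteq N[z]$ gives $|N[\bar x]|\le|N[z]|$, while the minimality of $\deg(z)$ over all vertices of $G$ gives the reverse inequality $|N[z]|\le|N[\bar x]|$. Hence $N[\bar x]=N[z]$, so $\bar x$ and $z$ are true twins, which is the desired contradiction. I expect the only delicate points to be bookkeeping rather than substance: verifying that the four vertices $x,z,\bar x,\bar z$ are pairwise distinct so that Lemma~\ref{lem:circ1} applies cleanly, and orienting \ref{enum:circ1-0} in the correct direction. It is worth noting that this claim is purely structural: the argument uses only that $G$ is circularly-paired, has no true twins, and that $z$ has minimum degree, and it does \emph{not} invoke the standing assumption that $G$ contains no anchored invertible pair.
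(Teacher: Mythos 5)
Your proof is correct and takes essentially the same route as the paper's: both reduce the claim to the containment $N[\bar x]\subseteq N[z]$ and then derive a contradiction from the minimality of $\deg(z)$ together with the absence of true twins. The only difference is that the paper obtains $N[\bar x]\subseteq N[z]$ in a single step by citing \ref{enum:circ1-i} for the circular pair $\{x,\bar x\}$ with $w=z$, whereas you re-derive that containment by hand through $\bar z$ using \ref{enum:c1} and \ref{enum:circ1-0}.
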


\begin{proofclaim}
If $xz$ were a 2-overlap edge, then $N[\bar x]\subseteq N[z]$ by
\ref{enum:circ1-i} of Lemma \ref{lem:circ1}. Since there are no true twins, the
degree of $\bar x$ would be necessarily strictly smaller than that of $z$, a
contradiction.

\end{proofclaim}

For vertices $x,y\in X$, we say that $x$ and $y$ \uline{{\em disagree}}
if there exists an $x\bar z$-walk $P$ and a $\bar zy$-walk $Q$ such that $P$
and $Q$ avoid each other, and both $P$ and $Q$ avoid $z$.

\begin{claim}\label{clm:h1}
Let $x,y\in X$. If $xy\not\in E$, then $x$ and $y$ disagree.
\end{claim}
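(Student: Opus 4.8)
The plan is to witness disagreement by the shortest possible walks: take $P:\,x\tp\bar z$ and $Q:\,\bar z\tp y$, each a single edge. Because the avoidance formalism requires the two walks to have equal length and to satisfy, at every step, that one of the two coordinates is held fixed, I will pass to the padded forms $P':\,x\tp x\tp\bar z$ and $Q':\,\bar z\tp y\tp y$; by the padding convention recorded just after the definition of walk-avoidance, $P$ avoiding $Q$ in the edgewise sense is equivalent to $P'$ avoiding $Q'$, so it suffices to check the edgewise conditions for $P$ and $Q$ and the avoidance of $z$ for the (padded) walks.

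First I would pin down the adjacencies among $x,y,z,\bar z$. Since $x,y\in X$ overlap $z$, Claim~\ref{clm:h0} gives that $xz$ and $yz$ are $1$-overlap edges, whereupon \ref{enum:circ1-ii} of Lemma~\ref{lem:circ1}, applied to the circular pair $\{z,\bar z\}$, shows that $\bar zx$ and $\bar zy$ are also $1$-overlap edges. In particular $x$ and $y$ each overlap $\bar z$, so $x\bar z,\bar zy\in E$ and $P,Q$ are genuine walks. The remaining relevant relations are the two non-edges $xy$ (the hypothesis) and $z\bar z$ (as $\{z,\bar z\}$ is a circular pair); together these force $x,y,z,\bar z$ to be pairwise distinct.

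It then remains to verify three avoidance statements, each immediate from this sparse pattern together with the loop convention. For $P$ avoids $Q$: the vertex $x$ avoids the edge $\bar zy$ since its only neighbour among $\{\bar z,y\}$ is $\bar z$, which it overlaps, and it does not overlap both endpoints because $xy\notin E$; symmetrically $y$ avoids $x\bar z$; and the loop steps created by padding are harmless since $xy\notin E$ makes $x,y$ avoid each other's loops. For $P$ avoids $z$ (and, symmetrically, $Q$ avoids $z$): the only neighbour of $z$ on the walk is $x$, which overlaps $z$, while $\bar z$ is a non-neighbour of $z$; condition~(ii) is vacuous on the edge $x\bar z$ because $z$ does not overlap $\bar z$, and on the loop $xx$ it holds because a vertex never overlaps itself. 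This shows $P$ and $Q$ avoid each other and both avoid $z$, that is, $x$ and $y$ disagree. The construction is essentially free of difficulty; the only care needed lies in the bookkeeping of the equal-length normalization and the degenerate treatment of condition~(ii) on loops, the real content being merely the observation that these minimal walks already witness disagreement once $\bar z$ is seen to overlap both $x$ and $y$.
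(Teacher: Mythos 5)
Your proposal is correct and follows essentially the same route as the paper: invoke \ref{clm:h0} and \ref{enum:circ1-ii} to see that $x\bar z$ and $y\bar z$ are $1$-overlap edges, and then witness disagreement with the walks $x\tp x\tp\bar z$ and $\bar z\tp y\tp y$. The paper leaves the final avoidance checks implicit, whereas you verify them explicitly (including the loop cases), but the argument is the same.
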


\begin{proofclaim}
By \ref{clm:h0}, $xz,yz$ are 1-overlap edges. Thus by \ref{enum:circ1-ii},
both $x\bar z, y\bar z$ are 1-overlap edges.
So if $xy\not\in E$, then $x$ and $y$ disagree as witnessed by the
walks $P=x\tp x\tp\bar z$ and $Q=\bar z\tp y\tp y$.
\end{proofclaim}

Let $H$ be the graph with vertex set $X$ where two vertices are adjacent if
they disagree.

\begin{claim}\label{clm:h2}
$H$ is a bipartite graph.
\end{claim}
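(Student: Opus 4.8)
The plan is to show that an odd cycle in $H$ would force an anchored invertible pair anchored at $z$, contradicting the standing assumption that $G$ contains no such pair. The bridge is the observation that ``$x$ and $y$ disagree'' is exactly the statement that the ordered pair $(x,\bar z)$ is \emph{related} to $(\bar z,y)$ by a pair of avoiding walks that moreover avoid $z$, where ``related'' is the $\Delta$-relation of Section~\ref{sec:edgelabelledgraphs} applied to $G$ with its natural labelling (an edge is an inclusion edge precisely when the two neighbourhoods are comparable). Indeed, the witnessing $x\bar z$-walk $P$ and $\bar zy$-walk $Q$ of a disagreement are exactly such a related pair. Thus every disagreement places $(x,\bar z)$ and $(\bar z,y)$ in one common $\Delta$-implication class, and therefore places $(y,\bar z)$ in its reverse class. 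Write $\sim$ for relatedness witnessed by avoiding walks all of whose vertices avoid $z$.

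First I would record the two closure properties on which everything rests. A vertex avoids a walk if and only if it avoids each edge of the walk, a condition depending only on the walk's edge set; hence ``avoiding $z$'' is preserved both under reversal and under concatenation of walks. Likewise, if two equal-length walks avoid each other then so do their reversals (this is the symmetry of the $\Delta$-relation already used in Section~\ref{sec:edgelabelledgraphs}), and concatenating two matching pairs of avoiding walks again yields avoiding walks (this is the transitivity of relatedness). Consequently $\sim$ is genuinely an equivalence relation on the relevant pairs, and I will work with it throughout. Since each $x\in X$ has $xz$ a $1$-overlap edge by~\ref{clm:h0}, and hence $x\bar z$ a $1$-overlap (so non-inclusion) edge by~\ref{enum:circ1-ii}, the pair $(x,\bar z)$ is non-inclusion and lies in a well-defined $\Delta$-implication class; the trivial walks $x$ and $\bar z$ (which avoid $z$ since $x$ overlaps $z$ and $\bar z$ is non-adjacent to $z$) show this class is attained by $\sim$.

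Now for the key step, suppose for contradiction that $H$ has an odd cycle $x_0,x_1,\dots,x_{2k},x_0$, so consecutive vertices (indices cyclic) disagree. Let $A$ be the $\sim$-class of $(x_0,\bar z)$. Walking around the cycle and splicing the $z$-avoiding disagreement-walks end to end, the first disagreement gives $(x_0,\bar z)\sim(\bar z,x_1)$; reversing the second gives $(\bar z,x_1)\sim(x_2,\bar z)$, whence $(x_0,\bar z)\sim(x_2,\bar z)$; and so on. Inductively, after an even number of edges the chain returns to the form $(x_i,\bar z)$ and after an odd number to the form $(\bar z,x_i)$, always inside the single class $A$. Since the cycle has the odd length $2k+1$, traversing all its edges back to $x_0$ yields $(x_0,\bar z)\sim(\bar z,x_0)$, that is $A=A^{-1}$. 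Unravelled, this produces a $z$-avoiding $x_0\bar z$-walk and a $z$-avoiding $\bar zx_0$-walk that avoid each other, i.e.\ $\{x_0,\bar z\}$ is an invertible pair anchored at $z$. The three vertices are distinct ($x_0$ overlaps $z$ whereas $\bar z$ is non-adjacent to $z$, and $z\neq\bar z$ as they form a circular pair), so this is a genuine anchored invertible pair, contradicting the hypothesis. Hence $H$ has no odd cycle and is bipartite.

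The step I expect to be the real work is the bookkeeping of the second paragraph: verifying that the disagreement-walks can be reversed and concatenated while preserving \emph{both} the mutual-avoidance property and the avoid-$z$ property, so that $\sim$ really is the $\Delta$-implication equivalence of Section~\ref{sec:edgelabelledgraphs} restricted to $z$-avoiding walks. Once that is in place, the parity statement ``each disagreement flips the class between $A$ and $A^{-1}$'' is immediate, and the odd-cycle contradiction follows exactly as above.
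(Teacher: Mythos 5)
Your argument is correct and is essentially the paper's own proof: the paper likewise takes an odd cycle in $H$ and concatenates the witnessing disagreement walks alternately (as $P_1,Q_2,P_3,\ldots,P_k$ and $Q_1,P_2,Q_3,\ldots,Q_k$) to obtain an $x_1\bar z$-walk and a $\bar zx_1$-walk that avoid each other and avoid $z$, contradicting the absence of a $z$-anchored invertible pair. Your rephrasing through the $\Delta$-relation and implication classes is just a more explicit bookkeeping of the same splicing step.
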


\begin{proofclaim}
For contradiction, suppose that $H$ contains a cycle $x_1\tp x_2\tp\cdots\tp
x_k\tp x_{k+1}=x_1$ where $k$ is odd.  

By definition, for each $i\in\{1,\ldots,k\}$, there exist walks $P_i$, $Q_i$
where $P_i$ is a walk from $x_i$ to $\bar z$, where $Q_i$ is a walk from $\bar
z$ to $x_{i+1}$, and where $P_i$ and $Q_i$ avoid each other, and both walks
avoid $z$.  Then let $P$ be the concatenation of $P_1,Q_2,P_3,\ldots,Q_{k-1},
P_k$ and let $Q$ be the concatenation of $Q_1,P_2,Q_3,\ldots,P_{k-1},Q_k$. Then
$P$ is a walk from $x_1$ to $\bar z$, while $Q$ is a walk form $\bar z$ to
$x_{k+1}=x_1$.  It follows that $P$ and $Q$ avoid each other, since for each
$i$, $P_i$ and $Q_i$ avoid each other.  By the same token,
both $P$ and $Q$ avoid $z$, since each $P_i$ and $Q_i$ avoid $z$.  Therefore
$(x_1$, $\bar z)$ is an anchored invertible pair with anchor $z$,
a contradiction.
\end{proofclaim}

In view of \ref{clm:h2}, let $(Y,X\setminus Y)$ be a bipartition of $V(H)=X$
into two independent sets of $H$.  Define $Z$ to be the union of $V\setminus
N[z]$ and $Y$.  The following is an easy consequence of \ref{clm:h1}.

\begin{claim}\label{clm:h3}
No two vertices in $Z$ form a spanning pair.
\end{claim}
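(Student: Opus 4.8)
The plan is to assume, for contradiction, that some $u,v\in Z=(V\setminus N[z])\cup Y$ form a spanning pair, and to rule out every possibility for the location of $u,v$. A preliminary observation I would record is that, because $G$ has no universal vertex, a spanning pair is never an inclusion edge: if say $N[u]\subseteq N[v]$, then \ref{enum:c1} forces each $x\in V\setminus N[v]$ to satisfy $x\in N[x]\subseteq N[u]\subseteq N[v]$, so $V\setminus N[v]=\emptyset$ and $v$ is universal. Hence a spanning pair is either a non-edge or a $2$-overlap edge.

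First I would handle the case where at least one endpoint, say $u$, lies in $V\setminus N[z]$. Then $uz\notin E$ and $u\neq z$, so $z\in V\setminus N[u]$, and \ref{enum:c2} applied to $\{u,v\}$ yields $N[z]\subseteq N[v]$, in particular $z\in N[v]$. If $v\in V\setminus N[z]$ this contradicts $vz\notin E$; if instead $v\in Y$, then $v\in X$ overlaps $z$, so $N[z]\not\subseteq N[v]$, again a contradiction. This reduces everything to the case $u,v\in Y$.

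For $u,v\in Y$ I would first note, via \ref{clm:h1}, that $Y$ is a clique: two nonadjacent vertices of $X$ disagree, which is forbidden inside the independent set $Y$. Thus $uv\in E$, and being a spanning edge that is not an inclusion edge it is a $2$-overlap edge. The goal is then to contradict the independence of $Y$ by showing $u$ and $v$ disagree. Applying Lemma~\ref{lem:circ1} to the circular pairs $\{u,\bar u\}$, $\{v,\bar v\}$, $\{z,\bar z\}$, I would extract that $\bar u v$ and $u\bar v$ are inclusion edges (with $N[\bar u]\subseteq N[v]$ and $N[\bar v]\subseteq N[u]$), that $\bar u\bar v\notin E$, and --- using \ref{clm:h0} together with \ref{enum:circ1-ii} --- that $\bar u$ and $\bar v$ overlap both $z$ and $\bar z$. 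With these facts the walks $P=u\tp\bar v\tp\bar z$ and $Q=\bar z\tp\bar u\tp v$ witness that $u,v$ disagree: in each avoidance check the relevant vertex is nonadjacent to at least one endpoint of the edge it must avoid (using the non-edges $u\bar u$, $v\bar v$, $\bar u\bar v$) and overlaps the other endpoint when adjacent, so clause (ii) of the avoidance definition is vacuous there; and both walks avoid the anchor $z$ precisely because the only edges of $P$ and $Q$ whose two endpoints $z$ overlaps are the inclusion edges $u\bar v$ and $\bar u v$.

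I expect the genuine obstacle to be this final case: the first two cases are immediate from the spanning-pair conditions, whereas ruling out a $2$-overlap edge inside $Y$ requires the neighbourhood bookkeeping among the four complementary vertices $u,v,\bar u,\bar v$ and a careful verification that $P$ and $Q$ avoid each other and avoid the anchor. Everything else is a direct consequence of \ref{clm:h1} and of the minimum-degree choice of $z$ recorded in \ref{clm:h0}.
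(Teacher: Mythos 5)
Your proof is correct, and the skeleton matches the paper's: first rule out an endpoint in $V\setminus N[z]$ via \ref{enum:c2} (identical to the paper's first step), then use \ref{clm:h1} and the independence of $Y$ to force $uv\in E$, hence a 2-overlap edge. Where you genuinely diverge is the final case. The paper never shows that $u$ and $v$ themselves disagree; instead it derives the same structural facts you do ($\bar u,\bar v\in X$, and $u\bar u$, $\bar u\bar v$, $\bar v v$ all non-edges) and then invokes \ref{clm:h1} three times to get the path $u\tp\bar u\tp\bar v\tp v$ in the disagreement graph $H$, concluding by parity that an odd-length path in $H$ between two vertices of $Y$ must place some edge inside $Y$ or inside $X\setminus Y$. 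You instead exhibit explicit walks $P=u\tp\bar v\tp\bar z$ and $Q=\bar z\tp\bar u\tp v$ certifying that $u,v$ disagree directly, which puts the edge $uv$ itself inside $Y$. Both finishes rest on exactly the same consequences of Lemma \ref{lem:circ1} and \ref{clm:h0}; yours additionally needs $N[\bar u]\subseteq N[v]$ and $N[\bar v]\subseteq N[u]$ (which do follow from \ref{enum:circ1-i}) to discharge clause (ii) of avoidance for the anchor $z$, and your verification of the walks checks out, including the padding needed to normalize them so that consecutive steps move in only one walk at a time. The trade-off: the paper's finish reuses \ref{clm:h1} as a black box and avoids constructing any new walks, while yours is more self-contained and produces an explicit witness, at the cost of the neighbourhood bookkeeping you correctly identify as the real work.
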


\begin{proofclaim}
Let $x,y\in Z$. Suppose that $\{x,y\}$ is a spanning pair.  If $xz\not\in E$,
then $N[z]\subseteq N[y]$ by \ref{enum:c2}. In particular, $y\in N[z]$ and so
$y\in Y$, since $y\in Z$ and $Y=N(z)\cap Z$. This, however, is impossible, since
every vertex in $X\supseteq Y$ overlaps $z$.  Therefore $xz\in E$
likewise $yz\in E$.  This implies $x,y\in Y$.

Now if $xy\not\in E$, then $x$ and $y$ disagree by \ref{clm:h1}, in which
case $x$, $y$ are adjacent in $H$. This is impossible, since $Y$ is an
independent set of $H$.

Thus $xy\in E$ and so $xy$ is a 2-overlap edge. This implies that $\bar x\bar
y\not\in E$ by \ref{enum:circ1-0} and \ref{enum:circ1-i} of Lemma
\ref{lem:circ1}.  Also $x\bar x,y\bar y\not\in E$ by definition. In addition,
$z\bar x$ and $z\bar y$ are 1-overlap edges by \ref{enum:circ1-ii}, since $zx$
and $zy$ are 1-overlap edges because $x,y\in X$ and \ref{clm:h0}.  This shows
that both $\bar x$ and $\bar y$ are in $X$. Thus by \ref{clm:h1}, $x\bar x$,
$\bar x\bar y$, $\bar y y$ are edges of $H$. It follows that at least one of
these edges has both endpoints in $Y$ or in $X\setminus Y$.  But both $Y$ and
$X\setminus Y$ are independent sets of $H$, a contradiction.

\end{proofclaim}

\begin{claim}\label{clm:h3.5}
For every $u\in V$, exactly one of $u$, $\bar u$ belongs to $Z$.
\end{claim}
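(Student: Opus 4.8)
The plan is to classify each circular pair $\{u,\bar u\}$ according to how $u$ relates to $z$, and to check in each case that exactly one of $u,\bar u$ lands in $Z=(V\setminus N[z])\cup Y$. Throughout I would lean on Lemma~\ref{lem:circ1}. The essential preliminary observation is that, because $z$ has minimum degree and $G$ has no true twins, $N[u]\not\subseteq N[z]$ for every $u\neq z$ (otherwise $\deg u\le\deg z$ together with minimality forces $N[u]=N[z]$). Applying \ref{enum:circ1-0} to the pair $\{z,\bar z\}$ then shows $\bar z w\in E$ for all $w\neq z$, so the equivalences of Lemma~\ref{lem:circ1} specialize conveniently; this is the computational backbone of the whole argument.

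The next step is to show that $X$ is invariant under the pairing operation, i.e. $u\in X\iff\bar u\in X$, so that every circular pair is either contained in $X$ or disjoint from $X$. Indeed, by \ref{clm:h0} a vertex lies in $X$ exactly when its edge to $z$ is a $1$-overlap edge, and \ref{enum:circ1-ii} applied with $w=z$ says $uz$ is a $1$-overlap edge if and only if $\bar u z$ is. This dichotomy is what lets the remaining analysis split cleanly into two cases.

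For a pair with $u,\bar u\in X$: both vertices overlap $z$, hence both lie in $N(z)$ and neither lies in $V\setminus N[z]$, so membership in $Z$ is governed entirely by membership in $Y$. Since $u\bar u\notin E$ (a circular pair is a non-edge) and $u,\bar u\in X$, Claim~\ref{clm:h1} shows $u$ and $\bar u$ disagree, so $u\bar u$ is an edge of $H$; as $(Y,X\setminus Y)$ is a bipartition of $H$ into independent sets, $u$ and $\bar u$ lie on opposite sides, whence exactly one belongs to $Y$ and therefore to $Z$.

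For a pair disjoint from $X$ I would dispose of $\{z,\bar z\}$ directly ($z\in N[z]\setminus X$ gives $z\notin Z$, while $\bar z\in V\setminus N[z]\subseteq Z$), and for the remaining pairs (where $u,\bar u\neq z$) prove that exactly one of $u,\bar u$ is a non-neighbour of $z$. The key identities, from \ref{enum:circ1-0} applied to $\{u,\bar u\}$, are $\bar u z\notin E\iff N[z]\subseteq N[u]$ and, symmetrically, $uz\notin E\iff N[z]\subseteq N[\bar u]$. The main obstacle is ruling out that both $u$ and $\bar u$ dominate $z$: if $N[z]\subseteq N[u]$ and $N[z]\subseteq N[\bar u]$, then the first yields $\bar u z\notin E$ while the second gives $z\in N[\bar u]$, i.e. $\bar u z\in E$, a contradiction. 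Hence at most one of $u,\bar u$ dominates $z$, so (by the identities) at most one is a non-neighbour; and if both were neighbours then, being outside $X$, the edges $uz$ and $\bar u z$ would be inclusion edges, which together with $N[u],N[\bar u]\not\subseteq N[z]$ would force both to dominate $z$ — again impossible. Thus exactly one of $u,\bar u$ lies in $V\setminus N[z]\subseteq Z$ and the other in $N(z)\setminus X$, which is outside $Z$. Combining the two cases establishes the claim.
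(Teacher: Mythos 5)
Your proof is correct, and it takes a noticeably different route from the paper's. The paper disposes of the claim in two lines: both $u,\bar u\in Z$ is impossible because a circular pair is in particular a spanning pair and $Z$ contains none by \ref{clm:h3}; both $u,\bar u\notin Z$ is impossible because then $uz$ and $\bar uz$ would both be inclusion edges, contradicting \ref{enum:circ1-0} and \ref{enum:circ1-i}. You never invoke \ref{clm:h3}; instead you first establish that $X$ is closed under the pairing $u\mapsto\bar u$ (via \ref{clm:h0} and \ref{enum:circ1-ii}) and then prove ``exactly one'' directly in each of the two resulting cases --- the disagreement-plus-bipartition argument when $u,\bar u\in X$, and a neighbourhood-domination argument when the pair misses $X$. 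The trade-off is instructive: the paper's argument is shorter, but its second step is terse --- the assertion that neither of $u,\bar u$ overlaps $z$ when both lie outside $Z$ silently needs exactly the $X$-invariance and \ref{clm:h1}-disagreement reasoning you spell out in your first case, since a vertex of $X\setminus Y$ overlaps $z$ yet lies outside $Z$. Your version makes that explicit, at the cost of an extra case and the preliminary observation that $N[u]\not\subseteq N[z]$ for $u\neq z$ (the same minimum-degree trick the paper uses inside \ref{clm:h0}). One cosmetic remark: the derived fact that $\bar z$ is adjacent to every $w\neq z$ is never actually used in your argument and could be dropped.
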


\begin{proofclaim}
Both $u$, $\bar u$ cannot belong to $Z$, because $Z$ contains no spanning pair
by \ref{clm:h3}. If both $u$ and $\bar u$ do not belong to $Z$, then neither of
$u$, $\bar u$ overlaps or is non-adjacent to $z$. It follows
that both $uz$ and $\bar uz$ are inclusion edges. But this is impossible by
\ref{enum:circ1-0} and \ref{enum:circ1-i}.
\end{proofclaim}

\begin{claim}\label{clm:h4}
Let $a,b,c\in Z$.  If $a$ avoids $bc$ but $z$ does not, then $a\in V\setminus
N[z]$, $b,c\in Y$, and $b$ overlaps $c$.
\end{claim}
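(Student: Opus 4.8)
The plan is to unravel both avoidance hypotheses through the defining feature of $Z$: by construction every $w\in Z$ lies in $V\setminus N[z]$ or in $Y\subseteq X$, so $wz$ is either a non-edge or, by \ref{clm:h0}, a $1$-overlap edge; in particular no vertex of $Z$ has an inclusion edge to $z$, and $z\notin Z$. Consequently, for \emph{any} edge $bc$ with $b,c\in Z$, clause~(i) of ``$z$ avoids $bc$'' holds automatically, since every endpoint adjacent to $z$ actually overlaps $z$. Thus the failure of $z$ to avoid $bc$ can only originate from clause~(ii).

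First I would extract the conclusions about $b,c$. Since $z$ does not avoid $bc$ while clause~(i) holds, clause~(ii) must fail: $z$ overlaps both $b$ and $c$, and $b,c$ overlap each other. Because $X$ is exactly the set of vertices overlapping $z$, this yields $b,c\in X$; and since $b,c\in Z$ with $X\cap Z=Y$ (vertices of $X$ are adjacent to $z$, so $X\cap(V\setminus N[z])=\emptyset$), we obtain $b,c\in Y$, with the overlap of $b$ and $c$ already in hand. A small point worth recording here is that $b\ne c$, since a vertex never overlaps itself and hence $z$ always avoids a loop $bb$.

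Next I would prove $a\in V\setminus N[z]$ by contradiction. As $a\in Z=(V\setminus N[z])\cup Y$, the only alternative is $a\in Y$; assume it. Since $a$ avoids $bc$ and $b$ overlaps $c$, clause~(ii) for $a$ forbids $a$ from overlapping both $b$ and $c$, so by symmetry say $a$ does not overlap $b$. Then clause~(i) for $a$ forces $ab\notin E$, as any endpoint adjacent to $a$ would have to overlap $a$. (In particular $a\ne b$.) Now $a,b\in Y\subseteq X$ with $ab\notin E$, so \ref{clm:h1} says $a$ and $b$ disagree, i.e.\ $ab$ is an edge of $H$ with both ends in $Y$ --- contradicting the independence of $Y$ in $H$. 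Hence $a\in V\setminus N[z]$.

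\textbf{The main obstacle} is the asymmetry between the two hypotheses. For $z$, membership $b,c\in Z$ makes clause~(i) free, so ``$z$ does not avoid $bc$'' pins down clause~(ii) and hands us the triple of overlaps. For $a$, avoidance is a genuine assumption, and the crux is to run clause~(ii) in reverse --- turning the forbidden double overlap into the non-edge $ab$ via clause~(i) --- so that \ref{clm:h1} and the bipartition of $H$ can be applied. The only place an error could creep in is losing track of which clause is being used for $z$ versus $a$.
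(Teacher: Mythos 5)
Your proof is correct and follows essentially the same route as the paper's: deduce from the structure of $Z$ that clause~(i) of avoidance is automatic for $z$ on any edge inside $Z$, so the failure forces $z$ to overlap both $b,c$ and $b$ to overlap $c$ (hence $b,c\in Y$), then derive a non-edge between $a$ and one of $b,c$ and invoke \ref{clm:h1} together with the independence of $Y$ in $H$ to rule out $a\in Y$. The extra observations (that $b\neq c$, and the explicit identification $X\cap Z=Y$) are harmless refinements of the same argument.
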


\begin{proofclaim}
Let $a,b,c\in Z$ be such that $a$ avoids $bc$ but $z$ does not.  This means that
$z$ is adjacent to one or both of $b,c$.  We observe further.  If $zb\in E$,
then $z$ overlaps $b$, since $Y=N[z]\cap Z$ and $Y\subseteq X$.  Likewise, if
$zc\in E$, then $z$ overlaps $c$.  Thus since $z$ does not avoid $bc$, it must
be the
case that $z$ overlaps both $b$ and $c$, and $b$ overlaps $c$.  It follows that
$b,c\in Y$, since $Y=N[z]\cap Z$.  On the other hand, since $a$ avoids $bc$, and
$b$ overlaps $c$, it follows that $a$ is non-adjacent to at least one of
$b$,$c$.

Suppose that $a\in Y$. Then we find by \ref{clm:h1} that $a$ disagrees with
at least one of $b$, $c$. Thus $a$ is adjacent in $H$ to at least one of $b$,
$c$. But this is impossible, since $a,b,c\in Y$ and $Y$ is an independent set of
$H$.  Therefore $a\in V\setminus N[z]$ as claimed.
\end{proofclaim}

\begin{claim}\label{clm:h5}
There are no $x,y\in Y$ such that there exists an $x\bar z$-walk
$P\subseteq Z$ and a $\bar zy$-walk $Q\subseteq Z$ that avoid each other.
\end{claim}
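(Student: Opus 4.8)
The plan is to argue by contradiction and work with a minimal counterexample. Suppose such walks exist, and among all choices of $x,y\in Y$ together with an $x\bar z$-walk $P\subseteq Z$ and a $\bar zy$-walk $Q\subseteq Z$ that avoid each other, pick one minimizing the number of edges $e$ occurring on $P$ or $Q$ for which $z$ does \emph{not} avoid $e$; call such an $e$ a \emph{bad} edge. The point of this measure is that if there are no bad edges, then $z$ avoids both $P$ and $Q$: condition~(i) of avoidance is automatic because every vertex of $Z$ that is adjacent to $z$ lies in $Y$ and hence overlaps $z$, while condition~(ii) is exactly the statement that no edge of the walks is bad. Once $z$ avoids both walks, $x$ and $y$ disagree when $x\neq y$, contradicting that $Y$ is an independent set of $H$; and when $x=y$ the two walks exhibit $(x,\bar z)$ as an anchored invertible pair with anchor $z$ (the vertices $x,\bar z,z$ being distinct since $x\in Y\subseteq X$), contradicting the standing assumption that $G$ has no anchored invertible pair. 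So it suffices to show the minimal counterexample has no bad edge, which I do by producing from any bad edge a counterexample with strictly fewer bad edges.

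Since reversing both walks turns an avoiding pair $P\colon x\tp\cdots\tp\bar z$, $Q\colon\bar z\tp\cdots\tp y$ into an avoiding pair running from $y$ to $\bar z$ and from $\bar z$ to $x$, I may assume a bad edge lies on the walk that runs from a vertex of $Y$ to $\bar z$; call it $P=x_1\tp\cdots\tp x_k$ with $x_1\in Y$, $x_k=\bar z$, and let $x_ix_{i+1}$ be bad. Its partner on $Q$ is $y_{i+1}$, and since the walks are normalized and $x_i\neq x_{i+1}$ we have $y_i=y_{i+1}=:w$. Applying \ref{clm:h4} with $a=w$, $b=x_i$, $c=x_{i+1}$ (legitimate because $w$ avoids $x_ix_{i+1}$ while $z$ does not) yields $x_i,x_{i+1}\in Y$ with $x_i$ overlapping $x_{i+1}$, and $w\in V\setminus N[z]$. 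Because $\{z,\bar z\}$ is a circular, hence spanning, pair, \ref{enum:c2} gives $N[w]\subseteq N[\bar z]$; thus $\bar zw$ is an edge (an inclusion edge when $w\neq\bar z$), and $z$ avoids it, being non-adjacent to both $\bar z$ and $w$.

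The key step is then the surgery: discard the prefixes up to the bad edge and re-enter through $\bar z$, setting $P'=x_{i+1}\tp x_{i+1}\tp x_{i+2}\tp\cdots\tp x_k$ and $Q'=\bar z\tp w\tp y_{i+2}\tp\cdots\tp y_k$. Both stay inside $Z$ (note $w\in V\setminus N[z]\subseteq Z$ and $\bar z\in Z$) and run from $x_{i+1}\in Y$ to $\bar z$ and from $\bar z$ to $y\in Y$. Away from the new junction, mutual avoidance is inherited from $P,Q$; at the junction I verify that $w$ avoids the loop $x_{i+1}x_{i+1}$ (it is non-adjacent to or overlaps $x_{i+1}$, as it avoids $x_ix_{i+1}$) and that $x_{i+1}$ avoids $\bar zw$ (it overlaps $\bar z$, is non-adjacent to or overlaps $w$, and $\bar z,w$ do not overlap). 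Finally neither newly introduced edge is bad: $z$ overlaps $x_{i+1}\in Y$ so it avoids the loop, and $z$ is non-adjacent to both ends of $\bar zw$; hence $P',Q'$ have strictly fewer bad edges, contradicting minimality. I expect this junction bookkeeping to be the main obstacle — the whole reroute rests on the fact that a vertex outside $N[z]$ attaches to $\bar z$ by an inclusion edge, so that passing through $\bar z$ reintroduces nothing that $z$ can ``see'' while still keeping the two walks mutually avoiding.
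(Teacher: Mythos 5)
Your proof is correct and follows essentially the same route as the paper's: both arguments take a minimal counterexample, use \ref{clm:h4} to pin down the structure around an edge that $z$ fails to avoid, and then perform a local surgery rerouting one walk through $\bar z$ (exploiting $N[w]\subseteq N[\bar z]$ for $w\notin N[z]$ and the fact that $\bar z$ overlaps every vertex of $Y$). The only differences are cosmetic: you minimize the number of bad edges and cut the prefix, whereas the paper minimizes length (then maximizes loops) and cuts the suffix.
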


\begin{proofclaim}
For contradiction, consider $x,y\in Y$ be such that there exists an $x\bar
z$-walk $P=x_1\tp x_2\tp\cdots\tp x_k$ and a $\bar zy$-walk $Q=y_1\tp y_2\tp
\cdots\tp y_k$ where $P$ and $Q$ avoid each other and all vertices on both $P$
and $Q$ belong to $Z$.  Pick $x,y$ so that $P$, $Q$ are shortest possible and
(subject to that) contain as many loops as possible.

If both $P$ and $Q$ avoid $z$, then $x$ and $y$ disagree.
Thus $x,y$ are adjacent in $H$. This is impossible, since $x,y\in Y$ and $Y$ is
an independent set of $H$.

Thus $z$ does not avoid some edge on $P$ or $Q$. Note the symmetry between $P$
and $Q$. (We can swap $x,y$ and/or replace $P$, $Q$ by the reverse of $Q$ and
reverse of $P$, respectively.) 

Therefore, by this symmetry, we may assume that for some $i<k$, the edge $y_i
y_{i+1}$ of $Q$ does not avoid $z$. Since $P$ and $Q$ avoid each other, this
implies that
$x_i$ avoids $y_iy_{i+1}$.  This implies by \ref{clm:h4} that $y_i,y_{i+1}\in
Y$, that $y_i$ overlaps $y_{i+1}$, and that $x_i\in V\setminus N[z]$.

Now recall that $i\leq k-1$.  Suppose first that  $x_i=\bar z$. Then $P'=x_1\tp
x_2\tp\cdots\tp x_i$ is an $x_1\bar z$-walk and $Q'=y_1\tp y_2\tp\cdots\tp y_i$
is an $\bar zy_i$-walk such that $P'$ and $Q'$ avoid each other, both are walks
with all vertices in $Z$, and $x_1,y_i\in Y$.  But $P'$ and $Q'$ are shorter
than $P$ and $Q$, contradicting our choice of $P$ and $Q$.

We may therefore assume that $x_i\neq \bar z$.  Since $y_i\in X$, we have by
\ref{clm:h0} that $zy_i$ is a 1-overlap edge.  Thus $\bar zy_i$ is a 1-overlap
edge by \ref{enum:circ1-ii}.  Likewise, since $x_iz\not\in E$, we deduce $N[\bar
z]\supseteq N[x_i]$ by \ref{enum:circ1-0}.  This implies that $y_i$ avoids the
edge $x_i\bar z$.  Note $y_i\neq y_{i+1}$, since $y_i$ overlaps $y_{i+1}$.

Thus we let $P'=x_1\tp x_2\tp\cdots\tp x_i\tp\bar z$ and let $Q'=y_1\tp
y_2\tp\cdots\tp y_i\tp y_i$. It follows that $P'$ and $Q'$ avoid each other,
and the two walks
have all vertices in $Z$ where $P'$ is an $x_1\bar z$-walk and $Q'$ is a $\bar z
y_i$-walk. Moreover, we have $x_1,y_i\in Y$.  But since $i\leq k-1$, the two
walks are either shorter than $P$ and $Q$ (if $i\leq k-2$) or contains more
loops (since $y_i\neq y_{i+1}$), again
contradicting our choice of $P$ and $Q$.

This concludes the proof.
\end{proofclaim}

A proof similar to the above gives the following.

\begin{claim}\label{clm:h6}
$Z$ is a non-inverting set of $G$.
\end{claim}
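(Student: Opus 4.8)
The plan is to verify the two defining conditions \ref{enum:m1} and \ref{enum:m2} of a non-inverting set for the set $Z$, closely following the scheme of \ref{clm:h5}. Condition \ref{enum:m1} is immediate: a $2$-overlap edge has by definition a spanning pair as its endpoints, and by \ref{clm:h3} no two vertices of $Z$ form a spanning pair, so no $2$-overlap edge of $G$ can have both endpoints in $Z$.

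For \ref{enum:m2}, I would argue by contradiction. Suppose there are $a,b\in Z$ together with an $ab$-walk $P=x_1\tp\cdots\tp x_k\subseteq Z$ and a $ba$-walk $Q=y_1\tp\cdots\tp y_k\subseteq Z$ that avoid each other, and choose such a configuration so that $k$ is as small as possible and, subject to that, $P$ and $Q$ contain as many loops as possible (exactly the measure used in \ref{clm:h5}). First note that $z\notin Z$, so $a,b\neq z$, while $a\neq b$ since $\{a,b\}$ is a genuine pair. If both $P$ and $Q$ avoid $z$, then the very definition of an anchored invertible pair makes $\{a,b\}$ an invertible pair anchored at $z$, contradicting the standing assumption that $G$ has no anchored invertible pair. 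Hence some edge of $P$ or $Q$ fails to avoid $z$, and by the symmetry between the two walks (swapping $P,Q$ and/or passing to their reverses, as in \ref{clm:h5}) I may assume that the edge $y_iy_{i+1}$ of $Q$ does not avoid $z$ for some $i<k$.

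Since $P$ and $Q$ avoid each other, $x_i$ avoids $y_iy_{i+1}$, so \ref{clm:h4} applies and yields $x_i\in V\setminus N[z]$, $y_i,y_{i+1}\in Y$, and that $y_i$ overlaps $y_{i+1}$ (so $y_i\neq y_{i+1}$ and therefore $x_i=x_{i+1}$). From $x_iz\notin E$ and \ref{enum:circ1-0} I get $N[x_i]\subseteq N[\bar z]$, so $x_i\bar z$ is an inclusion edge; and since $y_i,y_{i+1}\in X$ the edges $zy_i,zy_{i+1}$ are $1$-overlap edges by \ref{clm:h0}, whence $\bar zy_i,\bar zy_{i+1}$ are $1$-overlap edges by \ref{enum:circ1-ii}, so $\bar z$ overlaps both $y_i$ and $y_{i+1}$. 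Together these facts show that each of $y_i,y_{i+1}$ avoids the edge $x_i\bar z$, exactly as in \ref{clm:h5}. Finally, $z\bar z\notin E$ puts $\bar z\in V\setminus N[z]\subseteq Z$. With all this in hand, I would reroute the two walks through $\bar z$ at the bad step $i$ — splitting $P,Q$ there and reconnecting the pieces via the edge $x_i\bar z$ together with a loop at $y_i$ (respectively $y_{i+1}$) — to produce a configuration that is again an invertible pair lying inside $Z$, but strictly shorter, or of the same length with strictly more loops, contradicting the minimal choice of $P,Q$.

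I expect this last rerouting step to be the main obstacle, for the same reason it is delicate in \ref{clm:h5}: one must check that the spliced walks still avoid each other at the newly created steps (the edges incident to $\bar z$) and still connect a genuine pair of vertices of $Z$, all while strictly decreasing the chosen measure. The subtle point is that $\bar z$ itself does \emph{not} avoid the overlapping edge $y_iy_{i+1}$ (it overlaps both $y_i$ and $y_{i+1}$, which overlap each other), so the reroute cannot simply substitute $\bar z$ for $x_i$ across the bad step; instead it must truncate at position $i$ on one side and use the domination $N[x_i]\subseteq N[\bar z]$ to carry the other side through, in the manner of the two cases ($x_i=\bar z$ and $x_i\neq\bar z$) distinguished in \ref{clm:h5}. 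Once the reroute is shown to yield a strictly smaller counterexample, minimality gives the desired contradiction and establishes \ref{enum:m2}, completing the proof that $Z$ is a non-inverting set of $G$.
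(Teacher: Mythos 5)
Your verification of \ref{enum:m1} via \ref{clm:h3}, and everything you derive up to and including the fact that $y_i,y_{i+1}$ avoid the edge $x_i\bar z$ (via \ref{clm:h4}, \ref{clm:h0}, \ref{enum:circ1-0}, \ref{enum:circ1-ii}), matches the paper. The gap is in the final rerouting step, and it is a real one, not just a detail to be checked. The natural walks you obtain after splitting at the bad edge $y_iy_{i+1}$ are a $y_{i+1}\bar z$-walk and a $\bar z y_i$-walk: the first walk now ends at $\bar z$ while the second still must start where the first ends and end where the first starts, and since $y_i\neq y_{i+1}$ and neither equals $\bar z$, this is \emph{not} an ``$ab$-walk plus $ba$-walk'' configuration. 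So it is not a witness to the failure of \ref{enum:m2} at all, and the minimality of your chosen $P,Q$ among such witnesses says nothing about it. Worse, the construction that actually closes the walks (you must wrap around using $x_1=y_k$ and $x_k=y_1$ to carry, say, $y_{i+1}\tp y_{i+2}\tp\cdots\tp y_{k-1}\tp x_1\tp\cdots\tp x_i\tp\bar z$) produces walks that are \emph{longer} than the originals, so ``strictly shorter, or same length with more loops'' cannot be achieved. The induction that works on $x\bar z$/$\bar zy$-configurations (where truncation preserves the shape of the configuration) does not transfer to $xy$/$yx$-configurations, precisely because truncation changes the endpoints.

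The paper's proof uses no minimality argument in this claim. It performs the wrap-around splice once, obtaining a $y_{i+1}\bar z$-walk $P'$ and a $\bar zy_i$-walk $Q'$ with all vertices in $Z$ and $y_i,y_{i+1}\in Y$, and then observes that this configuration is exactly what \ref{clm:h5} forbids; the minimality/induction lives entirely inside the proof of \ref{clm:h5}, where the configuration type is stable under truncation. To repair your argument, replace ``contradicting the minimal choice of $P,Q$'' by ``contradicting \ref{clm:h5}'', drop the extremal choice of $P,Q$ altogether, and make the wrap-around concatenation explicit.
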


\begin{proofclaim}
The condition \ref{enum:m1} holds by \ref{clm:h3}. For \ref{enum:m2}, assume for
contradiction that for some $x,y\in Z$ there is an $xy$-walk $P=x_1\tp x_2\tp
\cdots\tp x_k$ and a $yx$-walk $Q=y_1\tp y_2\tp\cdots\tp y_k$ where $P$ and $Q$
avoid each other and all vertices on $P$ and $Q$ are in $Z$.  If both $P$ and
$Q$ avoid $z$, then $\{x,y\}$ is a $z$-anchored invertible pair.  But $G$
contains no such a pair.

So by symmetry between $P$ and $Q$, we may assume that for some $i<k$, the edge
$y_iy_{i+1}$ of $Q$ does not avoid $z$.  Note that $x_i$ avoids $y_iy_{i+1}$,
since $P$ and $Q$ avoid each other.  Thus by \ref{clm:h4}, we conclude that
$y_i,y_{i+1}\in Y$, that $y_i$ overlaps $y_{i+1}$, and that $x_iz\not\in E$.
Since
$y_i,y_{i+1}\in Y$, we deduce by \ref{clm:h0} that $zy_i$ and $zy_{i+1}$ are
1-overlap edges.  So $\bar zy_i$ and $\bar zy_{i+1}$ are also 1-overlap edges,
by~\ref{enum:circ1-ii}.  Similarly, since $x_iz\not\in E$, we have $N[\bar
z]\supseteq N[x_i]$ by \ref{enum:circ1-0}.  This shows that both $y_i$ and
$y_{i+1}$ avoid the edge~$\bar zx_i$.

Finally, recall that $x=x_1=y_k$ and $y=y_1=x_k$, and that $y_{i+1}$ avoids
$x_ix_{i+1}$, since $P$ and $Q$ avoid each other. Thus we define \smallskip

\centereq{
$\begin{array}{@{~}c@{~}c@{~}c@{~}c@{~}c@{~}c@{~}c@{~}c@{~}c@{~}c@{~}c%
@{~}c@{~}c@{~}c@{~}c@{~}c@{~}c@{~}c@{~}c@{~}c@{~}c@{~}c@{~}c@{~}c@{~}c@{~}}
P' & = & y_{i+1} & \tp &  y_{i+1} & \tp &  y_{i+1} & \tp &  y_{i+2} & \tp &
\cdots & \tp &  y_{k-1} & \tp &  x_1 & \tp &  x_2 & \tp & \cdots & \tp &  x_i &
\tp & \bar z\\
Q' & = & \bar z & \tp &  x_i & \tp &  x_{i+1} & \tp &  x_{i+2} & \tp & \cdots &
\tp &  x_{k-1} & \tp &  y_1 & \tp &  y_2 & \tp & \cdots & \tp &  y_i & \tp & y_i
\end{array}$
}\smallskip

We observe that $P'$ and $Q'$ are walks with all vertices in $Z$. By
construction, it follows that $P'$ and $Q'$ avoid each other, where $P'$ is a
$y_{i+1}\bar z$-walk, and $Q'$ is a $\bar zy_i$ walk. Moreover, $y_i,y_{i+1}\in
Y$. This, however, contradicts \ref{clm:h5}. We must therefore conclude that
walks $P$, $Q$ do not exist, which shows \ref{enum:m2}.

Thus $Z$ is indeed a non-inverting set of $G$.
\end{proofclaim}

Now, from \ref{clm:h4} and \ref{clm:h6}, we conclude that $G$ and set $Z$
satisfy the condition of Lemma \ref{lem:non-inverting}. Therefore $G$ is a
circular-arc graph. This proves \ref{enum:I}.

The proof of Theorem \ref{thm:main} is now complete.\hfill\qed
\setcounter{theorem}{\thesave}

\section{Certifying algorithm}\label{sec:certif}

\subsection{Knotting graph}\label{sec:knot}

The notion of \uline{knotting graph} comes from the work of Gallai
\cite{gallai}. It refers to the graph constructed from $G$ as follows: every
vertex $u$ of $G$ is replaced by multiple copies, one copy for each
anticomponent of the neighbourhood of $u$, and
corresponding to each edge $uv$ of $G$, we put an edge between the anticomponent
of $u$ containing $v$ and the anticomponent of $v$ containing $u$.
(Here, an ``anticomponent'' of $N(u)$ refers to a component in the
complement of $G[N(u)]$.) 
This simulates the forcing of
orientations of edges in a construction of a transitive orientation. The
resulting fundamental theorem \cite{gallai} then says that $G$ has a transitive
orientation if and only if the knotting graph of $G$ is bipartite.

We explore an analogous idea for anchored invertible pairs. The corresponding
forcing relation is the $\Delta$-relation introduced in
Section~\ref{sec:edgelabelledgraphs} (cf. \cite{mcc03,probe}); whence our choice
of name.
\smallskip

For a vertex $z\in V$, let $A(z)$ denote the set of vertices that are
non-adjacent to or overlap $z$.

\begin{definition}[$uz$-component]\mbox{}
Vertices $x,y$ are \uline{$uz$-connected} if none of $xu, yu, xz, yz, uz$ is an
inclusion edge, and there exists a path between $x$ and $y$ that avoids both $u$
and $z$. Clearly, if $x,y$ are $uz$-connected, then $x,y\in A(z)\cap A(u)$.
Note that by this definition, a vertex $x$ is $uz$-connected to itself if and
only if $x\in A(z)\cap A(u)$.

A maximal set of pairwise $uz$-connected vertices is called a
\uline{$uz$-component}.
\end{definition}

\begin{definition}[anchored $\Delta$-knotting graph]
Let $z\in V$. For each $u\in A(z)$, let $X^{uz}_1$, $X^{uz}_2$,
\ldots,$X^{uz}_{k_u}$ be the $uz$-components~of~$G$.\smallskip

The \uline{$z$-anchored $\Delta$-knotting graph} of $G$ is the graph ${\cal K}^z_G =
(V^z_K,E^z_K)$ defined as follows:\smallskip
\begin{compactitem}
\item For each $u\in A(z)$, the set $V^z_K$ contains $k_u$ copies $u_1$, \ldots,
$u_{k_u}$ of $u$, one copy for each $X^{uz}_i$.
\item For each $uv\in E$ where $u,v\in A(z)$, there is an edge $u_iv_j$ in
$E^z_K$ if $v\in X^{uz}_i$ and $u\in X^{vz}_j$.
\end{compactitem}
\end{definition}

The corresponding Gallai-type theorem is then as follows.

\begin{theorem}\label{thm:knot-z}
$G$ does not contain a $z$-invertible pair if and only if ${\cal K}^z_G$ is bipartite.
\end{theorem}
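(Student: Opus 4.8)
The plan is to prove the two implications separately, mirroring Gallai's theorem that the knotting graph is bipartite iff $G$ is transitively orientable, with the $\Delta$-relation playing the role of Gallai's forcing relation. The conceptual dictionary is this: a copy $u_i$ of a vertex $u$ records one $uz$-component $X^{uz}_i$, and two vertices $x,y$ lie in a common $uz$-component exactly when $(u,x)$ and $(u,y)$ are $\Delta$-related through a walk avoiding both $u$ and $z$; thus all pairs $(u,\cdot)$ landing in one component are forced into the same $\Delta$-implication class, and $u_i$ is the natural carrier of that class. An edge $u_iv_j$ of ${\cal K}^z_G$ is present exactly for an overlap edge $uv$ with $u,v\in A(z)$, and it records that $(u,v)$ and $(v,u)$ are reverses of one another, so a proper $2$-colouring of ${\cal K}^z_G$ is precisely a consistent orientation of these pairs that respects the $\Delta$-relation.

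For the direction ``${\cal K}^z_G$ not bipartite $\Rightarrow$ $z$-invertible pair'' I would start from an odd cycle $u^1_{i_1}u^2_{i_2}\cdots u^{2m+1}_{i_{2m+1}}u^1_{i_1}$ and build the witnessing walks directly. At each cycle vertex $u^{t+1}_{i_{t+1}}$ both neighbours $u^t$ and $u^{t+2}$ lie in $X^{u^{t+1}z}_{i_{t+1}}$, so there is a path from $u^t$ to $u^{t+2}$ avoiding $u^{t+1}$ and $z$; inserting these as alternating ``fix-$x$'' and ``fix-$y$'' runs produces a $\Delta$-chain $(u^1,u^2)=(x_1,y_1)\Delta\cdots\Delta(x_N,y_N)=(u^2,u^1)$ whose transitions occur exactly at the overlap edges $u^tu^{t+1}$. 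Tracking the implication class along the cycle (passing through a shared copy keeps $(u^{t+1},u^t)$ and $(u^{t+1},u^{t+2})$ in one class) shows the class of $(u^t,u^{t+1})$ alternates between $A$ and $A^{-1}$, so after an odd number of edges we close up with $A=A^{-1}$, a $\Delta$-invertible pair. Crucially the overlap edges $u^tu^{t+1}$ appear only as diagonal pairs $(x_i,y_i)$, never as steps of $P$ or $Q$, so the only edges the walks actually traverse are the component-connecting paths, all of which avoid $z$ by definition of $uz$-connectedness; hence both walks avoid $z$ and we obtain a genuine $z$-invertible pair.

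For the converse, ``${\cal K}^z_G$ bipartite $\Rightarrow$ no $z$-invertible pair'', I would argue by contrapositive, using the colouring to manufacture a consistent orientation. Fix a proper $2$-colouring $c$ and set, for every non-inclusion pair $(u,w)$ with $u,w\in A(z)$, a sign $s(u,w):=c(u_i)$ where $w\in X^{uz}_i$; by the component description this is well defined and invariant along every $z$-avoiding $\Delta$-step fixing the first coordinate, and symmetrically for the second. A $z$-invertible pair would give a $z$-avoiding chain from $(a,b)$ to $(b,a)$, forcing $s(a,b)=s(b,a)$ and contradicting antisymmetry $s(u,w)=-s(w,u)$. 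On overlap pairs antisymmetry is immediate from bipartiteness (the edge $u_iw_j$ forces $c(u_i)\neq c(w_j)$), so the whole difficulty — and the main obstacle of the proof — is antisymmetry on \emph{non-edge} pairs, for which ${\cal K}^z_G$ carries no edge. Here I would lean on the $\Delta$-module and span machinery: \ref{enum:spanmodule}, Claim~\ref{clm:straddle} and Corollary~\ref{cor:nonedgeinspan} tie the non-edges of an implication class to its overlap edges, so a purported non-edge inversion can be pushed, via Lemma~\ref{lem:abc}, onto an overlap edge and hence onto an odd closed walk of ${\cal K}^z_G$. The remaining routine checks are that the inserted repeated vertices keep each $\Delta$-step normalized (one coordinate fixed) and that mutual avoidance of $P$ and $Q$ survives concatenation; these follow directly from the definitions of avoidance and of $uz$-component.
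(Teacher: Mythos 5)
Your first direction (an odd cycle in ${\cal K}^z_G$ yields a $z$-invertible pair) is essentially the paper's argument: for each cycle vertex $u^{t+1}_{i_{t+1}}$ its two cycle-neighbours lie in a common $u^{t+1}z$-component, the witnessing path avoids $u^{t+1}$ and $z$, and splicing these paths as alternating runs produces the two mutually avoiding walks, both avoiding $z$. The extra bookkeeping about implication classes alternating between $A$ and $A^{-1}$ is unnecessary once the walks are written down, but harmless.

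The second direction is where the proposal has a genuine gap. You correctly identify the obstacle in your sign-function approach: a diagonal pair $(x_i,y_i)$ of two mutually avoiding walks need only be a non-edge or an overlap edge, and for a non-edge the knotting graph carries no edge between the relevant copies, so bipartiteness imposes no antisymmetry constraint at exactly the places where the ``active'' coordinate of the chain switches. But the fix you propose does not close this. The span/module machinery and Lemma~\ref{lem:abc} (and Corollary~\ref{cor:nonedgeinspan}) are developed for the \emph{unanchored} $\Delta$-relation on an edge-labelled graph, where relatedness is witnessed by arbitrary mutually avoiding walks; the $z$-anchored relation, which additionally requires the walks to avoid $z$, is strictly finer, its classes are not the $\Delta$-implication classes of those lemmas, and none of those statements transfer without being reproved in the anchored setting. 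You give no actual argument for how a ``non-edge inversion'' gets pushed onto an overlap edge while preserving $z$-avoidance, and it is not clear one exists along these lines. The paper sidesteps all of this by proving the contrapositive directly: take the two walks of a $z$-invertible pair, normalize so exactly one coordinate changes per step, cut them into maximal runs $u_1,\dots,u_k$ on which one coordinate is constant; each run is itself a walk from $u_{j-1}$ to $u_{j+1}$ avoiding $u_j$ and $z$, so $u_{j-1},u_{j+1}$ lie in one $u_jz$-component, and reading off the corresponding copies gives a closed walk in ${\cal K}^z_G$ of length $k$ or $k-1$, odd in either case because the walks terminate at the reversed pair. No colouring, orientation, or module machinery is needed. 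To salvage your route you would have to establish antisymmetry at non-edge diagonal pairs from scratch in the anchored setting, which in effect reduces to carrying out the paper's direct construction anyway.
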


\begin{proof}
Let $G=(V,E)$ be a graph and ${\cal K}^z_G$ its the $z$-anchored $\Delta$-knotting graph.

Suppose that ${\cal K}^z_G$ has an odd cycle $x_1\tp x_2\tp\cdots\tp x_k\tp x_1$ where
$k$ is odd.  Consider the vertices $x_1,x_2,x_3$. Since they all belong to
$V^z_K$, there are vertices $v,u,w\in V(G)$ and indices $i,j,k$ such that
$x_1=v_j$, $x_2=u_i$, $x_3=w_k$.  Thus, since $x_1x_2$ and $x_2x_3$ are edges in
$E^z_K$, we conclude that $v\in X^{uz}_i$ and $w\in X^{uz}_i$.  This means that
$v,w$ belong to the same $uz$-component of $G$.  Thus by the definition of a
$uz$-component, there exists a $vw$-path that avoids $u$ and $z$.

We can apply the same argument to other consecutive triples on the cycle.
Namely, since the vertices $x_1,\ldots,x_k$ are in $V^z_K$, there exist vertices
$u_1,u_2,\ldots,u_k\in V$ and indices $i_1,i_2,\ldots,i_k$ such that for all
$1\leq j\leq k$, we have $x_j=(u_j)_{i_j}$.  For convenience, we may define
$u_{k+1}=u_1$ and $i_{k+1}=i_1$, and also $u_0=u_k$ and $i_0=i_k$.  Repeating
the above argument for each $1\leq j\leq k$, we find a $u_{j-1}u_{j+1}$-path
$P_j$ that avoids $u_j$ and $z$.  We define 

\centereq{
$\begin{array}{c@{~}c@{~}c@{~}c@{~}c@{~}c@{~}c@{~}c@{~}c@{~}c@{~}c@{~}c@{~}c%
@{~}c@{~}c@{~}c@{~}c@{~}c@{~}c@{~}c@{~}c@{~}c@{~}c@{~}c@{~}c@{~}c@{~}}
P &  = &  u_1 & \tp & u_1 & \tp & \cdots & \tp & u_1 & \tp & ( &  & P_2 &  & ) &
\tp\quad\cdots\quad\tp & u_k & \tp & \cdots & \tp & u_k & \tp &  u_k\\
Q &  = &  u_k & \tp & ( &  & P_1 &  & ) & \tp &  u_2 & \tp & \cdots & \tp & u_2
& \tp\quad\cdots\quad\tp & ( &  & P_k &  & ) & \tp & u_1
\end{array}$}

Therefore $P$ and $Q$ avoid each other and both walks avoid $z$. Also $P$ is an
$x_1x_k$-walk and $Q$ is an $x_kx_1$-walk. Thus $\{x_1,x_k\}$ is a
$z$-invertible pair.
\smallskip

Conversely, suppose that $G$ contains a $z$-invertible pair $\{v,w\}$. Thus
there exists a $vw$-walk $P$ and a $wv$-walk $Q$ that avoid each other and both
walks avoid $z$.
Let $P$ be the walk $v=x_1\tp x_2\tp\cdots\tp x_t=w$ and $Q$ be the walk $w=y_1
\tp y_2\tp\cdots\tp y_t=v$. Observe that we can assume that for each $i$,
exactly one of $x_i=x_{i+1}$ or $y_i=y_{i+1}$ is true. Therefore, by symmetry of
the walks, we can assume that there exists
$\{t_0,t_1,\ldots,t_k\}\subseteq\{1,2,\ldots,t\}$ such that $1=t_0<t_1<\cdots<
t_{k-1}<t_k=t$ and for each $1\leq j\leq k$, if $j$ is odd, then $x_{t_{j-1}}=
x_{t_{j-1}+1}=\cdots=x_{t_j}=:u_j$ and if $j$ is even, then $y_{t_{j-1}}=y_{t_{j
-1}+1}=\cdots=y_{t_j}=:u_j$. It is not difficult to see that for all $1<j<k$,
there exists an $u_{j-1}u_{j+1}$-walk $P_j$ that avoids $u_j$ and $z$.
If $j$ is even,
then $P_j$ is the walk $u_{j-1}=x_{t_{j-1}}\tp x_{t_{j-1}+1}\tp\cdots\tp x_{t_j}
=u_{j+1}$ and if $j$ is odd, then $P_j$ is the walk $u_{j-1}=y_{t_{j-1}}\tp
y_{t_{j-1}+1}\tp\cdots\tp y_{t_j}=u_{j+1}$. This implies that for $1<j<k$,
the vertices $u_{j-1},u_{j+1}$ belong to the same $u_jz$-component.  Thus we may
define $i_j$ for each $1<j<k$ to be the index such that $u_{j-1},u_{j+1}$ belong
to $X^{u_jz}_{i_j}$. Let $i_1,i_k$ be such that $u_2\in X^{u_1z}_{i_1}$ and
$u_{k-1}\in X^{u_kz}_{i_k}$. (Note that $u_2\in A(u_1)$ as $u_2=y_{t-1}$ and
$u_1=x_{t-1}$ and similarly, $u_{k-1}\in A(u_k)$. Also clearly, $u_2,u_{k-1}\in
A(z)$. Therefore $i_1$ and $i_k$ exist.) Now, for $1\leq j<k$, $(u_j)_{i_j}$ is adjacent
to $(u_{j+1})_{i_{j+1}}$, since $u_j$ belongs to $X^{u_{j+1}z}_{i_{j+1}}$ and
$u_{j+1}$ belongs to $X^{u_jz}_{i_j}$.
From this, we conclude that ${\cal K}^z_G$ contains a walk $(u_1)_{i_1}\tp
(u_2)_{i_2}\tp\cdots\tp (u_{k-1})_{i_{k-1}}\tp (u_k)_{i_k}$.

Note that $u_1=v$.
If $k$ is odd, then $u_k=w$. Observe that the walk $u_{k-1}=y_{t_{k-1}}\tp$
$y_{t_{k-1}+1}\tp\cdots\tp y_t=v=u_1$ avoids $u_k$ and that the walk $u_k=w=
y_1\tp y_2\tp\cdots\tp y_{t_1}=u_2$ avoids $u_1$. Thus we have a $u_{k-1}u_1$
walk that avoids $u_k$ and a $u_ku_2$ walk that avoids $u_1$. From this, we can
conclude that $(u_k)_{i_k}$ is adjacent to $(u_1)_{i_1}$. Thus, $(u_1)_{i_1}\tp
(u_2)_{i_2}\tp\cdots\tp (u_k)_{i_k}\tp (u_1)_{i_1}$ is an odd-length closed walk
in ${\cal K}_G^z$.

If $k$ is even, then $u_k=v=u_1$. In this case, the walk $u_{k-2}=y_{t_{k-2}}
\tp y_{t_{k-2}+1}\tp\cdots\tp y_{t_{k-1}}=u_k=u_1$ avoids $u_{k-1}$ and the
walk $u_{k-1}=x_{t_{k-1}}\tp x_{t_{k-1}+1}\tp\cdots\tp x_t\tp y_1\tp y_2\tp
\cdots\tp y_{t_1}=u_2$ avoids $u_k=u_1$ (note that $x_t=y_1=w$). From this, we
can conclude that $(u_{k-1})_{i_{k-1}}$ is adjacent to $(u_1)_{i_1}$ in ${\cal K}_G^z$.
Thus, $(u_1)_{i_1}\tp (u_2)_{i_2}\tp\cdots\tp (u_{k-1})_{i_{k-1}}\tp
(u_1)_{i_1}$ is an odd-length closed walk in ${\cal K}_G^z$.

We therefore conclude that ${\cal K}_G^z$ is not bipartite.
This completes the proof. 
\end{proof}

\begin{corollary}
$G$ is a circular-arc graph if and only if ${\cal K}^z_G$ is bipartite for each $z\in
V(G)$.
\end{corollary}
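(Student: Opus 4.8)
The plan is to obtain the corollary by chaining Theorems \ref{thm:main} and \ref{thm:knot-z}, the only genuine content being the identification of an \emph{anchored invertible pair} with a \emph{$z$-invertible pair for some anchor $z$}. By the definitions in \S\ref{sec:charac}, a pair $\{u,v\}$ is an invertible pair anchored at $z$ precisely when it is a $z$-invertible pair; hence $G$ contains \emph{some} anchored invertible pair if and only if there is a vertex $z\in V(G)$ for which $G$ contains a $z$-invertible pair. Contrapositively, $G$ has no anchored invertible pair if and only if, for every $z\in V(G)$, $G$ contains no $z$-invertible pair.

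First I would invoke Theorem \ref{thm:knot-z}, which asserts that $G$ contains no $z$-invertible pair exactly when ${\cal K}^z_G$ is bipartite. Applying this for each $z$ and then quantifying, the condition ``for every $z\in V(G)$, $G$ has no $z$-invertible pair'' becomes ``${\cal K}^z_G$ is bipartite for every $z\in V(G)$''. Thus $G$ has no anchored invertible pair if and only if ${\cal K}^z_G$ is bipartite for all $z$. It then remains to connect the absence of anchored invertible pairs with circular-arc-ness through Theorem \ref{thm:main}, taking $G$ to be circularly-paired (equal to its own circular completion, which is the situation once the certifying algorithm has passed to the completion): in that case Theorem \ref{thm:main} reads ``$G$ is a circular-arc graph if and only if $G$ contains no anchored invertible pair'', and substituting the equivalence above gives exactly the claim. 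For a general input $G$ with no universal vertices or true twins, the same conclusion follows after replacing $G$ by its circular completion $G'$ and transferring circular-arc-ness via Lemma \ref{lem:circ2}.

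The step I expect to require the most care is precisely this bookkeeping between $G$ and its circular completion: Theorem \ref{thm:knot-z} is phrased for whichever graph supplies the knotting graph, whereas Theorem \ref{thm:main} is phrased via the completion, so one must make sure the anchor $z$ and the graph ${\cal K}^z_G$ range over the correct vertex set before the two statements can be glued together. Once the identification of anchored invertible pairs with $z$-invertible pairs is made explicit, all of the remaining equivalences are purely definitional.
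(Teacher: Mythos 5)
Your proposal is correct and follows exactly the route the paper intends: the corollary is stated without proof because it is precisely the conjunction over all $z$ of Theorem \ref{thm:knot-z} (using the definitional identity between ``anchored invertible pair'' and ``$z$-invertible pair for some $z$'') chained with Theorem \ref{thm:main} applied to a graph that is its own circular completion. Your observation that $G$ must be read as circularly-paired for the backward implication is the right one and matches how the algorithm actually uses the corollary (it builds ${\cal K}^z_H$ for the completion $H$, not for the original input).
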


\subsection{Algorithm}

The algorithm now follows from Theorem \ref{thm:main} and
Theorem~\ref{thm:knot-z}. Given a graph $G$, we first run a linear-time
(non-certifying)
recognition algorithm \cite{kaplan-nussbaum,mcc03}. If it fails, we construct
the circular completion $H$ of $G$ and pick a minimum-degree vertex $z$ in $H$.
We then construct ${\cal K}^z_{H}$ and find an odd cycle in it.  Using the
cycle, we produce avoiding walks that give a $z$-invertible pair in $H$ using
Theorem~\ref{thm:knot-z}.  Correctness of this procedure is a direct consequence
of the respective theorems. A straightforward implementation that has $O(n^4)$
time complexity is discussed below.

As usual, let $n=|V|$ and $m=|E|$.  To construct the circular completion $H$ of
$G$, we need to remove universal vertices, remove true twins, and then determine
the types of edges in $G$. A linear-time algorithm for this is proposed in
\cite{mcc03}.  Unfortunately, the algorithm assumes $G$ is a circular-arc graph
and fails if it is not, without producing a certificate. Here we perform these
steps directly in time $O(nm)$. Note that from this point on, we can find in
$O(1)$ given $u,v\in V$, whether $uv$ is an edge and if so, what type it has.
Observe that the number of edges in the circular completion is $O(n^2)$ and also
that it can be constructed in time $O(n^2)$. Since the types of the edges in the
circular completion are easily determined using the types of the edges of $G$,
we can compute those in time $O(n^2)$.

Now, let $A(z)$ denote the set of vertices $u$ that are non-adjacent to or
overlap $z$.  To construct ${\cal K}^z_H$, we need to find out for each $u,v$,
the value $\gamma(u,v)=i$ indicating that $v$ belongs to the $i$-th
$uz$-component (for some fixed numbering of these components).  To do that, we
need to find for each $u\in A(z)$, all $uz$-components of $H$.  This is
straightforward. We (temporarily) remove every overlap edge $xy$ where $x,y\in
A(u)\cap A(z)$ and where $u$ or $z$ overlaps both $x$ and $y$. The connected
components formed by the remaining edges inside $A(u)\cap A(z)$ are precisely
the $uz$-components.  We arbitrarily number these components
$X^{uz}_1,\ldots,X^{uz}_{k_u}$ and set $\gamma(u,v)=i$ for every $v$ where $v\in
X^u_i$.  The construction takes $O(n^2)$ time for each $u\in A(z)$. Thus
$O(n^3)$
time overall.  Once the values $\gamma(u,v)$ are known, the construction of
${\cal K}^z_H$ proceeds by definition. The graph has $k_u$ copies of each vertex
$u\in A(z)$. For each edge $uv$ with $u,v\in A(z)$, we add edge $u_iv_j$ where
$\gamma(u,v)=i$ and $\gamma(v,u)=j$.  Note that ${\cal K}^z_H$ has
$O(n^2)$ vertices and $O(n^2)$ edges. Thus the construction of ${\cal
K}^z_H$ takes at most $O(n^2)$ time.  We then find an odd cycle in ${\cal
K}^z_H$, also in $O(n^2)$ time.  Let $C=x_1\tp x_2\tp\cdots\tp x_k\tp x_1$ be
this cycle. We then follow the proof of Theorem~\ref{thm:knot-z}. 

Considering $x_1,x_2,x_3$, there are vertices $u,v,w$ and indices $i,j,\ell$
such that $x_1=v_j$, $x_2=u_i$, and $x_3=w_\ell$.  Thus, there exists a
$vw$-walk $P_1$ in the $uz$-component $X^{uz}_i$.  Finding this walk takes
$O(n^2)$ time, using the same procedure that constructed $uz$-components.
Repeating for each triple of consecutive vertices on $C$, we produce walks $P_i$
which when combined give a $uv$-walk $P$ and a $vu$-walk $Q$ that avoid
each other. By construction the walks avoid $z$. Since ${\cal K}^z_H$, and
therefore $C$, has $O(n^2)$ vertices, these walks can be computed in time
$O(n^4)$.

Thus in time $O(n^4)$ we produce the obstruction as claimed.

\footnotesize
\bibliographystyle{acm}

\end{document}